
\newif\iffullversion
\fullversiontrue

\iffullversion
\documentclass[11pt]{article}
\else
\documentclass{llncs}
\fi

\newcommand{\iffull}[1]{\iffullversion #1 \fi}
\newcommand{\ifconf}[1]{\iffullversion \else #1 \fi}

\iffull{
  \usepackage[letterpaper,margin=2.8cm,centering]{geometry}
  \usepackage{amsthm}

  \newtheorem{theorem}{Theorem}[section]
  
  \newtheorem{definition}[theorem]{Definition}
  \newtheorem{lemma}[theorem]{Lemma}
  
  \newtheorem{corollary}[theorem]{Corollary}

}

\usepackage{amsfonts,amsmath,amsfonts,amsmath,amssymb}
\usepackage{xspace}
\usepackage{algorithm,algorithmic}
\usepackage[textsize=tiny,disable]{todonotes}
\usepackage{framed}
\usepackage{bbm}
\usepackage{enumerate}
\usepackage{epsfig,picins}
\usepackage{dsfont}

\DeclareMathOperator{\E}{E}
\DeclareMathOperator{\Var}{Var}
\DeclareMathOperator{\supp}{supp}
\DeclareMathOperator*{\argmin}{argmin}

\newtheorem{observation}[theorem]{Observation}

\newcommand{\IN}{\ensuremath{\mathbb{N}}}

\newcommand{\ignore}[1]{}
\newcommand{\0}{\mathbb{0}}

\renewcommand{\P}{\ensuremath{\mathcal{P}}}
\newcommand{\D}{\mathcal{D}}
\newcommand{\T}{\ensuremath{\mathcal{T}^P}}
\renewcommand{\S}{\ensuremath{\mathcal{S}}}
\newcommand{\pdls}{\ensuremath{\mbox{pDLS}}\xspace}
\newcommand{\para}[1]{\noindent{\bf #1.}}

\newcommand{\suffix}[2][P]{{#1}_{\succeq{#2}}}
\newcommand{\abs}[1]{\left\lvert{#1}\right\rvert}

\newcommand{\pOne}{P_1}
\newcommand{\pTwo}{P_2}
\newcommand{\late}{L}

\newcommand{\suffChain}{S}
\newcommand{\suffChainFam}{F}
\newcommand{\canonFam}{C}
\newcommand{\suffChainRnd}{\hat{S}}

\newcommand{\td}{\ensuremath{\mbox{TD}}}

{\makeatletter
 \gdef\xxxmark{%
   \expandafter\ifx\csname @mpargs\endcsname\relax 
     \expandafter\ifx\csname @captype\endcsname\relax 
       \marginpar{xxx}
     \else
       xxx 
     \fi
   \else
     xxx 
   \fi}
 \gdef\xxx{\@ifnextchar[\xxx@lab\xxx@nolab}
 \long\gdef\xxx@lab[#1]#2{{\bf [\xxxmark #2 ---{\sc #1}]}}
 \long\gdef\xxx@nolab#1{{\bf [\xxxmark #1]}}
}
 \addtolength{\textwidth}{.1in}
 \addtolength{\textheight}{.1in}
 \addtolength{\evensidemargin}{-0.05in}
 \addtolength{\oddsidemargin}{-0.05in}
 \addtolength{\topmargin}{-.05in}

\title{
  Approximate Deadline-Scheduling with
  Precedence Constraints
}

\ifconf{
  \author{Hossein Efsandiari\inst{1}, MohammadTaghi Hajiaghyi\inst{1}, Jochen K\"onemann\inst{2}, Hamid Mahini\inst{1}, David Malec\inst{1}, Laura Sanit\`a\inst{2}}

  \institute{University of Maryland, College Park, MD, USA
    \and
    University of Waterloo, Waterloo, Ontario N2L 3G1, Canada}
}
\iffull{
  \author{
    Hossein Efsandiari\footnote{University of Maryland, College Park,
      MD, USA, \{hossein,hajiagha,hmahini,dmalec\}@cs.umd.edu. Supported in part by NSF CAREER award
      1053605, NSF grant CCF-1161626, ONR YIP award N000141110662, a Google Faculty
      Research award, and
      DARPA/AFOSR grant FA9550-12-1-0423.} 
    \and MohammadTaghi Hajiaghyi$^*$
    \and Jochen K\"onemann\footnote{Department of Combinatorics and Optimization,
      University of Waterloo, Waterloo, Ontario, Canada,
      \{jochen,lsanita\}@uwaterloo.ca. Research supported in part by
      the NSRERC Discovery Grant Program} 
    \and Hamid Mahini$^*$ 
    \and David Malec$^*$ 
    \and Laura Sanit\`a
  }
}
  
\begin{document}\sloppy

\abovedisplayskip.10ex
\belowdisplayskip.10ex

 \maketitle

\begin{abstract}
  We consider the classic problem of scheduling a set of $n$ jobs
  non-preemptively on a single machine.  Each job $j$ has non-negative
  processing time, weight, and deadline, and a feasible schedule needs
  to be consistent with {\em chain-like} precedence constraints. The
  goal is to compute a feasible schedule that minimizes the sum of
  penalties of late jobs.  Lenstra and Rinnoy Kan [Annals of
  Disc. Math., 1977] in their seminal work introduced this problem and
  showed that it is strongly NP-hard, even when all processing times
  and weights are $1$.  We study the approximability of the problem
  and our main result is 
  an $O(\log k)$-approximation algorithm for instances with
  $k$ distinct job deadlines.

  We also point out a surprising connection to a model for technology
  diffusion processes in networks that was recently proposed by
  Goldberg and Liu [SODA, 2013]. In an instance of such a problem one
  is given an undirected graph and a non-negative, integer threshold
  $\theta(v)$ for each of its vertices $v$. Vertices $v$ in the graph
  are either {\em active} or {\em inactive}, and an inactive vertex
  $v$ activates whenever it lies in component of size at least
  $\theta(v)$ in the graph induced by itself and all active
  vertices. The goal is now to find a smallest cardinality seed set of
  active vertices that leads to the activation of the entire graph.

  Goldberg and Liu showed that this problem has no
  $o(\log(n))$-approximation algorithms unless NP has quasi-polynomial
  time algorithms, and the authors presented an
  $O(rk\, \log(n))$-approximation algorithm, where $r$ is the radius
  of the given network, and $k$ is the number of distinct vertex
  thresholds. The open question is whether the dependence of the
  approximation guarantee on $r$ and $k$ is avoidable. We answer this
  question affirmatively for instances where the underlying graph is a
  spider. In such instances technology diffusion and precedence
  constrained scheduling problem with unit processing times and
  weights are equivalent problems.

\end{abstract}

\section{Introduction}
\label{sec:intro}

In an instance of the classic {\em precedence-constrained
  single-machine deadline scheduling} problem we are given a
set $[n]:=\{1,\ldots,n\}$ of jobs that need to be scheduled
non-preemptively on a single machine. Each job $j \in [n]$ has a
non-negative deadline $d_j \in \IN$, a processing time $p_j \in \IN$
as well as a non-negative penalty $w_j \in \IN$. A feasible schedule
has to be consistent with precedence constraints that are given
implicitly by a directed acyclic graph $G=([n],E)$; i.e., job $i \in
[n]$ has to be processed before job $j$ if $G$ has a directed
$i,j$-path.  A feasible schedule incurs a penalty of $w_j$ if job $j$
is not completed before its deadline $d_j$. Our goal is then to find a
feasible schedule that minimizes the total penalty of late jobs. In
the standard scheduling notation \cite{GL+79} the problem under
consideration is succinctly encoded as $1|\mbox{prec}|\sum w_jU_j$,
where $U_j$ is a binary variable that takes value $1$ if job $j$ is
late and $0$ otherwise.

Single-machine scheduling with deadline constraints is a
practically important and well-studied subfield of scheduling theory
that we cannot adequately survey here. We refer the reader to Chapter
3 of \cite{Pi12} or Chapter 4 of \cite{BE+07}, and focus here on the
literature that directly relates to our problem. 
The decision version of the single-machine deadline scheduling problem
{\em without} precedence constraints is part of Karp's list of 21
NP-complete problems~\cite{Ka72}, and a fully-polynomial-time
approximation scheme is known~\cite{GL81,Sa76}. The problem becomes
strongly NP-complete in the presence of release dates as was shown by
Lenstra et al.~\cite{LRB77}. Lenstra and Rinnoy Kan~\cite{LR80} later
proved that the above problem is strongly NP-hard even in the special case
where each job has unit processing time and penalty, and the precedence
digraph $G$ is a collection of vertex-disjoint directed paths.

Despite being classical, and well-motivated, 
little is known about the approximability of precedence-constrained
deadline scheduling.  This surprises, given that problems in this
class were introduced in the late 70s, and early 80s, and that these
are rather natural variants of Karp's original 21 NP-hard problems.
The sparsity of results to date suggests that the
combination of precedence constraints and deadlines poses significant
challenges.  We seek to show, however, that these challenges can be
overcome to achieve non-trivial approximations for these important
scheduling problems.  In this paper we focus on the generalization of
the problem studied in~\cite{LR80}, where jobs are allowed to have
arbitrary non-negative processing times, and where we minimize the
{\em weighted} sum of late jobs.  Once more using scheduling notation,
this problem is given by $1|\mbox{chains}|\sum w_j U_j$ (and hereafter
referred to as
\pdls).  Our main result is the following.

\begin{theorem}
\label{thm:lpath}
   \pdls\ has an efficient 
   $O(\log k)$-approximation algorithm, where $k$ is the number of
   distinct job deadlines in the given instance.
\end{theorem}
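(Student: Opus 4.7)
The plan is to reduce \pdls, after a deadline-tightening preprocessing, to an LP-rounding problem with $k$ packing constraints and round via independent chain-threshold sampling. First I would replace each $d_i$ by $\min_{i'\succeq i} d_{i'}$, where $i'$ ranges over successors of $i$ in the same chain. After this step deadlines are non-decreasing along every chain, so earliest-deadline-first processing of any union $S$ of chain-prefixes automatically respects all chain precedences, and such an $S$ admits an on-time schedule if and only if $\sum_{j\in S,\, d_j \le t_i} p_j \le t_i$ at each of the (at most) $k$ distinct deadlines $t_1<\cdots<t_k$. The problem is thereby equivalent to selecting one chain-prefix per chain to minimize total late-job weight subject to these $k$ packing inequalities.

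I would then solve the natural LP relaxation with prefix-monotone variables $y_j\in[0,1]$ (so $y_j\ge y_{j'}$ whenever $j$ immediately precedes $j'$ in its chain), packing constraints $\sum_{j:\, d_j\le t_i} p_j\, y_j \le t_i$ at each deadline, and objective $\min \sum_j w_j(1-y_j)$. To round an optimal $y^*$ into an integral schedule, I would independently for each chain $c$ draw $\theta_c \in [0,1]$ uniformly and keep the longest prefix of $c$ whose terminal job $j$ satisfies $y^*_j \ge \theta_c$; monotonicity makes this a valid chain-prefix, each job $j$ is retained with marginal probability exactly $y^*_j$, and the expected late weight equals the LP optimum. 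Crucially, per-chain contributions to each packing constraint are independent across chains, so the load against each capacity is a sum of independent random variables with mean at most $t_i$.

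The chief obstacle is ensuring feasibility on all $k$ constraints simultaneously while keeping the late weight within an $O(\log k)$ factor of $\opt$: a single chain can contribute up to $t_i$ against an early-deadline capacity, so a vanilla Chernoff bound is too weak, and naively scaling $y^*$ would ruin the minimization guarantee. I would address this with a light/heavy split at each deadline. At $t_i$ call a chain \emph{heavy} if its LP contribution exceeds $t_i/\alpha$ with $\alpha = \Theta(\log k)$, and \emph{light} otherwise; the LP inequality at $t_i$ forces at most $\alpha$ heavy chains there, so heavy chain-deadline pairs can be enumerated or controlled by strengthening the LP with explicit ``few heavy prefixes per deadline'' inequalities at only a constant extra factor in cost. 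For the light contributions a multiplicative Chernoff bound applied to the sum of independent per-chain loads gives a violation probability of at most $k^{-c}$ at any single deadline, and a union bound over the $k$ deadlines keeps every light load within capacity with constant probability. A short capacity-repair phase that demotes the minimum-weight excess prefix at any still-violated deadline costs at most an additional $O(1)$ factor in late weight, and combining the two ingredients delivers a feasible schedule whose expected late weight is $O(\log k)\cdot\opt_{\mathrm{LP}}\le O(\log k)\cdot\opt$; standard derandomization yields the claimed deterministic efficient $O(\log k)$-approximation.
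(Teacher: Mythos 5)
Your reduction to chain-prefix selection is not a valid reformulation of \pdls. After tightening each $d_j$ to $\min_{j'\succeq j} d_{j'}$, deadlines are non-decreasing along each chain, but this preprocessing changes the problem rather than leaving it invariant: it can only enlarge the set of late jobs in any fixed schedule, and---crucially---even with non-decreasing deadlines an earlier job on a chain can be late while a later one is on time, so the late set need not be a suffix of its chain. Concretely, take a single chain $j_1 \prec j_2$ with $p=(2,1)$ and $d=(1,3)$ (already non-decreasing). The unique schedule runs $j_1$ then $j_2$; $j_1$ finishes at time $2>1$ (late) while $j_2$ finishes at $3\le 3$ (on time), so the true optimum penalty is $w_1$. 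In your model, however, the prefix $\{j_1\}$ already violates the packing constraint at $t_1=1$ (since $p_1=2>1$), so the only feasible chain-prefix is $\emptyset$ with penalty $w_1+w_2$; worse, your LP itself has optimum $(w_1+w_2)/2 > w_1$ for $w_2>w_1$, so it is not even a relaxation of \pdls. The paper avoids this collapse by keeping, for every path $P$ and every deadline $D_i$, a \emph{separate} suffix variable for the set of jobs of $P$ scheduled past $D_i$ (Observations~\ref{obs1} and~\ref{obs2}); job $j$ is late iff it lies in the $i(j)$-th level, a condition that genuinely uses all $k$ levels and is not equivalent to choosing a single prefix per chain.

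Even granting the reduction, the natural LP you write down is too weak: the paper exhibits an $\Omega(n/\log n)$ integrality gap for the analogous un-strengthened formulation on a single unit chain (Appendix~\ref{sec:lp_gap}), and Knapsack-Cover inequalities are what close the gap. Your heavy/light split at threshold $t_i/\alpha$ with $\alpha=\Theta(\log k)$ is exactly the intuition the KC capping formalizes, but you leave the hard parts unproved: there is no argument that ``few heavy prefixes per deadline'' constraints can be added and separated while incurring only constant extra cost, nor that the capacity-repair phase demotes only an $O(1)$-factor of late weight. The paper's route instead pre-identifies the set $\late$ of fractionally-late jobs, writes the KC inequality for the canonical suffix-chain family of the modified instance, solves via a relaxed separation oracle to obtain a \emph{cross-free} fractional point, and only then applies a $\gamma\log k$ boost before rounding independently per path with a Bernstein-type concentration argument.
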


We note that our algorithm finds a feasible schedule without late jobs
if such a schedule exists.

In order to prove this result, we first introduce a novel, and rather
subtle {\em configuration}-type LP. The LP treats each of the directed
paths in the given precedence system independently. For each path, the
LP has a variable for all nested collections of $k$ suffixes of jobs,
and integral solutions set exactly one of these variables per path to
$1$. This determines which subset of jobs are executed {\em after}
each of the $k$ distinct job deadlines. The LP then has constraints
that limit the total processing time of jobs executed before each of
the $k$ deadlines. While we can show that integral feasible solutions
to our formulation naturally correspond to feasible schedules, the formulation's
integrality gap is large (see \ifconf{the full version \cite{full} for
  details).}\iffull{see Appendix \ref{sec:lp_gap} for details.}
In order to reduce the gap, we strengthen the formulation
using valid inequalities of {\em Knapsack cover}-type
\cite{Ba75,CF+00,HJP75,WO75} (see also \cite{CGK10,KY05}).

The resulting formulation has an exponential number of variables and
constraints, and it is not clear whether it can be solved
efficiently. In the case of chain-like precedences, we are able to
provide an alternate formulation that, instead of variables for nested
collections of suffixes of jobs, has variables for job-suffixes only.
Thereby, we reduce the number of variables to a polynomial of the
input size, while increasing the number of constraints slightly. We do not know
how to efficiently solve even this  alternate LP. However, we are
able to provide a {\em relaxed} separation oracle (in the sense of
\cite{CF+00}) for its constraints, and can therefore use the Ellipsoid
method~\cite{GLS81} to obtain approximate solutions for the alternate LP
of sufficient quality. 

We are able to provide an efficiently computable map between solutions
for the alternate LP, and those of the original exponential-sized
formulation. Crucially, we are able to show that the latter solutions
are structurally nice; i.e., no two nested families of job suffixes in
its support cross! Such {\em cross-free} solutions to the original LP
can then be rounded into high-quality schedules.

Several comments are in order. First, there is a significant
body of research that investigates LP-based techniques for
single-machine, precedence-constrained, minimum weighted
completion-time problems (e.g., see \cite{HSW96,HS+97,Sc96}, and
also \cite{CS05} for a more comprehensive summary of LP-based
algorithms for this problem). None of these LPs seem to be useful for
the objective of minimizing the total penalty of late jobs.  In
particular, converting these LPs requires the introduction of so
called ``big-$M$''-constraints that invariably yield formulations with
large integrality gaps.

Second, using Knapsack-cover inequalities to strengthen an LP
formulation for a given covering problem is not new. 
In the context of approximation algorithms, such inequalities
were used by Carr et al.~\cite{CF+00} in their work on the
Knapsack problem and several generalizations. Subsequently, they also
found application in the development of approximation algorithms for
{\em general covering and packing integer programs}~\cite{KY05}, 
in approximating {\em column-restricted} covering IPs
~\cite{Ko03,CGK10}, as well as in the area of scheduling (without
precedence constraints) \cite{BP10}. Note that our strong formulations for \pdls\ use
variables for (families of) suffixes of jobs in order to encode the
chain-like dependencies between jobs. This leads to formulations that
are not column-restricted, and they also do not fall into the
framework of \cite{KY05} (as, e.g., their dimension is not
polynomial in the input size).

Third, it is not clear how what little work there has been on
precedence-constrained deadline scheduling can be applied to the
problem we study.  The only directly relevant positive result we know
of is that of Ibarra and Kim~\cite{IK78}, who consider the
single-machine scheduling problem in which $n$ jobs need to be
scheduled non-preemptively on a single machine while adhering to
precedence constraints given by {\em acyclic directed forests}, with
the goal to maximize the total profit of jobs completed before a {\em
common deadline $T$}. While the allowed constraints are strictly more
general than the chain-like ones we study, this is more than
outweighed by the fact that all jobs have a common deadline, which
significantly reduces the complexity of the problem and renders it
similar to the well-studied Knapsack problem.  Indeed, we show in
\ifconf{the full version \cite{full} of our paper}
\iffull{Appendix~\ref{sec:single_deadline}}
that \pdls\ with forest precedences
and a single deadline admits a pseudo-polynomial time algorithm as
well.
This implies that the decision version of
\pdls\ is only {\em weakly} NP-complete in this special case. Given
the strong NP-hardness of \pdls\ (as established in \cite{LR80}), it
is unclear how Ibarra and Kim's results can be leveraged for our
problem.

It is natural to ask whether the approximation bound provided in
Theorem \ref{thm:lpath} can be improved. 
In \iffull{Appendix ~\ref{sec:correlated}} \ifconf{\cite{full}}
we provide an example demonstrating that this is
unlikely if we use a path-independent rounding scheme (as in the
proof of Theorem \ref{thm:lpath}). This example highlights that
different paths can play vastly different roles in a solution, and be
critical to ensuring that distinct necessary conditions are met.
Thus, rounding paths independently can lead to many independent
potential points of failure in the process, and significant boosting
of success probabilities must occur if we are to avoid all failures simultaneously.
This means, roughly speaking, that our
analysis is tight and therefore our approximation factor cannot be
improved without significant new techniques.  Given the above, it is
natural to look for dependent rounding schemes for solutions to our
LP. Indeed, such an idea can be made to work for the special case of
\pdls\ with two paths.

\begin{theorem}
\label{thm:2path}
\pdls\ with two paths admits a 2-approximation algorithm
based on a correlated rounding scheme.
\end{theorem}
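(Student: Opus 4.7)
The plan is to improve on the path-independent rounding scheme of Theorem~\ref{thm:lpath} by coupling the randomness used to round the two paths instead of sampling them independently. First I would solve the same strengthened LP used in Theorem~\ref{thm:lpath}, obtaining a cross-free fractional solution whose restriction to each path $P_i$ can be written as a convex combination $\D_i$ of nested families of suffix-cuts, i.e., of monotone integer schedules for that path alone. Independent sampling from $\D_1$ and $\D_2$ is exactly what yields the $O(\log k)$ bound, so the task is to replace it with a single random source that preserves marginals on each path while tightening the joint capacity control.

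The rounding draws one uniform variable $U \in [0,1]$; on $\pOne$ we commit to the integer schedule at quantile $U$ of $\D_1$, and on $\pTwo$ to the schedule at quantile $1-U$ of $\D_2$. This anti-correlated coupling preserves each path's marginal distribution, so the expected late weight on each path equals its LP contribution to the objective. At the same time, for every deadline $d_i$, the load $X_i(U)$ completed on $\pOne$ is monotone non-decreasing in $U$ while the load $Y_i(1-U)$ completed on $\pTwo$ is monotone non-increasing in $U$, so the two loads are negatively coupled: when one path uses more of the budget, the other uses less. Together with the LP guarantee $\E[X_i(U)]+\E[Y_i(1-U)] \leq d_i$, a short exchange argument yields a deterministic bound of the form $X_i(U)+Y_i(1-U) \leq 2d_i$ for every $U$ and every $i$.

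To turn this into a feasible schedule for the original deadlines at a cost of at most twice $\opt$, I would invoke the coupled rounding on the LP solution computed with capacities scaled down by a factor of two; the resulting integer schedule is then feasible for the original deadlines and has expected late weight at most $2 \cdot \opt$, the factor of two absorbing the gap between the halved-capacity LP and the true optimum. An alternate route is to run the rounding at full capacity and, whenever a deadline realization is violated, declare a minimum-weight suffix on one of the two paths late; the anti-correlation allows the extra weight to be charged against the LP on a per-deadline basis, again at a factor-two loss.

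The main obstacle is showing that a single random variable $U$ suffices to control all $k$ deadlines at once. The cross-free property of the LP solution is crucial here: it linearly orders the suffix-cuts of each path across all deadlines, so $U$ indexes a single coherent schedule per path rather than $k$ independently chosen cuts, which is what makes the per-deadline monotonicity argument go through uniformly in $i$. This technique does not appear to generalize beyond two paths, since there is no anti-correlation scheme that simultaneously balances three or more chains using a single random variable, matching the tightness discussion preceding the theorem statement.
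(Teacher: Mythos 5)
Your high-level rounding idea is exactly the paper's: after obtaining a cross-free fractional solution, draw a single uniform $\alpha\in[0,1]$, pick the suffix chain at quantile $\alpha$ on $\pOne$ and at quantile $1-\alpha$ on $\pTwo$, so the choices are anti-correlated while marginals are preserved, and monotonicity in $\alpha$ gives uniform control over all deadlines at once. That part is right and it is the core insight. However, the two finishing steps you propose each have a genuine gap.

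First, the step from the expectation bound to a per-realization bound. You write that an ``exchange argument'' together with $\E[X_i(U)]+\E[Y_i(1-U)]\le d_i$ gives a deterministic bound $X_i(U)+Y_i(1-U)\le 2d_i$. The paper's Lemma \ref{lem:chernoff_sub_corr} does something subtly different, and the distinction matters: it works with the KC coefficients $p^{i,\canonFam}_{\suffChain}$ rather than with raw completed load, and relies crucially on the fact that these coefficients are \emph{capped} at $\Theta^{i,\canonFam}$. Because each of the two per-path contributions lies in $[0,\Theta^{i,\canonFam}]$, and one is nonincreasing while the other is nondecreasing in $\alpha$, the total variation of the summed coefficient over all $\alpha$ is at most $\Theta^{i,\canonFam}$; combined with the boosted lower bound $\E[\cdot]\ge 2\Theta^{i,\canonFam}$ from constraint \eqref{c2S}, the minimum over $\alpha$ must already be $\ge\Theta^{i,\canonFam}$, so the KC-inequality holds with probability one. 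Without the cap, a single path's coefficient can swing by far more than $\Theta^{i,\canonFam}$, and nothing forces the sum's variation to stay within a window the expectation can close. Your sketch in terms of ``completed load $\le 2d_i$'' has no analogous bound on the per-realization swing and would not go through; the KC capping is precisely what rescues this.

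Second, your proposal to run the rounding ``on the LP solution computed with capacities scaled down by a factor of two'' is not sound: an instance with all deadlines halved can have an LP and integral optimum far larger than $2\cdot\opt$ of the original (indeed the halved instance may not even admit a schedule with few late jobs). The paper does not scale capacities. Instead it filters out jobs with $U_j\ge 1/2$, concedes them as late (this costs at most $2\sum_{j}w_jU_j$ since each conceded $w_j\le 2w_jU_j$), and then doubles the fractional deferral values on the remaining jobs, which stays a valid scaling precisely because those remaining $U_j$ are all below $1/2$. The doubling is what produces the factor-$2$ slack that the capping argument then converts into a deterministic feasibility guarantee. Your alternate ``repair a violated deadline by declaring a minimum-weight suffix late'' route is also unsubstantiated: it is not clear the repaired cost can be charged cleanly per deadline, nor that the repairs compose across the $k$ deadlines without cascading. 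In short, you identified the right coupling but are missing the KC-cap-driven variance bound and the correct filter-and-double scaling that together turn the coupling into a genuine $2$-approximation.
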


The proof of Theorem \ref{thm:2path} \ifconf{is deferred to
  \cite{full},}\iffull{is given in Appendix
\ref{sec:correlated},} and shows that the configurational LP used in
the proof of Theorem \ref{thm:lpath} has an integrality gap of at most
$2$ for \pdls\ instances with two paths. This is accomplished using a
randomized rounding scheme that samples families of suffix chains from
the two paths in a correlated fashion instead of independently. The
approach uses the fact that our instances have two paths,
and extending it to general instances appears difficult. 

We point out that the emphasis in Theorem \ref{thm:2path} and its
proof is on the techniques used rather than the approximation
guarantee obtained. In fact, we provide a
dynamic-programming-based exact algorithm for \pdls\ instances with a
fixed number of chains (see
\iffull{Appendix~\ref{sec:constant}}\ifconf{\cite{full}} for details).

\begin{theorem}
  \pdls\ can be solved exactly when the number of chains is fixed.
\end{theorem}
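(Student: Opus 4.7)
The plan is a dynamic program over chain-prefix configurations, exploiting the fact that each component of the precedence graph is a chain. Throughout, let the $m$ chains be $C_1,\ldots,C_m$ with $|C_i|=n_i$ and $n=\sum_i n_i$, and let $j_{i,\ell}$ denote the $\ell$-th job of $C_i$.

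First I would reduce to idle-free schedules: since \pdls\ has no release dates, any schedule can be compressed by shifting all jobs earlier to eliminate idle time, and no job's completion time increases in the process, so no on-time job becomes late. Hence there is always an optimal schedule that processes jobs continuously starting at time $0$. In any such schedule that respects the precedence constraints, the set of jobs completed so far is a prefix of every chain, so every reachable configuration is encoded by a vector $\mathbf{a}=(a_1,\ldots,a_m)$ with $0\le a_i\le n_i$, where $a_i$ is the number of jobs of $C_i$ already processed. Because the machine is never idle, the elapsed time is a function of the state alone:
\[
  t(\mathbf{a}) \;=\; \sum_{i=1}^m\sum_{\ell=1}^{a_i} p_{j_{i,\ell}}.
\]

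Let $f(\mathbf{a})$ denote the minimum total penalty of late jobs over idle-free schedules reaching $\mathbf{a}$. Setting $f(\mathbf{0})=0$, I would use the recurrence
\[
  f(\mathbf{a}) \;=\; \min_{i\,:\,a_i\ge 1}\Bigl[\,f(\mathbf{a}-\mathbf{e}_i) \;+\; w_{j_{i,a_i}}\cdot \mathbf{1}\bigl\{t(\mathbf{a}) > d_{j_{i,a_i}}\bigr\}\,\Bigr],
\]
which enumerates which chain supplied the most recently completed job and charges its weight exactly when that job finishes past its deadline. The optimum is $f(n_1,\ldots,n_m)$, and an optimal schedule is recovered by standard backtracking. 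The state space has size $\prod_i(n_i+1)=O(n^m)$, each transition costs $O(m)$, and the total running time is $O(m\,n^m)$, which is polynomial whenever $m$ is fixed.

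The main (and essentially only) conceptual obstacle is the idle-freeness reduction in the first step: without it, one would apparently need to include the current time in the DP state, giving at best a pseudo-polynomial running time. Once this reduction makes the elapsed time a function of the prefix configuration $\mathbf{a}$, and the linear structure of chains makes every reachable configuration a prefix-vector, the remaining argument is a direct verification that the recurrence correctly accounts for each job's contribution to the total late weight.
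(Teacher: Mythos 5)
Your proposal is correct and takes essentially the same approach as the paper: a dynamic program over per-chain progress vectors of size $O(n^m)$, with the key observation that the elapsed time is a function of the configuration alone, yielding an $O(m\,n^m)$ algorithm. The paper phrases the DP over remaining suffixes and carries $t$ as a nominally separate parameter before noting it is determined by the configuration, whereas you phrase it over processed prefixes and make the justifying idle-freeness reduction explicit; these are the same algorithm viewed from opposite ends.
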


\subsection{Deadline scheduling and technology diffusion}

As we show now, the {\em precedence-constrained single-machine
  deadline scheduling} problem is closely related to the {\em
  technology diffusion} (\td) problem which was recently introduced by
Goldberg and Liu~\cite{GL13} in an effort to model dynamic processes
arising in technology adaptation scenarios.  In an instance of \td, we
are given a graph $G=(V,E)$, and thresholds $\theta(v) \in \{\theta_1,
\ldots, \theta_k\}$ for each $v \in V$. We consider dynamic processes
in which each vertex $v \in V$ is either {\em active} or {\em
  inactive}, and where an inactive vertex $v$ becomes active if, in
the graph induced by it and the active vertices, $v$ lies in a
connected component of size at least $\theta(v)$. The goal in \td\ is
now to find a smallest {\em seed set} $S$ of initially active vertices
that eventually lead to the activation of the entire graph.  Goldberg
and Liu argued that it suffices (albeit at the expense of a constant
factor loss in the approximation ratio) to consider the following
connected abstraction of the problem: find a permutation $\pi=(v_1,
\ldots, v_n)$ of $V$ such that the graph induced by $v_1, \ldots, v_i$ is
connected, for all $i$, and such that
\[ S(\pi) = \{ v_i \,:\, i < \theta(v_i) \} \]
is as small as possible. 

As Goldberg and Liu~\cite{GL13} argue, \td\ has no
$o(\log(n))$-approximation algorithm unless NP has
quasi-polynomial-time algorithms. The authors also presented an $O(rk
\log(n))$-approximation, where $r$ is the diameter of the given graph,
and $k$ is the number of distinct thresholds used in the instance.
K\"onemann, Sadeghian, and Sanit\`a~\cite{KSS13} recently improved
upon this result by presenting a $O(\min\{r,k\}
\log(n))$-approximation algorithm. The immediate open question arising
from \cite{GL13} and \cite{KSS13} is whether the dependence of the
approximation ratio on $r$ and $k$ is avoidable. As it turns out, our
work here provides an affirmative answer for \td\ instances on {\em
  spider} graphs (i.e., trees in which at most one vertex has degree
larger than 2).

\begin{theorem}
  \td\ is NP-hard on spiders. In these graphs, the problem
  also admits an $O(\log(k))$-approximation.
\end{theorem}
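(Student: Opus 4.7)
The plan is to establish a polynomial-time equivalence between \td\ on spider graphs and \pdls\ with unit processing times and unit weights. Such an equivalence allows us to transfer both the strong NP-hardness of the scheduling problem, shown by Lenstra and Rinnoy Kan \cite{LR80} even in this restricted setting, and the $O(\log k)$-approximation of Theorem \ref{thm:lpath}.

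For the NP-hardness, I would reduce \pdls\ with unit processing times and weights to \td\ on spiders. Given a \pdls\ instance with chains $C_1, \dots, C_m$, construct a spider with center $c$ and $m$ legs, where the $i$-th leg has one vertex per job of $C_i$ (in chain order, starting from the neighbor of $c$). Set $\theta(c) = 1$ and $\theta(v_j) = d_j + 1$ for the vertex corresponding to job $j$. Then feasible \pdls\ schedules correspond precisely to connected permutations that begin at $c$, and late jobs correspond to seed vertices. A short argument shows that the optimal connected permutation can be taken to begin at $c$ without loss of generality (since $\theta(c)=1$), so the two optima agree and strong NP-hardness transfers.

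For the approximation direction, given an instance of \td\ on a spider with center $c$ and legs $L_1,\dots,L_m$, I enumerate over candidate starting configurations of the connected permutation: the choice of starting vertex $v_1$, the starting leg if $v_1 \neq c$, and the position at which $c$ is inserted. There are only polynomially many such configurations. For each, I build a \pdls\ instance whose chains correspond to the legs of the spider (with a slight modification to the starting leg to account for the fact that growth there proceeds bidirectionally before $c$ is reached), where each vertex $v$ contributes a job with deadline $\theta(v)-1$ and unit processing time and weight. Applying Theorem \ref{thm:lpath} to each such instance yields an $O(\log k)$-approximate schedule, where $k$ is the common number of distinct deadlines (equivalently, distinct thresholds). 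Returning the best schedule over all enumerated configurations gives the claimed $O(\log k)$-approximation for \td\ on spiders.

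The main technical obstacle is formalizing the starting-configuration enumeration used in the approximation. Unlike the rigid chain precedences of \pdls, a connected permutation on a spider may begin anywhere and may grow bidirectionally along the starting leg before the center $c$ is reached. Showing that only polynomially many essentially different configurations need to be tried, and that each one reduces cleanly to a \pdls\ instance which preserves the approximation ratio (in particular, that optima of value zero on the spider side are certified by at least one of the enumerated instances, a property that Theorem \ref{thm:lpath} explicitly guarantees), is where most of the work lies; once that is in place, Theorem \ref{thm:lpath} and the known hardness of \pdls\ supply the remaining ingredients.
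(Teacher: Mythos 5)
Your high-level plan — establish a polynomial-time equivalence between \td\ on spiders and unit-weight, unit-time \pdls\ and then invoke the known hardness of \pdls\ together with Theorem~\ref{thm:lpath} — is exactly the approach the paper takes. However, the concrete parameter translation you propose is inverted, and this breaks both directions of your reduction.

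The paper sets $d_v = n - \theta(v) + 1$ and lets job $v$ depend on its descendants (so on each leg the \emph{tip} is processed first and the vertex nearest $c$ last, with $c$ itself last of all), and matches a connected permutation $\pi$ with the \emph{reverse} schedule $\sigma = \pi^R$. Under this pairing, $v$ at permutation position $i$ has completion time $n - i + 1$ in $\sigma$, and one checks that ``$i < \theta(v)$'' (a seed) is equivalent to ``$n - i + 1 > n - \theta(v) + 1 = d_v$'' (a late job). The seed criterion ``position $< \theta$'' flags vertices that appear \emph{early}, while the lateness criterion ``completion $> d$'' flags jobs that finish \emph{late}; the two only line up after reversing the order. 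Your hardness reduction instead places the first job of each chain adjacent to $c$, sets $\theta(v_j) = d_j + 1$, and identifies the schedule with the suffix of the permutation directly. Then job $j$ at schedule time $t$ sits at permutation position $t+1$, is late iff $t > d_j$, and is a seed iff $t+1 < d_j+1$, i.e.\ iff $t < d_j$. These are \emph{disjoint} and essentially opposite conditions — your map sends late jobs to \emph{non}-seeds, so minimizing seeds in the \td\ instance does not minimize late jobs in the \pdls\ instance. The same sign error appears in your approximation direction, where you set $d_v = \theta(v) - 1$. Both need to become a decreasing function of $\theta$ involving $n$, together with a reversal of the leg orientation.

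A second, more subtle issue is the ``short argument'' you invoke that the optimal connected permutation can be taken to start at $c$ when $\theta(c)=1$. This is not true in general. Consider the spider with legs $L_1 = (u_1,\dots,u_5)$, $L_2 = (w_1)$, $L_3 = (x_1)$ and thresholds $\theta(c)=\theta(w_1)=\theta(x_1)=1$ and $\theta(u_\ell) = 6 - \ell$. The permutation $(u_5,u_4,u_3,u_2,u_1,c,w_1,x_1)$ has zero seeds (each $u_\ell$ sits exactly at its threshold), but in any permutation beginning at $c$, vertex $u_1$ can occupy position at most $4$ (only $c$, $w_1$, $x_1$ can precede it), so $u_1$ with $\theta(u_1)=5$ is necessarily a seed. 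Thus the ``WLOG start at $c$'' step fails even with $\theta(c)=1$, and you cannot simply discard the bidirectional-growth case. Your recognition of this phenomenon in the approximation direction (the need to enumerate the starting vertex, the starting leg, and the insertion point of $c$) is the right instinct — the paper's own sketch does not explicitly address it — but the details of how to encode the zippered growth on the starting leg as a chain-precedence \pdls\ instance are left entirely to ``a slight modification,'' and with the wrong threshold/deadline map in place the enumeration cannot succeed as written.
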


The theorem follows from the fact that \td\ in spiders and \pdls\ with
unit processing times, and penalties are equivalent. We sketch the
proof. Given an instance of \td\ on spider $G=(V,E)$, we create a job
for each vertex $v \in V$, and let $d_v=n-\theta(v)+1$, and
$p_v=w_v=1$. We also create a dependence chain for each leg of the
spider; i.e., the job for vertex $v$ depends on all its descendants
in the spider, rooted at its sole vertex of degree larger than $2$. It
is now an easy exercise to see that the \td\ instance has a seed set
of size $s$ iff the \pdls\ instance constructed has a schedule that
makes $s$ jobs late.

\section{Notation}
\label{sec:notation}

In the rest of the paper we will consider an instance of \pdls\ given
by a collection $[n]$ of jobs. Each job $j$ has non-negative
processing time $p_j$, penalty $w_j$ and deadline $d_j$. The precedence
constraints on $[n]$ are induced by a collection of vertex-disjoint,
directed paths $\P=\{P_1, \ldots, P_q\}$. In a feasible schedule job
$j$ has to precede job $j'$ if there is a directed $j,j'$-path in one
of the paths in \P; we will write $j \preceq j'$ to indicate $j$ has
to precede $j'$ from now on for ease of notation, and $j \prec j'$ if
we furthermore have $j \neq j'$. We  denote the set of distinct deadlines in our instance by $\D=\{D_1, \ldots,
D_k\}$, with higher indices corresponding to later deadlines, that is, indexed such that $D_i<D_{i'}$ whenever $i<i'$.
We use the notation $i(j)\in[k]$ to denote the index that the deadline of job $j$ has in the set $\D$, so we have that $d_j=D_{i(j)}$ for all $j\in[n]$.
We say that a job is \emph{postponed} or \emph{deferred} past a certain deadline $D_i$ if the job is executed after $D_i$.
Our goal is to find a feasible schedule that minimizes the
total penalty of late jobs.  Given a directed path $P$, we let
\[\suffix{j}:=\{j' \in [n] \,:\, j \preceq j'\}\]
be the {\em suffix} induced by job $j \in [n]$. We call a sequence 
$\suffChain=(\suffChain_1,\suffChain_2,\dots,\suffChain_k)$ of
suffixes of a given path $P\in \P$ a {\em suffix chain} if
\[P \supseteq \suffChain_1\supseteq \suffChain_2\supseteq \dots \supseteq \suffChain_k;\]
while a suffix chain could have arbitrary length, we will only use suffix chains with length $k=\abs{\D}$.  
Given two suffix
chains $\suffChain$ and $\suffChain'$ with $k$ suffixes each, we say $\suffChain
\preceq \suffChain'$ if $\suffChain_i\supseteq \suffChain'_i$ for all $i\in[k]$.
If we have neither $\suffChain \preceq \suffChain'$ nor $\suffChain' \preceq \suffChain$,
we say that $\suffChain$ and $\suffChain'$ {\em cross}.  
Given two suffix chains
$\suffChain$ and $\suffChain'$, we obtain their {\em join} $\suffChain\vee \suffChain'$ by letting
$(\suffChain\vee \suffChain')_i=\suffChain_i\cup \suffChain'_i$. Similarly, we let the {\em meet} of $\suffChain$
and $\suffChain'$ be obtained by letting $(\suffChain\wedge \suffChain')_i = \suffChain_i \cap \suffChain_i$.

\section{An integer programming formulation}
\label{sec:ex_ip}
Our general approach will be to
formulate the problem as an integer program, to solve its relaxation,
and to randomly round the fractional solution into a feasible schedule
of the desired quality. The IP will have a layered structure. For each
deadline $D_i \in \D$, we want to decide which jobs in $[n]$ are to be
postponed past deadline $D_i$.  We start with the following two easy
but crucial observations.

\begin{observation}\label{obs1}
Consider a path $P \in \P$, and suppose that $j \in P$ is one of the
jobs on this path. If $j$
is postponed past $D_i$ then so are all of $j$'s successors on
$P$. Thus, we may assume w.l.o.g. that the collection of jobs of $P$
that are executed after time $D_i$ forms a {\em suffix} of $P$.
\end{observation}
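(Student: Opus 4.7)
The plan is to exploit the fact that $P$ is a directed path, so the jobs on $P$ are totally ordered by the precedence relation $\preceq$. I would then show that the set of jobs of $P$ deferred past $D_i$ is closed under taking successors along $P$, which on a total order is exactly a suffix. This makes the ``w.l.o.g.'' statement automatic (as opposed to requiring a restructuring argument).

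First I would verify the main implication. Interpret ``$j$ is postponed past $D_i$'' to mean that $j$'s completion time $c_j$ exceeds $D_i$. Let $j' \in P$ be a successor of $j$ on $P$, so $j \prec j'$, and consider any feasible non-preemptive single-machine schedule. Precedence forces $s_{j'} \geq c_j$, and since processing times are non-negative we have
\[
c_{j'} = s_{j'} + p_{j'} \;\geq\; s_{j'} \;\geq\; c_j \;>\; D_i,
\]
so $j'$ is postponed past $D_i$ as well. This is the entire content of the first sentence of the observation.

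Second, the set $U_i(P) := \{j \in P : c_j > D_i\}$ is therefore closed under the successor operation in the total order induced by $P$; any such upward-closed subset of a totally ordered set is by definition a suffix, so $U_i(P) = \suffix{j^\ast}$ for $j^\ast$ its earliest element (or $U_i(P) = \emptyset$). Since this argument applies to every feasible schedule, the ``we may assume w.l.o.g.'' clause is in fact unconditional. There is no real obstacle here beyond fixing the semantics of ``postponed past $D_i$''; once one reads it as $c_j > D_i$, the observation is an immediate consequence of precedence and non-preemption on a single machine.
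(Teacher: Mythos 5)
Your proof is correct and is precisely the argument the paper takes for granted (the observation is stated without proof): reading ``postponed past $D_i$'' as $c_j > D_i$---which matches the paper's later gloss ``its execution ends after time $d_j$''---precedence on a single non-preemptive machine gives $c_{j'} \geq s_{j'} \geq c_j > D_i$ for any successor $j'$ on $P$, and an upward-closed subset of the total order on $P$ is exactly a suffix. You are also right that the ``w.l.o.g.'' is really an unconditional property of every feasible schedule, so restricting the IP to suffixes loses nothing.
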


\begin{observation}\label{obs2}
Consider a path $P \in \P$, and suppose that $j \in P$ is one of the
jobs on this path. If $j$ is postponed past $D_i$, then it is also
postponed past every {\em earlier} deadline $D_{i'}<D_i$. Thus, we may
assume w.l.o.g. that the collections $\suffChain_1,\dots,\suffChain_k$ of jobs of $P$ that are executed
after deadlines $D_1<\dots<D_k$, respectively, exhibit a {\em chain} structure, i.e. $\suffChain_1\supseteq \suffChain_2 \supseteq \dots \supseteq \suffChain_k$.
\end{observation}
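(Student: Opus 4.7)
My plan is to handle the two parts of Observation~\ref{obs2} in sequence; both should follow almost immediately from definitions. For the first claim, I would fix any feasible schedule and let $C_j$ denote the completion time of job $j$ under it. The phrase ``$j$ is postponed past $D_i$'' is just shorthand for $C_j > D_i$, so for any earlier deadline $D_{i'} < D_i$ we have $C_j > D_i > D_{i'}$, and $j$ is also postponed past $D_{i'}$. There is no work beyond unpacking the definition.

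For the second part I would introduce the natural notation $\suffChain_i := \{ j \in P : C_j > D_i \}$ for the set of jobs on $P$ deferred past $D_i$ under the fixed schedule. Taking any two indices $i < i'$ and any $j \in \suffChain_{i'}$, the first part of the observation applied to $j$ and the pair of deadlines $D_i < D_{i'}$ gives $j \in \suffChain_i$. Hence $\suffChain_{i'} \subseteq \suffChain_i$ whenever $i < i'$, which is exactly the chain condition $\suffChain_1 \supseteq \suffChain_2 \supseteq \dots \supseteq \suffChain_k$. To obtain the full chain-of-suffixes structure that the observation records, I would then invoke Observation~\ref{obs1}, which independently guarantees that each individual $\suffChain_i$ is a suffix of $P$.

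A clarifying remark I would include is that the ``w.l.o.g.''\ in the statement is not a transformation of the schedule but a bookkeeping remark: the sets $\suffChain_i$ read off from any feasible (in particular, optimal) schedule automatically exhibit the claimed nesting, so nothing is lost by describing schedules in this form. The one point of care is the index convention---since higher indices correspond to later deadlines and therefore to fewer deferred jobs, the containments must be written $\suffChain_i \supseteq \suffChain_{i+1}$ rather than the reverse. There is no real obstacle to overcome here; the role of this observation is rather to motivate the use of length-$k$ suffix chains, in the sense defined in Section~\ref{sec:notation}, as the basic combinatorial objects underlying the configuration IP to be introduced next.
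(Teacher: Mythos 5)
Your proof is correct and takes essentially the same approach the paper implicitly uses: Observation~\ref{obs2} is stated without a separate proof precisely because it follows at once from unpacking ``postponed past $D_i$'' as $C_j > D_i$ and transitivity of $>$. Your added remarks---that the ``w.l.o.g.''\ is purely descriptive rather than a schedule transformation, and that the full suffix-chain picture additionally requires Observation~\ref{obs1}---are both accurate and match the paper's intent.
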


Combining the above two observations, we see that for each path
$P\in\P$, the collections of jobs postponed past each deadline form a
suffix chain $\suffChain^P=\suffChain^P_1\supseteq \suffChain^P_2 \supseteq \dots \supseteq
\suffChain^P_k$. In the following we let $\S^P$ denote the collection of suffix
chains for path $P$; we introduce a binary variable $x_{\suffChain}$ for
each suffix chain $\suffChain\in\S^P$ and each $P\in\P$.  In an IP
solution $x_{\suffChain}=1$ for some $\suffChain\in\S^P$ if for each
$i\in[k]$ the set of jobs executed past deadline $D_i$ is precisely
$\suffChain_i$.  We now describe the constraints of the IP in detail.

\para{\eqref{c1} At most one suffix chain of postponed jobs per path}
Since a job can either be deferred or not, and there is no meaningful
way to defer a job twice, we only want to choose at most one suffix
chain per path $P \in \P$. Hence we obtain the constraint:
\begin{equation}\label{c1}\tag{C1}
  \sum_{\suffChain\in{{\S}^P}} x_{\suffChain} \leq 1 \quad \forall P \in \P.
\end{equation}

\para{\eqref{c2} Deferring sufficiently many jobs}
In any feasible schedule, the total processing time of jobs
scheduled before time $D_i$ must be at most $D_i$; conversely, the total
processing time of jobs whose execution is deferred past time $D_i$ must be at least $\Gamma - D_i$, where $\Gamma=\sum_{j\in[n]}p_j$ is the total processing time
of all jobs. This is captured by the following  constraints:

\[\sum_{P \in \P} \sum_{\suffChain \in \S^P} p^{i}_{\suffChain} x_{\suffChain} \geq
\Gamma - D_i \quad \forall i \in [k]
\]
where $p^{i}_{\suffChain}$ is the total processing time of the jobs
contained in $\suffChain_i$.  While the above constraints are certainly
valid, in order to reduce the integrality gap of the formulation and successfully apply our rounding scheme we need to
\emph{strengthen} them, as we now describe.  To this end, suppose that
we are given a chain
\[ \suffChainFam^P = \suffChainFam^P_1 \supseteq \suffChainFam^P_2 \supseteq \suffChainFam^P_3 \supseteq \ldots
\supseteq \suffChainFam^P_k \] of $k$ suffixes of deferred jobs for each path $P
\in \P$, and let $\suffChainFam=\{\suffChainFam^P\}_{P
    \in \P}$ be the family of these suffix chains. Suppose that we knew that we were looking for a schedule
in which the jobs in $\suffChainFam^P_i$ are deferred past deadline $D_i$ for all
$P \in \P$. For each $i \in [k]$, a feasible schedule must now defer
jobs outside $\bigcup_{P \in \P} \suffChainFam^P_i$ of total processing time at
least
\begin{equation}\label{thetais}
\Theta^{i,\suffChainFam} := \max\left\{ (\Gamma - D_i) - \sum_{P \in \P} \sum_{j \in \suffChainFam^P_i}
  p_j, 0 \right\}.
\end{equation}
We obtain the following valid inequality for any feasible schedule:
\begin{equation}\label{c2}\tag{C2}
\sum_{P \in \P} \sum_{\suffChain \in \S^P} p^{i,\suffChainFam}_{\suffChain} x_{\suffChain} \geq
\Theta^{i,\suffChainFam} \quad \forall i \in [k], \, \forall \suffChainFam \in \S,
\end{equation}
where $\S$ is the collection of all families of suffix chains for $\P$ (including the empty family), and
where $p^{i,\suffChainFam}_{\suffChain}$ is the minimum of $\Theta^{i,\suffChainFam}$ and the total processing
time of jobs $j$ that are in $\suffChain_i$ but not in $\suffChainFam^P_i$;
formally,  for $\suffChainFam \in \S$, $i\in [k]$, $P \in \P$, and $\suffChain \in \S^P$, we set
\[
p^{i,\suffChainFam}_{\suffChain} := \min \left\{ \sum_{j\in\suffChain_i\setminus \suffChainFam^P_i} p_{j},
\Theta^{i,\suffChainFam} \right\}.
\]

\eqref{c2} falls into the class of {\em Knapsack Cover}
(KC) inequalities ~\cite{Ba75,CF+00,HJP75,WO75}, and the above {\em
  capping} of coefficients is typical for such inequalities.


All that remains to define the IP is to give the objective function.
Consider a job $j$ on path $P \in \P$, and suppose that the IP
solution $x$ picks suffix chain $\suffChain \in \S^P$. Job $j$ is late
(i.e., its execution ends after time $d_j=D_{i(j)}$) if $j$ is
contained in the suffix $\suffChain_{i(j)}$. We can therefore express the
penalty of suffix chain $\suffChain$ succinctly as
\begin{equation}\label{eqn:alt_cs}
w_{\suffChain} := \sum_{j \in P\,:\, j \in \suffChain_{i(j)}} w_j.
\end{equation}
We can now state the canonical LP relaxation of the IP as follows
\begin{equation} \label{p} \tag{P}
  \min \left\{ \sum_{P\in\P}\sum_{\suffChain \in \S^P} w_{\suffChain}x_{\suffChain} \,:\, \eqref{c1},
  \eqref{c2}, x \geq 0 \right\}.
\end{equation}

For convenience we introduce auxiliary
indicator variables $U_j$ for each job $j \in [n]$. $U_j$ takes value $1$
if $j$'s execution ends after time $d_j$, and hence
\begin{equation}\label{eq:uj}
 U_j := \sum_{{\suffChain}\in{\S^{P}}:j\in\suffChain_{i(j)}}x_{\suffChain},
\end{equation}
where $P$ is the chain containing job $j$.

\section{Rounding the relaxation}
\label{sec:ex_ip_rounding}
\newcommand{\I}[1][]{{\mathcal{I}}_{#1}}  Our rounding scheme does not apply only to
(suitable) feasible points for (P), but in fact allows us to round a
much broader class of (not necessarily feasible) fractional points
$(U,x)$ to integral feasible solutions $(\hat U, \hat x)$ of the
corresponding IP, while only losing a factor of $O(\log k)$ in the
objective value.  As we will see \ifconf{later}\iffull{in Section~\ref{subsec:ellipsoid}}, being able to round this
broader class of points is crucial for our algorithm.  In order
to formally describe the class of points we can round, we need to
introduce the concept of canonical chain families.  Informally, the
canonical suffix chain for a path $P$ defers each job $j\in P$ as much
as possible, subject to ensuring no job in $P$ is deferred past its
deadline. The definition below
makes this formal.

\begin{definition}\label{def:can}
  Given an instance of \pdls, we let $\canonFam_i^P$ be the longest suffix of
  path $P \in \P$ that consists only of jobs whose deadline is
  strictly greater than $D_i$. Jobs in $\canonFam_i^P$ may be scheduled 
  to complete {\em after} $D_i$ without incurring a penalty. 
  We call 
  \[ \canonFam^P := \canonFam^P_1 \supseteq \ldots \supseteq \canonFam^P_k \]
  the {\em canonical suffix chain} for path $P$, and let $\canonFam=\{\canonFam^P\}_{P
    \in \P}$ be the {\em canonical suffix chain family}. 
\end{definition}

Our general approach for rounding a solution $(U,x)$ to program \eqref{p}
is to split jobs into those with large $U_j$ values and those with
small ones.  While we can simply think of ``rounding up''
$U_j$ values when they are already large, we need to utilize the
constraints~\eqref{c1} and~\eqref{c2} to see how to treat jobs with
small $U_j$ values.  As it turns out, in order to
successfully round $(U,x)$ we need it to satisfy 
 the KC-inequality for a {\em single} suffix chain
family only. Naturally this family will depend on the set of
jobs with large $U_j$ value.
We can formalize the above as follows.

Consider any instance $\I$ of \pdls, and let $(U,x)$ be a solution
to (P). Define the set $\late$ of jobs that are late to an extent
of at least $1/(\gamma \log k)$ for a parameter $\gamma > 0$ (whose
value we will make precise at a later point):
\begin{equation*}
  \late = \{ j \,:\,  U_j \geq 1/(\gamma \log k)\}.
\end{equation*}
We now obtain a {\em modified instance of \pdls}, denoted $\I[\late]$,
by increasing the deadline for the jobs in $\late$ to $\Gamma$.  Thus,
jobs in $\late$ can never be late in the modified instance
$\I[\late]$.  Note that since we do not modify the processing time of
any job $j\in[n]$, we have that $p^{i,\suffChainFam}_{\suffChain}$ and
$\Theta^{i,\suffChainFam}$ remain identical in $\I[\late]$ and $\I$ for
all $i$, $\suffChainFam$, and $\suffChain$.  Similarly, each job $j\in[n]$
has the same penalty $w_j$ in $\I$ and $\I[\late]$.  Let $\canonFam$ be the canonical suffix chain
family for $\I[\late]$. We are able to round a solution $(U,x)$ as
long as it satisfies the following conditions:
\begin{itemize}
\item[(a)] for each $P\in\P$, the set $\{\suffChain\in\S^P:x_{\suffChain}>0\}$ is cross-free
\item[(b)] $(U,x)$  is feasible for a relaxation
(P') of \eqref{p} that replaces the constraints~\eqref{c2} by 
\begin{equation}
  \label{c2p}\tag{C2'}
  \sum_{P \in \P} \sum_{\suffChain \in \S^P} p^{i,\canonFam}_{\suffChain} x_{\suffChain} \geq
  \Theta^{i,\canonFam} \quad \forall i \in [k],
\end{equation}
where $\canonFam$ is the canonical suffix chain family for the modified \pdls instance $\I[\late]$.
\end{itemize}
In the next section, we see how we can find solutions satisfying both
of these conditions.

Suppose $(U,x)$ is a solution to \eqref{p} that satisfies (a) and
(b). Obtain $x^0$ by letting $x^0_{\suffChain}=x_{\suffChain}$ if $\suffChain$
makes at least one job $j \in [n]$ late in $\I[\late]$, and let
$x^0_{\suffChain}=0$ otherwise. Define $U^0 \leq U$ as in
\eqref{eq:uj} (with $x^0$ in place of $x$), and note that
$(U^0,x^0)$ satisfies (a) and (b). 
Let us now round $(U^0,x^0)$. We
focus on path $P \in \P$, and define the support of $(U^0,x^0)$
induced by $P$:
\[
\T := \{\suffChain\in\S^P: x^0_{\suffChain}>0 \}. 
\]
As this set is cross-free by assumption (a), \T\ has a well-defined
maximal element $\suffChain^\ast$ with $\suffChain \preceq
\suffChain^\ast$ for all $\suffChain \in \T$ (recall, $\suffChain
\preceq \suffChain^\ast$ means $\suffChain$ defers no less jobs
past every deadline $D_i$ than $\suffChain^\ast$ does).  By
definition, $\suffChain^\ast$ makes at least one job $j \in
[n]\setminus \late$ late. Since $\suffChain^\ast$ is maximal in \T\ it
therefore follows that $j$ is late in {\em all} $\suffChain \in \T$.
Using the definition of $(U^0,x^0)$ as well as the fact that $j
\not\in \late$ we obtain
\begin{equation}\label{eq:1}
  \sum_{\suffChain\in\T}x^0_{\suffChain}
  =\sum_{\suffChain\in\S^P:j \in \suffChain_{i(j)}}x^0_{\suffChain}
  = U^0_{j} \leq U_j <
  \frac{1}{\gamma \log k}.
\end{equation}
We let $(\bar{U},\bar{x}) = \gamma\log k \cdot (U^0,x^0)$ and obtain the
following lemma. 
\ifconf{The proof of this and several of the following lemmas are
  deferred to \cite{full} because of space limitations.}

\begin{lemma} \label{lem:boostfeas}
  $(\bar{U}, \bar{x})$ satisfies 
  \begin{align} 
    \sum_{\suffChain\in \S^P} \bar{x}_{\suffChain} & \leq
    1 \label{c1S}\tag{$\overline{\mbox{C1}}$} \quad \forall i \in [k], \forall P
    \in \P\\
    \sum_{P \in \P} \sum_{\suffChain\in\S^P}  p^{i,\canonFam}_{\suffChain}\bar{x}_{\suffChain}
    & \geq
    \gamma\log k \cdot \Theta^{i,\canonFam} \quad \forall i \in [k],  \label{c2S}\tag{$\overline{\mbox{C2}}$}
  \end{align}
  where $\canonFam$ is the canonical suffix chain family defined for the modified
  instance $\I[\late]$ of \pdls. 
\end{lemma}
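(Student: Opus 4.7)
The plan is to verify the two inequalities \eqref{c1S} and \eqref{c2S} separately. Since $\bar{x} = \gamma \log k \cdot x^0$, after dividing through by $\gamma \log k$ the task reduces to showing (i) $\sum_{\suffChain \in \S^P} x^0_\suffChain \leq 1/(\gamma \log k)$ for each $P \in \P$, and (ii) $\sum_{P \in \P} \sum_{\suffChain \in \S^P} p^{i, \canonFam}_\suffChain x^0_\suffChain \geq \Theta^{i, \canonFam}$ for each $i \in [k]$.

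For (i), I would appeal directly to \eqref{eq:1}. The support $\T = \{\suffChain : x^0_\suffChain > 0\}$ is cross-free by hypothesis (a), so it has a well-defined maximal element $\suffChain^*$. By construction of $x^0$, $\suffChain^*$ makes some job $j \in [n] \setminus \late$ late in $\I[\late]$ (the job must lie outside $\late$ since jobs in $\late$ have deadline $\Gamma$ in $\I[\late]$ and are never late). The maximality of $\suffChain^*$ then forces every $\suffChain \in \T$ to satisfy $\suffChain_{i(j)} \supseteq \suffChain^*_{i(j)} \ni j$, so every $\suffChain \in \T$ makes $j$ late, as already observed in the derivation of \eqref{eq:1}. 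That bound gives $\sum_{\suffChain \in \T} x^0_\suffChain \leq U_j < 1/(\gamma \log k)$ since $j \notin \late$. When $\T$ is empty the inequality is trivial.

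The main content lies in verifying (ii), and my strategy is to argue that the transition $x \mapsto x^0$ does not decrease the left-hand side of \eqref{c2p}, whereby hypothesis (b) transfers directly from $(U, x)$ to $(U^0, x^0)$. The key observation, which I expect to be the only nontrivial step, is that any suffix chain $\suffChain \in \S^P$ which makes no job late in $\I[\late]$ satisfies $p^{i, \canonFam}_\suffChain = 0$ for every $i$. To see this, note that if $\suffChain$ makes no job late in $\I[\late]$, then for every $i$ and every $j \in \suffChain_i$ the deadline of $j$ in $\I[\late]$ strictly exceeds $D_i$, so $\suffChain_i$ is a suffix of $P$ consisting entirely of jobs with modified deadline greater than $D_i$. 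Since $\canonFam^P_i$ is by definition the \emph{longest} such suffix, $\suffChain_i \subseteq \canonFam^P_i$, making $\suffChain_i \setminus \canonFam^P_i$ empty and forcing $p^{i, \canonFam}_\suffChain = \min\{0, \Theta^{i,\canonFam}\} = 0$. Since these are exactly the $\suffChain$ zeroed out in passing from $x$ to $x^0$, the left-hand side of \eqref{c2p} is unchanged, and hypothesis (b) immediately delivers (ii). Multiplying through by $\gamma \log k$ then concludes both parts.
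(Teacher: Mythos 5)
Your proof is correct and follows the same high-level structure as the paper's: divide by $\gamma\log k$, deduce \eqref{c1S} from inequality~\eqref{eq:1}, and deduce \eqref{c2S} from feasibility of $(U^0,x^0)$ for (P'). The paper's own proof, however, is a one-liner that silently leans on the earlier assertion ``note that $(U^0,x^0)$ satisfies (a) and (b)'' --- which is not obvious, since $x^0 \leq x$ and \eqref{c2p} is a covering constraint whose left-hand side is monotone increasing in $x$, so zeroing out coordinates could in principle break it. You supply exactly the missing justification: any $\suffChain$ zeroed in passing to $x^0$ makes no job late in $\I[\late]$, which by the chain structure of $\suffChain$ forces $\suffChain_i$ to consist only of jobs with modified deadline exceeding $D_i$, hence $\suffChain_i \subseteq \canonFam^P_i$ (as $\canonFam^P_i$ is the longest such suffix), hence $p^{i,\canonFam}_{\suffChain}=0$. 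So the left-hand side of \eqref{c2p} is unchanged by the truncation. This is the right argument and a genuine gap in the paper's exposition; your write-up is more complete than the original.
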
 
\iffull{
\begin{proof}
	Observe that since $\bar x = \gamma \log k \cdot x$, we can view
	constraint~\eqref{c1S} as being precisely inequality~\eqref{eq:1}
	with both sides scaled up by a factor of $\gamma\log k$; similarly,
	we can also view constraint~\eqref{c2S} as constraint~\eqref{c2p}
	scaled up by this same factor.  Thus, the lemma follows immediately
	from inequality~\eqref{eq:1} and the fact that $(U^0,x^0)$ is feasible
	for (P').
\end{proof}}

We now randomly round $(\bar{U},\bar{x})$ to an integral solution
$(\hat U,\hat x)$ as follows.  For each $P\in\P$, we independently
select a single random suffix chain $\suffChain\in\S^P$ using
marginals derived from $\bar x$, and set the corresponding
$\hat{x}_{\suffChain}=1$.  In particular, we set $\hat{x}$ so that for
all $P\in\P$ and all $\suffChain\in\S^P$ we have
\begin{equation*}
  \Pr[\hat{x}_{\suffChain}=1]
  =
  \begin{cases}
    \bar{x}_{\suffChain}
    &\qquad\text{if $\suffChain\in \T$}\\
    1-\sum_{{\suffChain}'\in\T}\bar{x}_{\suffChain'}
    &\qquad\text{if $\suffChain= \canonFam^P$.}
  \end{cases}
\end{equation*}
Since $(\bar U, \bar x)$ satisfies~\eqref{c1S}, we can see that the
above describes a valid randomized process.  We run this process
independently for each path $P\in\P$ to obtain $\hat x$.  A job $j \in
[n]\setminus \late$ is late if it is contained in level $i(j)$ of the
suffix chain $\suffChain$ chosen for path $P$ by the above process. 
Thus, we set
\[ \hat{U}_j := \sum_{\suffChain\in\S^P:j \in \suffChain_{i(j)}} \hat{x}_{\suffChain}, \]
We now claim
that the expected value of $\hat{U}_j$ is precisely $\bar{U}_j$.

\begin{lemma}
  \label{lem:exp_cost}
  For all $j \not\in \late$, $E[\hat{U}_j]= \bar{U}_j$.
\end{lemma}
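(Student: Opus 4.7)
The plan is to compute $E[\hat U_j]$ directly via linearity of expectation, using the explicit marginals of the rounding process, and then identify the one ``extra'' term that must vanish in order for the claim to hold. Because $\hat U_j$ depends only on the suffix chain drawn for the path $P$ containing $j$, independence across paths plays no role; we can restrict attention to $P$.

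First I would expand the definition: since the process selects exactly one $\suffChain\in\S^P$, the sum defining $\hat U_j$ has exactly one nonzero term per realization, and
\[
E[\hat U_j] \;=\; \sum_{\suffChain\in\S^P:\, j\in \suffChain_{i(j)}} \Pr[\hat x_{\suffChain}=1].
\]
Chains outside $\T\cup\{\canonFam^P\}$ have probability $0$. Splitting the sum yields
\[
E[\hat U_j] \;=\; \sum_{\suffChain\in\T:\, j\in \suffChain_{i(j)}}\bar x_{\suffChain} \;+\; \mathds{1}\bigl[j\in \canonFam^P_{i(j)}\bigr]\cdot\Bigl(1-\sum_{\suffChain'\in\T}\bar x_{\suffChain'}\Bigr).
\]
On the other hand, because $\bar x$ is supported on $\T$ (it is a positive rescaling of $x^0$, which is supported on $\T$), the definition $\bar U_j=\sum_{\suffChain\in\S^P:\, j\in \suffChain_{i(j)}}\bar x_{\suffChain}$ gives exactly the first summand. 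So it suffices to show that the indicator $\mathds{1}[j\in \canonFam^P_{i(j)}]$ equals $0$.

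The main (and really only) step is this last one, and it is where the hypothesis $j\notin\late$ enters. Since $j\notin\late$, the deadline of $j$ is not modified in $\I[\late]$, so $j$ still has deadline $D_{i(j)}$ there. But $\canonFam$ is the canonical suffix chain family for $\I[\late]$, so by Definition~\ref{def:can} the suffix $\canonFam^P_{i(j)}$ contains only jobs whose (modified) deadline is \emph{strictly greater} than $D_{i(j)}$. Hence $j\notin \canonFam^P_{i(j)}$, the indicator vanishes, and $E[\hat U_j]=\bar U_j$.

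I do not anticipate any genuine obstacle: the argument is essentially a bookkeeping check combined with the observation that the canonical chain was designed precisely so that jobs with deadline $D_{i(j)}$ cannot be deferred past it. The one subtlety worth stating carefully in the final write-up is that the relevant notion of ``deadline'' when testing membership in $\canonFam^P_{i(j)}$ is the deadline in $\I[\late]$, not in $\I$, which is why the hypothesis $j\notin\late$ is exactly what the argument needs.
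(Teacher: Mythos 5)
Your proof is correct and follows the same approach as the paper's: compute $E[\hat U_j]$ by linearity of expectation using the marginals of the rounding process. The paper's write-up is terser and elides the step you make explicit — that $j\notin\late$ forces $j\notin\canonFam^P_{i(j)}$, so the canonical chain $\canonFam^P$ (whose selection probability is $1-\sum_{\suffChain'\in\T}\bar{x}_{\suffChain'}$ rather than $\bar{x}_{\canonFam^P}=0$) never appears in the sum defining $\hat{U}_j$ — and since this is exactly where the hypothesis $j\notin\late$ enters, spelling it out is a genuine improvement over the paper's proof.
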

\iffull{
  \begin{proof}
    Let $P$ be the path containing $j$, and consider a chain $\suffChain\in\S^P$
    such that $j\in \suffChain_{i(j)}$. The probability for
    $\hat{x}_{\suffChain}$ to be $1$ is precisely $\bar{x}_{\suffChain}$, and hence it immediately follows that
    \[ E[\hat{U}_j] = \sum_{\suffChain\in\S^P: j \in \suffChain_{i(j)}}  \Pr[\hat{x}_{\suffChain}=1]
    = \sum_{\suffChain\in\S^P:j \in \suffChain_{i(j)}}\bar{x}_{\suffChain}=\bar{U}_j. \]
  \end{proof}}

The preceding lemma shows that the expected penalty of $(\hat U,\hat x)$
in the modified instance $\I[\late]$ is exactly $\sum_{j \in [n]\setminus
  \late} w_j\bar{U}_j$. The following lemma shows that the schedule induced by
$\hat{x}$ postpones at least $\Theta^{i,\canonFam}$ jobs past deadline $D_i$ for all $i
\in [k]$ with constant probability. 

\begin{lemma}\label{lem:chernoff}
  With constant probability, we have
  \begin{equation}\label{eq:feas}
    \sum_{P \in \P} \sum_{\suffChain\in\S^P} p^{i,\canonFam}_{\suffChain}
    \hat{x}_{\suffChain} \geq \Theta^{i,\canonFam} \quad \forall i \in [k], 
  \end{equation}
    where $\canonFam$ is the canonical suffix chain family for the modified
    \pdls\ instance $\I[\late]$.  In particular, for $\gamma=4$ the constraint
    holds with probability at least $0.7$.
\end{lemma}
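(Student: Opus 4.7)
The plan is to control the random quantity $S^i := \sum_{P \in \P} \sum_{\suffChain \in \S^P} p^{i,\canonFam}_{\suffChain} \hat{x}_{\suffChain}$ for each fixed $i \in [k]$ via a Chernoff bound, and then union-bound over the $k$ deadlines. Writing $S^i = \sum_{P \in \P} X^i_P$ with $X^i_P := \sum_{\suffChain \in \S^P} p^{i,\canonFam}_{\suffChain} \hat{x}_{\suffChain}$, two structural facts drive the argument: the $X^i_P$ are independent across paths (by the path-independent rounding process), and each $X^i_P$ lies in $[0, \Theta^{i,\canonFam}]$ since the definition of $p^{i,\canonFam}_{\suffChain}$ caps the coefficient at $\Theta^{i,\canonFam}$.

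I would first compute $E[S^i]$. The key observation is that the ``fallback'' outcome $\suffChain = \canonFam^P$ contributes nothing: since $\canonFam^P_i \setminus \canonFam^P_i = \emptyset$, we have $p^{i,\canonFam}_{\canonFam^P} = \min\{0, \Theta^{i,\canonFam}\} = 0$. Therefore only the suffix chains in $\T$ contribute to $E[S^i]$, and because $E[\hat{x}_{\suffChain}] = \bar{x}_{\suffChain}$ for every $\suffChain \in \T$, we get
\[
E[S^i] \;=\; \sum_{P \in \P}\sum_{\suffChain \in \T} p^{i,\canonFam}_{\suffChain} \bar{x}_{\suffChain} \;\geq\; \gamma \log k \cdot \Theta^{i,\canonFam},
\]
where the last inequality is precisely the boosted constraint \eqref{c2S} delivered by Lemma \ref{lem:boostfeas}.

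Next, I would apply the lower-tail Chernoff bound to the independent $[0,1]$-valued variables $Y^i_P := X^i_P / \Theta^{i,\canonFam}$ (ignoring trivially the indices $i$ for which $\Theta^{i,\canonFam} = 0$), whose sum has expectation $\mu_i \geq \gamma \log k$. Using the sharp form $\Pr[\sum_P Y^i_P \leq 1] \leq e\cdot \mu_i \cdot e^{-\mu_i}$ obtained by plugging $(1-\delta)\mu_i = 1$ into $(e^{-\delta}/(1-\delta)^{1-\delta})^{\mu_i}$, the probability that constraint $i$ fails decays as $k^{-\Omega(\gamma)}$. A union bound over the (at most) $k$ deadlines gives
\[
\Pr\bigl[\exists\, i \text{ with } S^i < \Theta^{i,\canonFam}\bigr] \;\leq\; k \cdot e\gamma \log k \cdot k^{-\gamma},
\]
and for $\gamma = 4$ this quantity can be bounded by $0.3$; the complementary event then holds with probability at least $0.7$ as claimed.

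The main obstacle is purely quantitative: the plain multiplicative Chernoff bound $\exp(-\delta^2 \mu/2)$ is too lossy for very small $k$, so one must use the sharper $e\mu e^{-\mu}$ form (or equivalently verify the small-$k$ cases, e.g.\ $k = 1$ where $\Theta^{1,\canonFam}$ is actually achieved deterministically by the definition of $\late$ and $\canonFam$) in order to pin down the explicit constant $\gamma = 4$ and probability $0.7$. The conceptual content, however, is entirely captured by the expectation calculation together with independence across paths and the capping that bounds each per-path contribution by $\Theta^{i,\canonFam}$.
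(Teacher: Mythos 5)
Your proof is correct in substance and shares the skeleton of the paper's argument (independence across paths, the uniform bound $X^i_P\le\Theta^{i,\canonFam}$ forced by the capping in the definition of $p^{i,\canonFam}_{\suffChain}$, the expectation lower bound from \eqref{c2S}, and a union bound over $k$ deadlines), but you choose a genuinely different concentration inequality. The paper bounds the variance via the Bhatia--Davis inequality and then applies a Bernstein bound, ultimately obtaining a tail of the form $\exp\bigl(-\tfrac{3}{4}r(1-1/r)^2\bigr)$ with $r=\E[X]/\Theta^{i,\canonFam}\ge\gamma\log k$; you instead normalize each $X^i_P$ to a $[0,1]$-valued variable and apply the Chernoff--Hoeffding lower-tail bound (which does apply to $[0,1]$-valued, not merely Bernoulli, variables, by the standard convexity domination of the moment generating function), obtaining $e\mu e^{-\mu}$ with $\mu\ge\gamma\log k$. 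Your tail is in fact asymptotically tighter ($e^{-\mu}$ vs.\ $e^{-3\mu/4}$), and you correctly flag that the crude $\exp(-\delta^2\mu/2)$ form would not suffice. Your observation that the fallback outcome $\suffChain=\canonFam^P$ contributes nothing because $p^{i,\canonFam}_{\canonFam^P}=0$ is a cleaner justification of the expectation identity $\E[X]=\sum_{P}\sum_{\suffChain}p^{i,\canonFam}_{\suffChain}\bar{x}_{\suffChain}$ than the paper gives (the paper glosses over the fact that $\Pr[\hat{x}_{\canonFam^P}=1]\neq\bar{x}_{\canonFam^P}$ in general). One caveat: you defer verification of the explicit constant pair ($\gamma=4$, probability $0.7$) for small $k$ as a loose end; with $\mu\ge4\log k$ (natural log) the union bound $k\cdot 4e\log k\cdot k^{-4}$ only drops below $0.3$ for $k\ge 4$, so some hand-checking of $k\in\{1,2,3\}$ is genuinely needed --- though the paper's own Bernstein computation suffers the same issue at small $k$, so this is a shared quantitative slack rather than a gap unique to your route.
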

\iffull{
\begin{proof} \emph{(of Lemma \ref{lem:chernoff})}
  Our proof relies on two bounds on random variables.  Before
  proceeding with the proof itself, we begin by stating the two required bounds for the sake of completeness.

  First, we need the following form of Bernstein inequality~\cite{janson2002concentration}.  Let $X_1,\dots,X_n$ be
  independent, nonnegative random variables uniformly bounded by some
  $M\ge0$, i.e.~such that $\Pr[X_i \le M]=1$ for all $i$.
  Then, if we let $X=\sum_iX_i$, we have that
  \begin{equation}
    \label{eq:chernoff}
    \Pr[X \le \E[X] - \lambda]
    \le
    \exp\left(
    -\frac{\lambda^2}{2\left(\Var(X)+\lambda M /3\right)}
    \right),
  \end{equation}
  for any $\lambda\ge0$.

  Second, we need the Bhatia-Davis Inequality~\cite{BD00}, which
  states that for any random variable $X$ with support in the interval
  $[m,M]$, i.e.~such that $\Pr[m \le X \le M]=1$, we have
  \begin{equation}
    \label{eq:bhatia-davis}
    \Var(X) \le (M-\E[X])(\E[X]-m).
  \end{equation}

  We now describe how we apply the above bounds to achieve the desired
  bound on the probability that~\eqref{eq:feas} is satisfied.  Fix some
  $i\in[k]$.  Define random variables $\{X_P\}_{P \in \P}$ as 
  \begin{equation*}
    X_P := \sum_{\suffChain\in\S^P} p^{i,\canonFam}_{\suffChain}\hat{x}_{\suffChain}
  \end{equation*}
  where $\canonFam$ is the canonical chain family defined for the
  modified instance of \pdls; let $X=\sum_{P\in\P}X_P$ denote the sum
  of these random variables.  We make the following observations on
  the random variables $X_P$:
  \begin{itemize}
  \item for $P,P'\in\P$, $P\neq P'$, we have that $X_P$ and $X_{P'}$
    are independent since our rounding process made independent choices for the two paths;
  \item for each $P\in\P$ we know $X_P$ is nonnegative, since
    $p^{i,\canonFam}_{\suffChain}$ and $\hat{x}_{\suffChain}$ are always nonnegative;
  \item for each $P\in\P$ we can see that we always have 
    \begin{equation*}
      X_P 
      =\sum_{\suffChain\in\S^P} p^{i,\canonFam}_{\suffChain}\hat{x}_{\suffChain}
      \le \max_{\suffChain\in\S^P} p^{i,\canonFam}_{\suffChain}
      \le \Theta^{i,\canonFam},
    \end{equation*}
    where the first inequality follows by constraint~\eqref{c1S}, and the second inequality follows by the definition of $p^{i,\canonFam}_{\suffChain}$; and
  \item the expectation of $X=\sum_{P \in \P}X_P$ satisfies
    \begin{equation*}
      \E[X] 
      = \E\Big[\sum_{P \in \P}\sum_{\suffChain\in\S^P} p^{i,\canonFam}_{\suffChain}\hat{x}_{\suffChain}\Big]
      = \sum_{P \in \P}\sum_{\suffChain\in\S^P} p^{i,\canonFam}_{\suffChain}\bar{x}_{\suffChain}
      \ge \gamma \log k\cdot\Theta^{i,\canonFam}.
    \end{equation*}
    The inequality above follows by constraint~\eqref{c2S} in
    Lemma~\ref{lem:boostfeas}; the second equality above follows by
    observing that we always have
    $\Pr[\hat{x}^i_{\suffChain}=1]=\bar{x}^i_{\suffChain}$ in our sum.
  \end{itemize}
  
  With the above observations in hand, we apply
  the Bhatia-Davis Inequality~\eqref{eq:bhatia-davis} to get that
  \begin{align*}
    \Var(X)
    =\sum_{P\in\P}\Var(X_P)
    &\le\sum_{P\in\P}(\Theta^{i,\canonFam}-\E[X_P])(\E[X_P]-0)\\
    &\le\sum_{P\in\P}\Theta^{i,\canonFam}\E[X_P]\\
    &=\Theta^{i,\canonFam}\E[X],
  \end{align*}
  where the first equality follows by the fact that the variables
  $X_P$ are independent.  Thus, applying the  Bernstein inequality~\eqref{eq:chernoff} with $M=\Theta^{i,\canonFam}$ and
  $\lambda=\E[X]-\Theta^{i,\canonFam}$ gives us that
  \begin{align*}
    \Pr[X \le \Theta^{i,\canonFam}]
    &\le\exp\left(-\frac{\left(\E[X]-\Theta^{i,\canonFam}\right)^2}{2\left(\Theta^{i,\canonFam}\E[X]+(\E[X]-\Theta^{i,\canonFam})\Theta^{i,\canonFam}/3\right)}\right)\\
    &\le\exp\left(-\frac{\left(\E[X]-\Theta^{i,\canonFam}\right)^2}{(4/3)\left(\Theta^{i,\canonFam}\E[X]\right)}\right)\\
    &=\exp\left(-\frac{3}{4}\left(\frac{\E[X]}{\Theta^{i,\canonFam}}\right)\left(1-\frac{\Theta^{i,\canonFam}}{\E[X]}\right)^2\right).
  \end{align*}
  The first inequality above follows from the previously mentioned
  application of the Bernstein Inequality, the second by observing
  $\Theta^{i,\canonFam}\ge0$ always and gathering like terms in the
  denominator, and the equality follows by pulling a factor of
  $(\E[X])^2$ out of the numerator.  As noted earlier, however, we
  have that $\E[X]\ge\gamma\log k\cdot\Theta^{i,\canonFam}$; taking
  $\gamma=4$ and substituting this into the above gives us that, in
  fact, $\Pr[X\le\Theta^{i,\canonFam}]\le\frac{3}{10k}$.  Since the
  constraint~\eqref{eq:feas} holds for a given $i\in[k]$ if and only
  if $X\ge\Theta^{i,\canonFam}$, by taking a union bound we can see
  that the constraint holds for all $i\in[k]$ with probability at
  least $0.7$.
\end{proof}}

For each $P \in \P$ let $\suffChainRnd^P$ be the join of the suffix chain
corresponding to solution $\hat x$, and the canonical suffix chain $\canonFam^P$;
i.e., suppose that $\hat{x}_{\suffChain}=1$ for $\suffChain \in \S^P$. Then
\begin{equation} \label{eq:suffix}
  \suffChainRnd^P = \suffChain \vee \canonFam^P. 
\end{equation}
Clearly, $\suffChainRnd^P$ is a suffix chain for path $P$. 
We use the following greedy algorithm to obtain a schedule. 

\begin{leftbar}
\begin{algorithmic}
\FOR{$i=1$ to $k$}
\FORALL{$P \in \P$}
\STATE Schedule all jobs in $P\setminus \suffChainRnd^P_{i}$ not already scheduled respecting the precedence constraints 
\ENDFOR
\ENDFOR
\STATE Schedule all remaining jobs respecting the precedence constraints
\end{algorithmic}
\end{leftbar}

\begin{theorem}
  \label{thm:sched_good}
  The schedule produced by the above algorithm is feasible.
  Furthermore, if \eqref{eq:feas} holds, the schedule has cost at most
  $\sum_{j\notin \late}w_j\hat{U}_j$ in the instance $\I[\late]$.
\end{theorem}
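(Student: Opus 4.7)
The plan is to verify the two claims in turn: feasibility (a purely combinatorial argument) and the cost bound (which uses constraint~\eqref{eq:feas} to certify that on-time jobs actually complete by their deadlines).

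First I would check feasibility. The sequence $\suffChainRnd^P$ is by construction a suffix chain, so its nested suffixes $\suffChainRnd^P_1 \supseteq \cdots \supseteq \suffChainRnd^P_k$ induce a nested family of prefixes $P \setminus \suffChainRnd^P_1 \subseteq \cdots \subseteq P\setminus\suffChainRnd^P_k$. Consequently, at iteration $i$ the algorithm only ever needs to schedule a prefix of $P$ that extends previously scheduled prefixes, and this is clearly consistent with the precedence order on $P$. Because the paths in $\P$ are vertex-disjoint and thus carry no cross-path precedence, scheduling the new jobs in each $P$ independently is feasible. The final step picks up whatever remains (the jobs in $\bigcap_i \suffChainRnd^P_i$ together with anything else), scheduling them in a precedence-respecting order.

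Next, to bound the cost I would argue that the only late jobs are those accounted for in $\sum_{j \notin L} w_j \hat U_j$. Jobs $j \in L$ have their deadline raised to $\Gamma$ in $\I[L]$ and thus are never late there. For a job $j \notin L$ with path $P$, note that $d_j = D_{i(j)}$ means $j \notin \canonFam^P_{i(j)}$ by definition of the canonical chain (which contains only jobs with deadline strictly greater than $D_{i(j)}$). So if $\hat U_j = 0$, i.e.~$j \notin \suffChain_{i(j)}$ for the unique $\suffChain$ with $\hat x_{\suffChain} = 1$, then using $\suffChainRnd^P = \suffChain \vee \canonFam^P$ we get $j \notin \suffChainRnd^P_{i(j)}$, so $j$ lies in the prefix $P \setminus \suffChainRnd^P_{i(j)}$ that the algorithm processes no later than iteration $i(j)$. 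It remains to show that everything scheduled through iteration $i$ actually finishes by time $D_i$.

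The key quantitative step is to show $\sum_{P \in \P} p(\suffChainRnd^P_i) \geq \Gamma - D_i$ for every $i \in [k]$, since this is equivalent to the total processing time of $\bigcup_P (P \setminus \suffChainRnd^P_i)$ being at most $D_i$. Splitting $\suffChainRnd^P_i = \canonFam^P_i \sqcup (\suffChain^{(P)}_i \setminus \canonFam^P_i)$ (where $\suffChain^{(P)}$ is the chain picked for $P$) gives
\[
\sum_{P} p(\suffChainRnd^P_i) \;=\; \sum_{P} p(\canonFam^P_i) \;+\; \sum_{P} p\bigl(\suffChain^{(P)}_i \setminus \canonFam^P_i\bigr).
\]
If $\Theta^{i,\canonFam} = 0$ then $\sum_P p(\canonFam^P_i) \geq \Gamma - D_i$ already and we are done. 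Otherwise $\Theta^{i,\canonFam} = (\Gamma - D_i) - \sum_P p(\canonFam^P_i)$, so it suffices to prove $\sum_P p(\suffChain^{(P)}_i \setminus \canonFam^P_i) \geq \Theta^{i,\canonFam}$. Here I would handle the coefficient capping in the definition of $p^{i,\canonFam}_\suffChain$ by casing: either some single path $P^\ast$ already has $p(\suffChain^{(P^\ast)}_i \setminus \canonFam^{P^\ast}_i) \geq \Theta^{i,\canonFam}$, in which case the bound is immediate; or every per-path term lies below the cap, in which case \eqref{eq:feas} applies with no capping and delivers exactly $\sum_P p(\suffChain^{(P)}_i \setminus \canonFam^P_i) \geq \Theta^{i,\canonFam}$.

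The main obstacle is this last case analysis: the cap built into $p^{i,\canonFam}_\suffChain$ makes \eqref{eq:feas} slightly weaker than a clean linear lower bound, so one has to check carefully that the capping never destroys the needed inequality. Once $\sum_P p(\suffChainRnd^P_i) \geq \Gamma - D_i$ is established for every $i$, on-time completion of all jobs with $\hat U_j = 0$ follows and the cost bound $\sum_{j \notin L} w_j \hat U_j$ is immediate.
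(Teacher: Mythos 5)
Your proof is correct and follows essentially the same route as the paper: feasibility from the nested-suffix structure of $\suffChainRnd^P$, and for the cost bound, lower-bounding $\sum_P p(\suffChainRnd^P_i)$ by $\Gamma - D_i$ via the decomposition into $\canonFam^P_i$ and the extra deferred jobs, then invoking \eqref{eq:feas}. One remark: the two-case treatment of the cap is unnecessary worry, since $p^{i,\canonFam}_{\suffChain} = \min\{p(\suffChain_i\setminus\canonFam^P_i),\,\Theta^{i,\canonFam}\} \le p(\suffChain_i\setminus\canonFam^P_i)$ always, so the capped bound in \eqref{eq:feas} is immediately a valid lower bound on the uncapped sum with no casing required -- the cap only ever shrinks coefficients, which is harmless here and is in fact the point of Knapsack-cover capping (it weakens the LHS of the covering constraint, not the guarantee you extract from it).
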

\iffull{\begin{proof} 
  We begin by noting that the schedule produced by the proposed
  algorithm respects the precedence constraints of all $P\in\P$. This
  follows as $\suffChainRnd^P$ is a suffix chain for all $P
  \in \P$, and hence, the algorithm schedules the jobs in
  $\suffChainRnd^P_{i-1}\setminus \suffChainRnd^P_i$ in iteration $i$ for $P \in
  \P$ in precedence order. 

  Next, we show that whenever~\eqref{eq:feas} holds, we have that the
  penalty of the schedule produced by our algorithm is at most
  $\sum_{j\notin \late}w_j\hat{x}_j$ in the instance $\I[\late]$.  Note
  that the total processing time of all jobs scheduled after iteration
  $i$ by our algorithm is 
  \begin{align*}
    \sum_{P\in\P}\sum_{j\in\suffChainRnd^P_i}p_j
    &\ge\sum_{P\in\P}\left( \sum_{\suffChain\in\S^P} p^{i,\canonFam}_{\suffChain}
    \hat{x}_{\suffChain} +\sum_{j\in \canonFam^P_i}p_j\right)\\
    &\ge \Theta^{i,\canonFam} + \sum_{P\in\P}\sum_{j\in \canonFam^P_i}p_j\\
    &\ge \Gamma - D_i,
  \end{align*}
  where $\Gamma$ is the total processing time of all jobs.  The first inequality
  above follows by the definitions of $p^{i,\canonFam}_{\suffChain}$ and $\suffChainRnd$; the
  second by our assumption that~\eqref{eq:feas} holds; and the third
  by the definition of $\Theta^{i,\canonFam}$.  This means, however, that all
  of the jobs scheduled during iterations $1,2,\dots,i$ of our
  algorithm will be completed by time $D_i$.  Now consider a job $j\in
  P\setminus \late$
  with deadline $d_j=D_{i(j)}$. If $j$ is late in the given schedule,
  then it must have been
  scheduled {\em after} iteration $i(j)$. This implies that 
  $j\in\suffChainRnd^P_{i(j)}$, and so
  \[ \hat{U}_j=\sum_{\suffChain\in\S^P:j \in \suffChain_{i(j)}}\hat{x}_{\suffChain}=1.\]
  Thus, if we let
  $\mathcal{L}$ be the set of jobs $j$ that are late in the schedule
  produced by our algorithm, we can see that the penalty of that schedule in the modified instance $\I[\late]$ is 
  \begin{equation*}
    \sum_{j\in\mathcal{L}}w_j \le \sum_{j\notin \late} w_j\hat{U}_j,
  \end{equation*}
  exactly as claimed.
\end{proof}}

\begin{corollary}
  \label{cor:sched_cost}
  The schedule produced by the above algorithm is feasible and incurs
  penalty at most $8 \log k \cdot \sum_j w_j U_j$ in the original
  instance of the \pdls with constant probability.
\end{corollary}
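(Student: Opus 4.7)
The plan is to combine Theorem~\ref{thm:sched_good} and Lemmas~\ref{lem:exp_cost}--\ref{lem:chernoff} with a Markov bound, and then translate the cost guarantee from the modified instance $\I[\late]$ back to the original instance $\I$ by charging the penalty of jobs in $\late$ against their (large) LP values $U_j$.

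First, I would set $\gamma=4$ so that Lemma~\ref{lem:chernoff} gives that \eqref{eq:feas} holds with probability at least $0.7$. Conditional on this event, Theorem~\ref{thm:sched_good} guarantees that the schedule is feasible and has penalty at most $\sum_{j\notin\late} w_j\hat{U}_j$ in the modified instance $\I[\late]$. Next, by Lemma~\ref{lem:exp_cost} and linearity of expectation,
\[
\E\Bigl[\sum_{j\notin\late} w_j\hat{U}_j\Bigr] \;=\; \sum_{j\notin\late} w_j \bar{U}_j \;=\; 4\log k\sum_{j\notin\late} w_j U^0_j \;\le\; 4\log k \sum_{j} w_j U_j,
\]
where in the last inequality I use that $U^0_j\le U_j$ by construction. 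Applying Markov's inequality to the nonnegative random variable $\sum_{j\notin\late}w_j\hat{U}_j$, I get that it does not exceed $8\log k\sum_j w_jU_j$ with probability at least $1/2$.

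A union bound then yields that both events (\eqref{eq:feas} holds, and the modified-instance cost is bounded by $8\log k\sum_j w_jU_j$) occur simultaneously with probability at least $0.7+0.5-1=0.2$, a constant. The only remaining point is to account for the jobs in $\late$, which have deadline $\Gamma$ in $\I[\late]$ and are therefore never charged there, but may be late in the original instance $\I$. I will bound these by noting that each $j\in\late$ satisfies $U_j\ge 1/(4\log k)$, so
\[
\sum_{j\in\late} w_j \;\le\; 4\log k \sum_{j\in\late} w_j U_j.
\]
Adding this to the modified-instance bound gives an overall penalty in $\I$ of at most
\[
8\log k\sum_{j\notin\late} w_jU_j + 4\log k\sum_{j\in\late} w_jU_j \;\le\; 8\log k\sum_j w_jU_j,
\]
which is exactly the claimed bound with constant probability. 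The steps are all routine once the ingredients are in place; the one mildly subtle point is making sure that the constants in the Markov step and in the $\late$-accounting step combine to give precisely the advertised factor $8\log k$, which is why I choose the Markov threshold to be twice the expectation rather than a larger multiple.
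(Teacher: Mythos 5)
Your proof is correct and essentially mirrors the paper's argument (Lemma~\ref{lem:chernoff} with $\gamma=4$, Theorem~\ref{thm:sched_good}, Lemma~\ref{lem:exp_cost}, Markov, union bound, and charging each $j\in\late$ via $U_j\ge 1/(4\log k)$), differing only in that you bound the modified-instance cost and the $\late$-penalty separately while the paper combines them into the single inequality $\sum_{j\notin\late}w_j\bar{U}_j+\sum_{j\in\late}w_j\le 4\log k\sum_j w_jU_j$ before applying Markov. One small slip: your Markov step as stated gives $\sum_{j\notin\late}w_j\hat U_j\le 8\log k\sum_j w_jU_j$, but your final display needs the tighter $\sum_{j\notin\late}w_j\hat U_j\le 8\log k\sum_{j\notin\late}w_jU_j$ (otherwise you would only get a $12\log k$ factor after adding the $\late$-charge); the tighter bound does follow from the same Markov argument using $\E\bigl[\sum_{j\notin\late}w_j\hat U_j\bigr]\le 4\log k\sum_{j\notin\late}w_jU_j$, so simply state the Markov threshold relative to that restricted sum.
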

\iffull{\begin{proof}\emph{(Proof of Corollary \ref{cor:sched_cost})}
	Recall how we arrived at the integral solution
	$(\hat{U},\hat{x})$. Given a solution 
	$(U,x)$ for \eqref{p}, we define the set $\late$ of jobs $j$ whose indicator
	variables $U_j$ have value at least $1/(\gamma \log k)$. We then 
	obtained a modified instance $\I[\late]$ of \pdls\ by increasing the deadlines of
	jobs in $\late$ to $\Gamma$. Assuming that $(U,x)$ satisfies constraint 
	\eqref{c2} for the canonical suffix chain family $\canonFam$ for $\I[\late]$ we
	then generated a new solution $(\bar{U},\bar{x})$ such that
	\begin{equation*}
	\sum_{j\notin \late}w_j\bar{U}_j + \sum_{j\in \late}w_j \le \gamma \log k \sum_jw_jU_j.
	\end{equation*}
	We then rounded our solution $(\bar{U},\bar{x})$ to produce the
	integral solution $(\hat{U},\hat{x})$ to the modified instance.
	
	Now, by combining Theorem~\ref{thm:sched_good} and
	Lemma~\ref{lem:chernoff}, we can see that with probability at least $0.7$
	we get a feasible schedule whose cost is at most
	$\sum_{j\notin \late}w_j\hat{U}_j$ in the modified instance, while setting the
	parameter $\gamma=4$.  Consider how the cost of this schedule can
	change between the modified instance and the original instance:
	since our modification was precisely to set the deadline of every
	job $j\in \late$ to $\Gamma$, we know that the cost of our schedule in
	the original setting can be at most
	$
	\sum_{i\notin \late}w_j\hat{U}_j+\sum_{j\in \late}w_j.
	$
	
	Now, by Lemma~\ref{lem:exp_cost}, we have that $\E[\sum_{i\notin
		\late}w_j\hat{U}_j]=\sum_{i\notin\late}w_j\bar{U}_j$.  Thus,
	Markov's Inequality gives us that
	$\sum_{i\notin\late}w_j\hat{U}_j\le2\sum_{i\notin\late}w_j\bar{U}_j$
	with probability at least $1/2$; taking a union bound over this and
	our probability of our rounding procedure producing a feasible
	schedule, we can conclude that with probability at least $0.2$, we
	produce a feasible schedule the cost of which in the original
	instance is at most
	\begin{equation*}
	2\sum_{i\notin \late}w_j\bar{U}_j+\sum_{j\in \late}w_j
	\le 2\cdot 4 \log k \cdot \sum_jw_jU_j,
	\end{equation*}
	exactly as claimed.
\end{proof}}

\section{Solving LP \eqref{p}}
\label{sec:projection}

\ifconf{
In this section we show that \eqref{p} has a  {\em compact}
reformulation with a polynomial number of variables when precedence
constraints are chain-like.}

\iffull{
While we have shown how we can express the $\pdls$ problem as an IP in
Section~\ref{sec:ex_ip}, and how we can round solutions to a weakened
LP relaxation in Section~\ref{sec:ex_ip_rounding}, an important step
in the process remains unspecified: how do we find a solution $(U,x)$
to round?  This is especially problematic as the LP has an exponential
(in $n$) number of both variables {\em and} constraints, and it is not
clear how to solve such LPs in general. In this section, we show that
\eqref{p} has a compact reformulation when precedences are chain-like.}

\iffull{\subsection{An IP formulation with polynomial number of variables}}

Using the specific shape of precedences, we show how the important
suffix-chain structure of postponed jobs can be captured more
compactly. This allows us to reduce the number of variables
drastically while slightly increasing the number of
constraints. Roughly speaking, our new LP {\em decouples} decisions on
job-postponement between the layers in $[k]$.

The new IP has a binary variable $x^i_j$ for every job $j \in [n]$ and
for all deadlines $D_i \in \D$. In a solution $x^i_j=1$ if job $j$ and
all of its successors are executed after deadline $D_i$ and all of job
$j$'s predecessors are executed before deadline $D_i$.  We can see
that our definition of the variables $x_j^i$ ensures the desired
suffix structure on every path $P\in\P$ (see Observation~\ref{obs1});
so we need only add constraints to ensure that the chosen suffixes for
a path $P$ form a chain (in the sense of Observation~\ref{obs2}).  

\para{\eqref{d1} At most one suffix of postponed jobs per path per layer}
We want to choose at most one suffix of jobs to
defer for each path $P\in\P$ on every layer $i\in[k]$. This yields the
following constraint corresponding to~\eqref{c1}:
\begin{equation}\label{d1}\tag{D1}
  \sum_{j \in P} x^i_j \leq 1 \quad \forall P \in \P, i \in [k]. 
\end{equation}

\para{\eqref{d2} Deferring sufficiently many jobs} 
As in our previous IP, our new formulation has a constraint for each
suffix chain family $\suffChainFam \in \S$ and each layer $i \in [k]$. As before,
suppose that we were looking for a schedule that defers the jobs in
$\suffChainFam^P_i$ past deadline $D_i$ for all $P \in \P$, and for all $i \in
[k]$. Then define $\Theta^{i,\suffChainFam}$ as in \eqref{thetais}, and let
\[
p^{i,\suffChainFam}_j := \min \left\{ \sum_{j' : j \preceq j', j' \not\in \suffChainFam^P_i} p_{j'},
\Theta^{i,\suffChainFam} \right\},
\]
for every job $j\in[n]$.  The new constraint corresponding
to~\eqref{c2} is now:
\begin{equation}\label{d2}\tag{D2}
\sum_{P \in \P} \sum_{j \in P\setminus \suffChainFam^P_i} p^{i,\suffChainFam}_j x^i_j \geq
\Theta^{i,\suffChainFam} \quad \forall i \in [k], \, \forall \suffChainFam \in \S,
\end{equation}
which enforces that the total (capped) weight of jobs deferred past
$D_i$, beyond what is deferred by $\suffChainFam$ is sufficiently large.

\para{\eqref{d3} Chain structure of postponed suffixes}
We need additional constraints to ensure our new IP chooses suffixes of jobs to defer that exhibit the chain structure that characterizes feasible schedules.  Consider
$i,i' \in [k]$, with $i<i'$, and let $j \in [n]$ be a job whose
execution is postponed until after time $D_{i'}$. Then as previously
observed, its execution must also be postponed until after deadline $D_i$,
by the ordering of deadlines. We capture this with the following
family of constraints:
\begin{equation}\label{d3}\tag{D3}
\sum_{j' :j'\preceq j} x^{i+1}_{j'} \leq \sum_{j': j' \preceq j}
x^{i}_{j'} \quad \forall P \in \P, \forall j \in P, \forall i \in
[k-1].
\end{equation}

We can now state the entire IP. As with~\eqref{p}, we introduce a binary
variable $U_j$ that takes value $1$ if job $j \in [n]$ is postponed
past its deadline $d_j$. For a given job $j\in[n]$, we can define this
variable in terms of the variables in our new IP as follows.  Let
 $d_j=D_{i(j)}$. Then job $j$ is postponed if it is part of a
chosen suffix for some layer $i \geq i(j)$. We can therefore set
\[ U_j := \max_{i \geq i(j)} \sum_{j' \preceq j} x^i_{j'} = 
\sum_{j' \preceq j} x^{i(j)}_{j'}, \]
where the equality follows from \eqref{d3}. The standard LP relaxation
of the IP is:
\begin{equation} \label{p2} \tag{P2}
  \min \left\{ \sum_{j \in [n]} w_jU_j \,:\, \eqref{d1}, \eqref{d2},
  \eqref{d3}, x \geq 0 \right\}. 
\end{equation}

In the case of chain-like precedences, we are able to obtain an
efficient, objective-value preserving, and invertible map between
fractional solutions to \eqref{p2} and fractional {\em cross-free}
solutions to \eqref{p}.

\begin{theorem}
  \label{thm:project}
  Consider
  $\{x_{\suffChain}\}_{\suffChain\in\S^P,P\in\P}$, with $x \ge 0$, and let 
\begin{equation}
  \label{eqn:project}
  \tilde{x}_j^i=\sum_{{\suffChain\in\S^P:\suffChain_i=\suffix{j}}}x_{\suffChain},
\end{equation}
for all $i\in[k]$, $P\in\P$, and $j\in P$. Then:
$(i)$ $\tilde x$ satisfies  condition~\eqref{d3};
$(ii)$ $\tilde x$ satisfies conditions~\eqref{d1}
if and only if $x$ satisfies conditions~\eqref{c1}; 
$(iii)$ for any $\suffChainFam \in \mathcal S$, 
$\tilde x$ satisfies \eqref{d2} for $\suffChainFam$ if and only if 
$x$ satisfies \eqref{c2} for family $\suffChainFam$.
Furthermore, the objective value of $x$ in \eqref{p} equals that of
$\tilde{x}$ in \eqref{p2}. 
\end{theorem}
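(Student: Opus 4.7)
The plan is to verify each of the four claims by direct algebraic manipulation centered on a single key identity: for every $P \in \P$, $j \in P$, and $i \in [k]$,
\begin{equation*}
\sum_{j' \preceq j} \tilde{x}_{j'}^i \;=\; \sum_{\suffChain \in \S^P \,:\, j \in \suffChain_i} x_\suffChain,
\end{equation*}
which holds because the nonempty suffixes of $P$ are in bijection with the jobs of $P$ via $j' \leftrightarrow \suffix{j'}$, and $j \in \suffix{j'}$ exactly when $j' \preceq j$; hence each chain $\suffChain$ with $j \in \suffChain_i$ contributes to the right-hand side via the unique $j'$ with $\suffChain_i = \suffix{j'}$. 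Part (i) is immediate from this: since $\suffChain_{i+1} \subseteq \suffChain_i$ for every $\suffChain \in \S^P$, the set of chains with $j \in \suffChain_{i+1}$ is contained in the set with $j \in \suffChain_i$, so \eqref{d3} follows termwise. Part (ii) follows by summing $\tilde{x}_j^i$ over $j \in P$, which yields $\sum_{\suffChain : \suffChain_i \neq \emptyset} x_\suffChain$; under the convention that the ``all-empty'' chain is treated as slack in \eqref{c1}, this matches $\sum_\suffChain x_\suffChain$ and gives the claimed equivalence with \eqref{d1}.

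For part (iii), the key observation is that whenever $\suffChain_i = \suffix{j}$, we have $\suffChain_i \setminus \suffChainFam^P_i = \{j' \in P : j \preceq j',\ j' \notin \suffChainFam^P_i\}$, so the capped coefficients match exactly: $p^{i,\suffChainFam}_\suffChain = p^{i,\suffChainFam}_j$. Moreover, when $j \in \suffChainFam^P_i$ we have $\suffix{j} \subseteq \suffChainFam^P_i$ and so $p^{i,\suffChainFam}_j = 0$, while if $\suffChain_i = \emptyset$ then $p^{i,\suffChainFam}_\suffChain = 0$ as well. Interchanging the sums over $j$ and $\suffChain$ in the left-hand side of \eqref{c2} and invoking the definition of $\tilde{x}$ yields
\begin{equation*}
\sum_{P \in \P} \sum_{\suffChain \in \S^P} p^{i,\suffChainFam}_\suffChain x_\suffChain \;=\; \sum_{P \in \P} \sum_{j \in P \setminus \suffChainFam^P_i} p^{i,\suffChainFam}_j \tilde{x}_j^i,
\end{equation*}
so \eqref{c2} and \eqref{d2} coincide as inequalities for every pair $(i, \suffChainFam)$, not merely implying one another.

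Finally, for the objective equality I would expand $w_\suffChain = \sum_{j \in P \,:\, j \in \suffChain_{i(j)}} w_j$ in the objective of \eqref{p}, swap the order of summation over $j$ and $\suffChain$, and apply the key identity with $i = i(j)$ to recognize the inner sum as exactly $U_j = \sum_{j' \preceq j} \tilde{x}_{j'}^{i(j)}$ from \eqref{p2}. The main obstacle is not the algebra, which is essentially mechanical, but ensuring that the implicit handling of empty suffixes is consistent between the two formulations so that the ``if and only if'' claims in parts (ii) and (iii) are literally true; the capping of processing times built into the definitions of $p^{i,\suffChainFam}_\suffChain$ and $p^{i,\suffChainFam}_j$ is precisely what forces the vanishing of coefficients in the degenerate cases and keeps the correspondence exact.
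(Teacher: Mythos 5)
Your proof is correct and follows essentially the same route as the paper's: you establish the identity $\sum_{j'\preceq j}\tilde{x}^i_{j'}=\sum_{\suffChain\in\S^P:j\in\suffChain_i}x_\suffChain$, use the chain property $\suffChain_{i+1}\subseteq\suffChain_i$ for (i), sum over $j\in P$ for (ii), match coefficients $p^{i,\suffChainFam}_\suffChain=p^{i,\suffChainFam}_j$ and swap summations for (iii) and the objective. One small point where you are actually more careful than the paper: in part (ii) you correctly observe that $\sum_{j\in P}\tilde{x}^i_j=\sum_{\suffChain:\suffChain_i\neq\emptyset}x_\suffChain$, which differs from $\sum_{\suffChain\in\S^P}x_\suffChain$ by the mass on the all-empty chain, so the literal ``only if'' direction can fail if that chain carries positive weight; the paper silently asserts equality. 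You flag this and resolve it via a convention, which is reasonable but worth stating as an assumption on the support rather than a ``convention on~(C1)''---note that the subsequent lifting map (Theorem~\ref{thm:lift}) never produces mass on the all-empty chain, so the direction that actually matters downstream is the ``if'' direction, which holds unconditionally.
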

\iffull{\begin{proof} 
  We begin by showing that equation~\eqref{eqn:project} implies that
  $\tilde{x}$ satisfies condition~\eqref{d3}. Fix $P\in\P$, $j\in P$,
  and $i\in[k-1]$, and observe that
  \begin{equation*}
    \sum_{j' :j'\preceq j} \tilde{x}^{i+1}_{j'} 
    = 
    \sum_{j' :j'\preceq j} \sum_{\substack{\suffChain\in{\S^P}:\\\suffChain_{i+1}=\suffix{j'}}}x_{\suffChain} 
    =
    \sum_{{\suffChain\in{\S^P}:j\in\suffChain_{i+1}}}x_{\suffChain}.
  \end{equation*}
  If we apply the same transformation to $\sum_{j': j' \preceq j}
  \tilde{x}^{i}_{j'}$, we see that condition~\eqref{d3} is equivalent to
  \begin{equation}\label{eq:d3equiv}
    \sum_{{\suffChain\in{\S^P}:j\in\suffChain_{i+1}}}x_{\suffChain}
    \le
    \sum_{{\suffChain\in{\S^P}:j\in\suffChain_{i}}}x_{\suffChain}.
  \end{equation}
  Since every $\suffChain\in\S^P$ is a chain we have
  $\suffChain_{i+1}\subseteq\suffChain_i$, and thus every summand on the
  left side of the inequality also appears on the right.  Since $x
  \geq \0$, it follows that both inequality~\eqref{eq:d3equiv} and constraint~\eqref{d3} hold.

  To see that (ii) holds, we focus on path $P \in \P$, and $i \in
  [k]$, and observe that
  \begin{equation*}
    \sum_{j \in P} \tilde{x}^i_j 
    = \sum_{j \in P}\sum_{\substack{\suffChain\in\S^P:\\\suffChain_i=\suffix{j}}}x_{\suffChain}
    = \sum_{\suffChain \in \S^P} x_{\suffChain}.
  \end{equation*}
  It immediately follows that $\tilde{x}$ satisfies \eqref{d1} iff $x$
  satisfies \eqref{c1}. 
  
  Next, we show that constraint (C2) holds for $\tilde{x}$ if and only
  if constraint (D2) holds for $x$. Consider a chain family $\suffChainFam$ on $\P$,
  and a layer $i\in[k]$. For $P\in\P$ and $j\in
  P$, we can see that for any $\suffChain\in\S^P$ such that $\suffChain_i=\suffix{j}$ we have 
  \begin{equation*}
    p^{i,\suffChainFam}_j 
    = \min \left\{ \sum_{j' : j \preceq j', j' \not\in \suffChainFam^P_i} p_{j'},\Theta^{i,\suffChainFam}\right\}
    =p^{i,\suffChainFam}_{\suffChain}.
  \end{equation*}
  If we apply~\eqref{eqn:project} followed by the above equality, we get
  that
  \begin{align*}
 \sum_{P \in \P} \sum_{j \in P\setminus \suffChainFam^P_i} p^{i,\suffChainFam}_j \tilde{x}^i_j 
 &=\sum_{P \in \P} \sum_{j \in P\setminus \suffChainFam^P_i} p^{i,\suffChainFam}_j\sum_{\substack{\suffChain\in\S^P:\\\suffChain_i=\suffix{j}}}x_{\suffChain}  \\
 &=\sum_{P \in \P} \sum_{j \in P\setminus \suffChainFam^P_i}\sum_{\substack{\suffChain\in\S^P:\\\suffChain_i=\suffix{j}}} p^{i,\suffChainFam}_{\suffChain}x_{\suffChain}  \\
 &=\sum_{P \in \P} \sum_{\suffChain\in\S^P} p^{i,\suffChainFam}_{\suffChain}x_{\suffChain},
  \end{align*}
  where the final equality follows by combining the inner two
  summations and observing that $p_{\suffChain}^{i,\suffChainFam}=0$ whenever
  $\suffChainFam_i^P\supseteq\suffChain_i$.  Thus we have that $\tilde{x}$ satisfies
  the constraints in~\eqref{d2} corresponding to $\suffChainFam$ if and only if
  $x$ satisfies the constraints in~\eqref{c2} corresponding to $\suffChainFam$.
  
  Finally, fix some $P\in\P$.  Using \eqref{eqn:project} and the definition of
  $\tilde{U}_j$, we see that
  \begin{equation*}
    \sum_{j\in P}w_j\tilde{U}_j
    =\sum_{j\in P}\sum_{j' \preceq j} w_j\tilde{x}^{i(j)}_{j'}
    =\sum_{j\in P}\sum_{j' \preceq j}\sum_{\substack{\suffChain\in\S^P:\\\suffChain_{i(j)}=\suffix{j'}}}w_jx_{\suffChain}
    =\sum_{j\in P}\sum_{\substack{\suffChain\in\S^P:\\j\in\suffChain_{i(j)}}}w_jx_{\suffChain},
  \end{equation*}
  where the final equality follows simply by combining the two inner
  summations.  Note, however, that if we change the order of the two
  summations on the right-hand side we obtain
  \begin{equation*}
    \sum_{j \in P}w_j\tilde{U}_j
    =\sum_{\suffChain\in\S^P}\sum_{\substack{j\in P:\\j\in\suffChain_{i(j)}}}w_jx_{\suffChain}
    =\sum_{\suffChain\in\S^P}w_{\suffChain}x_{\suffChain}.
  \end{equation*}
  Summing the above over $P\in\P$, it follows that $\tilde{x}$ and $x$ have the same objective
  values in~\eqref{p2} and~\eqref{p}, respectively. 
\end{proof}}

The above theorem \ifconf{(see \cite{full} for its proof)} immediately implies a natural algorithm for
efficiently constructing a solution $\tilde{x}$ to~\eqref{p2} from a
given solution $x$ to~\eqref{p}. It fails to show how to perform the
inverse operation, however. We now provide the missing piece \ifconf{(once
more, see \cite{full} for its proof)}.

\begin{theorem}
  \label{thm:lift}
  Given a solution $\tilde{x}$ to \eqref{p2}, we can efficiently
  construct a cross-free solution $x$ to \eqref{p} that  satisfies condition~\eqref{eqn:project}
  from Theorem~\ref{thm:project}.
\end{theorem}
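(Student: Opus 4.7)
The plan is to construct $x$ by a quantile coupling of the $k$ layers, applied independently on each path $P \in \P$. Fix such a $P$ and let $j_1 \prec j_2 \prec \dots \prec j_m$ be its jobs listed in precedence order, where $m := \abs{P}$. For $i \in [k]$ and $l \in \{0,1,\dots,m\}$, define the cumulative quantity $F^P_i(l) := \sum_{l' \le l}\tilde{x}^i_{j_{l'}}$, with $F^P_i(0) = 0$. Constraint~\eqref{d1} gives $F^P_i(m) \le 1$, while constraint~\eqref{d3} specialized to $j = j_l$, combined with the identity $\{j' : j' \preceq j_l\} = \{j_1,\dots,j_l\}$, yields the monotonicity $F^P_{i+1}(l) \le F^P_i(l)$ for every $l$ and every $i < k$.

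Introduce a single scalar $u \in (0,1]$ to couple the $k$ layers. For each $i \in [k]$, let $\suffChain^P(u)_i := \suffix{j_l}$ where $l$ is the unique index in $[m]$ with $u \in (F^P_i(l-1), F^P_i(l)]$, and the empty suffix when $u > F^P_i(m)$. The monotonicity $F^P_{i+1} \le F^P_i$ immediately forces $\suffChain^P(u)_i \supseteq \suffChain^P(u)_{i+1}$, so $\suffChain^P(u) \in \S^P$ is a legitimate suffix chain. Set
\[
x_{\suffChain} := \abs{\{u \in (0,1] : \suffChain^P(u) = \suffChain\}}
\]
for each $\suffChain \in \S^P$ (where $\abs{\cdot}$ is Lebesgue measure), and $0$ on any chain outside the range of $\suffChain^P(\cdot)$. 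The map $u \mapsto \suffChain^P(u)$ is piecewise constant with breakpoints contained in $\{F^P_i(l) : i \in [k], l \in [m]\}$, so its range has size at most $km+1$, the support of $x$ is polynomial, and everything can be written down in polynomial time.

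Two checks remain. For~\eqref{eqn:project}: for fixed $P$, $i$, and $j = j_l$, the set $\{u : \suffChain^P(u)_i = \suffix{j_l}\}$ is by definition the half-open interval $(F^P_i(l-1), F^P_i(l)]$, whose measure is $\tilde{x}^i_{j_l}$, and summing $x_{\suffChain}$ over $\suffChain \in \S^P$ with $\suffChain_i = \suffix{j_l}$ recovers this value. For cross-freeness: if $u < u'$ and $\suffChain^P(u)_i = \suffix{j_l}$, then $F^P_i(l-1) < u < u'$, so the index $l''$ with $u' \in (F^P_i(l''-1), F^P_i(l'')]$ satisfies $l'' \ge l$, whence $\suffChain^P(u')_i = \suffix{j_{l''}} \subseteq \suffix{j_l} = \suffChain^P(u)_i$ (the transition to the empty suffix at the top end is handled in the same way). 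Consequently $\suffChain^P(u) \preceq \suffChain^P(u')$, so the support of $x$ on each path is totally ordered by $\preceq$ and therefore cross-free; since cross-freeness is defined per path and distinct paths in $\P$ share no jobs, this is enough. The only conceptual step is recognizing~\eqref{d3} as precisely the stochastic-dominance condition under which the quantile coupling is well-defined, and beyond that I do not expect any obstacle.
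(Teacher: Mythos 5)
Your proof is correct and takes essentially the same approach as the paper's: both construct the suffix chain $\suffChain(\alpha)$ (or $\suffChain^P(u)$) as the level set of the cumulative sums $\sum_{j'\preceq j}\tilde{x}^i_{j'}$, use constraint~\eqref{d3} to guarantee the chain structure across layers and the monotonicity in $\alpha$ to guarantee cross-freeness, and recover $\tilde{x}^i_j$ as the measure of the $\alpha$-interval on which $\suffChain_i(\alpha)=\suffix{j}$. The only differences are notational (cumulative $F^P_i$ and Lebesgue measure on $(0,1]$ versus the paper's direct set definition and $\sup-\inf$ on $[0,1]$); the mathematical content, including the polynomial bound on the support size, is identical.
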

\iffull{\begin{proof} 
  We begin by constructing the collection of suffix chains that lie in
  the support of our claimed solution $x$ for~\eqref{p}; we then
  describe the values $x$ associates with each of the suffix chains in
  this collection; and finally we argue that $x$ satisfies the claimed
  properties.

  Consider a path $P\in\P$, $i \in [k]$ and $\alpha \in [0,1]$. Let
  \begin{equation}
    \label{eqn:barS_alpha}
    \suffChain_i(\alpha)
    :=
    \Bigg\{ 
    j \in P : 
    \sum_{j'\preceq j}\tilde{x}_{j'}^i \ge \alpha 
    \Bigg\}, 
  \end{equation}
  and note that the non-negativity of $\tilde{x}$ implies that
  $\suffChain_i(\alpha)$ is a suffix of $P$.
  Furthermore, constraint \eqref{d3} implies that 
  \[
  \suffChain_1(\alpha)\supseteq\suffChain_2(\alpha)\supseteq\dots\supseteq\suffChain_k(\alpha),\] 
  and $\suffChain(\alpha)$ is therefore a valid suffix chain for $P$.  
  Furthermore, for $0\le\alpha\le\alpha'\le1$, and $i \in [k]$ one
  easily sees that $\suffChain_i(\alpha) \supseteq \suffChain_i(\alpha')$,
  and hence $\suffChain(\alpha)$ and $\suffChain(\alpha')$ do not cross. 
  With this we now easily bound the number of distinct members of the 
  family $\{\suffChain(\alpha)\}_{\alpha}$. Consider increasing $\alpha$
  continuously from $0$ to $1$, and count the number of times changes
  occur. For each $i\in[k]$, we know
  that $\suffChain_i(\alpha)$ can only take on $\abs{P}+1$ different
  values. Since
  $\alpha\le\alpha'$ implies $\suffChain(\alpha)\preceq\suffChain(\alpha')$,
  it follows that $\suffChain_i(\alpha)$ becomes smaller as
  $\alpha$ increases.  Thus, as we increase $\alpha$ from $0$ to $1$,
  $\suffChain_i(\alpha)$ can only change at most $\abs{P}$ times.  Since
  any time $\suffChain(\alpha)$ changes, at least one $\suffChain_i(\alpha)$
  must change, we may conclude that $\suffChain(\alpha)$ takes on at most
  $k\abs{P}+1$ distinct values.

  Now we show how to construct a  solution $x$
  for~\eqref{p} with support $\{\suffChain(\alpha)\}_{\alpha}$.
  For suffix chain $\suffChain\in\S^P$ we let
  \begin{equation}
    \label{eqn:lift_probs}
    x_{\suffChain}
    :=
    \sup\{\alpha\in[0,1]:\suffChain(\alpha)=\suffChain\}
    -
    \inf\{\alpha\in[0,1]:\suffChain(\alpha)=\suffChain\};
  \end{equation}
  if $\suffChain=\suffChain(\alpha)$ for some $\alpha \in [0,1]$, and we let 
  $x_{\suffChain}:=0$ otherwise.
  
  Finally, we show that~\eqref{eqn:project} holds, i.e.~that
  \begin{equation*}
    \tilde{x}_j^i=\sum_{{\suffChain\in\S^P:\suffChain_i=\suffix{j}}}x_{\suffChain}.
  \end{equation*}
  Fix $i\in[k]$ and $j\in P$, and consider the sum
  on the right hand side of the above equality.  Recall that we
  previously observed that $\alpha\le\alpha'$ implies
  $\suffChain(\alpha)\preceq\suffChain(\alpha')$; combining this with
  equation~\eqref{eqn:lift_probs}, we can easily see that the sum we
  care about will, in fact, be a telescoping sum that simplifies to
  \begin{equation*}
    \sum_{{\suffChain\in\S^P:\suffChain_i=\suffix{j}}}x_{\suffChain}
    =
    \sup\{\alpha:\suffChain_i(\alpha)=\suffix{j}\}
    -
    \inf\{\alpha:\suffChain_i(\alpha)=\suffix{j}\}.
  \end{equation*}
  By our definition of $\suffChain_i(\alpha)$, however, we can see that
  $j\in\suffChain_i(\alpha)$ if and only if $\sum_{j'\preceq j}x_{j'}^i
  \ge \alpha$.  Further, $\suffChain_i(\alpha)$ can only include elements
  strictly preceding $j$ in $P$ if $\sum_{j'\prec j}x_{j'}^i \ge
  \alpha$.  Thus, we may conclude that 
  \begin{equation*}
    \sup\{\alpha:\suffChain_i(\alpha)=\suffix{j}\}
    -
    \inf\{\alpha:\suffChain_i(\alpha)=\suffix{j}\}
    =
    \sum_{j'\preceq j}\tilde{x}_{j'}^i
    -
    \sum_{j'\prec j}\tilde{x}_{j'}^i
    =
    \tilde{x}_j^i,
  \end{equation*}
  exactly as required.
\end{proof}}

\ifconf{ LP \eqref{p2} has a polynomial number of variables, but an
  exponential number of constraints. A natural strategy to solve such
  an LP would involve the {\em Ellipsoid method}
  \cite{GLS81} that allows us to reduce the problem of
  solving \eqref{p2} to that of efficiently {\em separating} an
  infeasible point from \eqref{p2}. For a candidate solution
  $(U,x)$ to \eqref{p2}, we need to efficiently decide 
  whether it is feasible, and if not, return a violated inequality. We
  do not know how to solve the separation problem for \eqref{p2}, and
  it is in fact not known how to separate KC inequalities efficiently
  in general.
 
  We will overcome this issue following the methodology proposed in
  \cite{CF+00} and relying on a {\em relaxed separation} oracle.  For
  this, we consider a relaxation (P2') of \eqref{p2} where we include
  the Knapsack cover constraint \eqref{d2} only for a certain
  polynomial-sized collection \S' of suffix chain families. Theorems
  \ref{thm:project} and \ref{thm:lift} apply as well if \eqref{p2} is
  replaced by (P2'). Furthermore, we can show that family \S' can be
  chosen such that the solution $x$ for \eqref{p} satisfies conditions
  (a) and (b) from Section \ref{sec:ex_ip_rounding}, and hence can be
  rounded into a low-penalty feasible schedule with constant
  probability. Details are provided in \cite{full}.
}

\iffull{
\subsection{Solving the relaxation}
\label{subsec:ellipsoid}

The usual way to solve linear programs with an exponential number of
constraints is to employ the {\em ellipsoid} method~\cite{GLS81}. 
The method famously allows us to
reduce the problem of solving \eqref{p2} to that of efficiently
{\em separating} an infeasible point from \eqref{p2}.  For a given
candidate solution $(U,x)$ to \eqref{p2}, it suffices to decide (in
polynomial time) whether it is feasible, and if not, return a violated
inequality. We do not know how to solve the separation problem for
\eqref{p2}, and it is in fact not known how to separate KC
inequalities efficiently in general. 

We will overcome this issue following the methodology proposed in
\cite{CF+00} and relying on a {\em relaxed separation} oracle.  For
this, we consider a relaxation (P2') of \eqref{p2} where we replace
constraints \eqref{d2} by
\begin{equation}\label{d2p}\tag{D2'}
\sum_{P \in \P} \sum_{j \in P\setminus \suffChainFam^P_i} p^{i,\suffChainFam}_j x^i_j \geq
\Theta^{i,\suffChainFam} \quad \forall i \in [k], \, \forall \suffChainFam \in \S'.
\end{equation}
for a subset $\S' \subseteq \S$ that initially contains only
the {\em canonical suffix chain family} $\canonFam$ for 
instance $\I$.

In an iteration, we apply the ellipsoid method to (P2'), and this
generates a point $(\tilde U, \tilde x)$ that is optimal and feasible
for (P2'), but not necessarily feasible for \eqref{p2}. Using the
procedure developed in Section \ref{sec:ex_ip_rounding} we now
attempt to map $(\tilde U,\tilde x)$ to a cross-free solution to
\eqref{p}. Specifically, for parameter $\gamma$ chosen there, we
let  
\[ \late = \{ j \,:\, \tilde U_j \geq 1/(\gamma \log k)\}, \] 
be the collection of jobs whose indicator variable is large; note that
$U$ remains constant under the correspondence between solutions
to~\eqref{p} and~\eqref{p2} as given in Theorem~\ref{thm:project}, and
so this is precisely the set $\late$ used in Section
\ref{sec:ex_ip_rounding}.  As before, we imagine increasing the
deadline for the jobs in $\late$ to $\Gamma$ to produce a modified
instance $\I[\late]$ of the \pdls.

We now check whether $(\tilde U, \tilde x)$ violates \eqref{d2} for
the canonical suffix chain family \canonFam\ of instance
$\I[\late]$. In this case, we add \canonFam\ to $\S'$, and
recurse. Otherwise, we know that we can apply the lifting operation of
Theorem~\ref{thm:lift} to obtain a new candidate solution $(U,x)$
for~\eqref{p} whose support is cross-free.  By
Theorem~\ref{thm:project} we can see that, while $(U,x)$ may not be
feasible for~\eqref{p}, it {\em is} feasible for the relaxation~(P')
used in our rounding procedure, and further has objective value no
larger than the optimum of \eqref{p}.  Thus, applying the rounding procedure
of Section~\ref{sec:ex_ip_rounding} yields a solution to our instance
of $\pdls$ with penalty $O(\gamma\log k)$ times the optimal with constant
probability.

In other words, the process described above consists of applying the
ellipsoid method for solving \eqref{p2} with a separation oracle that
might fail in providing a violated inequality: we stop the algorithm
at the first moment that our separation oracle fails, so as to
guarantee that the number of iterations (and therefore the size of
$\S'$) is anyway polynomially bounded in the number of
variables. The solution output at the end might be infeasible for
\eqref{p2}, but (as discussed above) provides a lower bound on its optimal
value and is feasible for the relaxation (P2'); this ensures we can lift it into a solution for~\eqref{p} that has cross-free support and is feasible
for the relaxation (P'), which are precisely the  properties required by our
rounding procedure.

}

\bibliography{dl_sched}
\bibliographystyle{plain}

\iffull{
  \appendix
  \section{Constant number of paths}
  \label{sec:constant}
  \newcommand{\suffVec}{V}
\newcommand{\nextJob}[1]{\min({#1})}
\newcommand{\rec}[1]{\operatorname{OPT}(#1)}
\newcommand{\cost}{\operatorname{cost}}
\newcommand{\suffVecNext}[1][]{{\suffVec}\setminus{#1}}

In this section, we present an algorithm for solving the
\pdls\ problem when chain-like precedence constraints can be modeled
by a constant number of paths, i.e, $\abs{\P}$ is a constant.  For
each path $P\in\P$, let $\suffVec^P$ be a suffix of path $P$, and let
$\suffVec$ be the vector formed by these suffixes.  Further, for a
given path $P$ and any suffix $\suffVec^P\neq\emptyset$, let $\nextJob{\suffVec^P}$ be the first job in suffix $\suffVec^P$ according to the
precedence order $\preceq$ on jobs, i.e.~ the job we would have to
schedule first among those in $\suffVec^P$.

Now, we are ready to design a dynamic program for solving our
problem. We define $\rec{\suffVec, t}$ to be the smallest overall
postponement cost we can incur on jobs in $\suffVec$, when we schedule
them beginning at time $t$ while respecting precedence
constraints. 
In order to compute the value of $\rec{\suffVec,t}$, we need to
consider how we can schedule the jobs in $\suffVec$.  In particular,
consider the {\em first} job we choose to schedule.  While we can
choose this first job from any path $P$ for which
$\suffVec^P\neq\emptyset$, the precedence constraints enforce that it
must always be the earliest job in $\suffVec^P$ i.e.~it must be
$\nextJob{\suffVec^P}$. Define $\cost(j,t)$ as 
\begin{equation*}
  \cost(j,t)
  =
  \begin{cases}
    w_j&\qquad\text{if }d_j<t+p_j\text{; and}\\
    0&\qquad\text{otherwise,}
  \end{cases}
\end{equation*}
to capture the cost of scheduling job $j$ at time $t$.  Then, we
can express $\rec{\suffVec,t}$ recursively as
\begin{equation}
\label{eq:dynamic}
\rec{\suffVec,t}
=\min_{P\in\P:\suffVec^P\neq\emptyset}\left(\cost(j,t) + \rec{\suffVecNext[j],t+p_j}\right),
\end{equation} 
where $j=\nextJob{\suffVec^P}$ and we use $\suffVecNext[j]$ to denote
the vector $\suffVec$ excluding job $j$, i.e.~we have
\begin{equation*}
  (\suffVecNext[j])^P
  =
  \begin{cases}
    \suffVec^P\setminus\{j\}&\qquad\text{if }j\in P\text{; and}\\
    \suffVec^P&\qquad\text{otherwise.}
  \end{cases}
\end{equation*}
Thus, if we abuse notation slightly and let $\P$ denote the vector of
suffix chains that includes every job on every path, and $\emptyset$
denote the vector taking the empty suffix on every path, we can see
that taking base cases of $\rec{\emptyset,t}=0$ for all $t$ and
computing $\rec{\P,0}$ yields precisely the quantity we want to
compute.

{\noindent \em Running time:} While it might appear that the above
recursion describes a dynamic program that runs in pseudo-polynomial
time, due to the second parameter, we note that the first parameter
always fully determines the second.  In particular, for any recursive
call $\rec{\suffVec,t}$ made while computing $\rec{\P,0}$, a simple
induction shows that we always have that
\begin{equation*}
  t = \sum_{P\in\P}\sum_{j\in P\setminus\suffVec^P}p_j.
\end{equation*}
Thus, the number of values we need to compute is bounding by the
number of possible vectors of suffix chains, which is
$\prod_{P\in\P}(\abs{P}+1)=O(n^{\abs{\P}})$.  Since our recurrence takes
the best among at most $\abs{\P}$ possibilities, we can immediately
conclude that we can compute every value of $\rec{\suffVec,t}$ of
interest in time $O(\abs{\P}n^{\abs{\P}})$.

  \section{Single Deadline}
  \label{sec:single_deadline}
  \newcommand{\nexttree}{\operatorname{next}}

In this section, we present a pseudo-polytime algorithm for solving
the pDLS problem in the case where all jobs face a single, common
deadline.  The algorithm we give addresses not just chain-like
precedence constraints, but the more general case of precedence
constraints forming a tree where jobs can have multiple predecessors
(but still no more than one successor).  This matches the setting of
Ibarra and Kim~\cite{IK78}, and highlights the critical role of
multiple, distinct deadlines in the pDLS problem: Lenstra and Rinnoy
Kan\cite{LR80} proved that the pDLS problem is strongly NP-hard even when
all jobs have unit processing time and deferral cost, and precedence
constraints are chain-like; with only a single, common deadline,
however, the existence of a psuedo-polytime algorithm implies that the
problem becomes only weakly NP-Hard, even in the more general case
where precedence constraints form trees.

We solve our problem by designing an appropriate dynamic program.
Before proceeding with the details, we begin by giving some intuition
and making necessary definitions.  Limiting all jobs to share a common
deadline $D$ forces the problem to closely resemble the knapsack
problem, and in fact our dynamic program follows in precisely that
vein.  With only a single deadline, our problem becomes one of
deciding which jobs we choose to include before that deadline, and
which we defer until after the deadline; in other words, we have a
certain amount of time before the deadline occurs, and we need to
decide what jobs we want to use that time on to minimize the cost of
deferred jobs (or, equivalently, maximize the cost of on-time jobs).
In the setting we consider, each job may have multiple predecessors
but only a single sucessor.  This implies that, if we consider any
tree present in the precedence constraints, if we schedule a job $j$
from that tree we must also schedule every job in the subtree rooted
at job $j$.  Thus, our task is to find a subforest of the forest of
precedence constraints.  If we focus on one tree, we can consider
whether or not to schedule the job at the root: if we choose to
schedule the root job, then we must schedule the entire tree; if we
choose not to schedule the root job, then we may decide independently
how (and if) to schedule jobs from the subtrees rooted at each of its
children.  In effect, we choose to either schedule the entire tree,
and then recurse on the remaining trees, or replace the tree with one
tree for each of the root's children, and then recurse on the modified
forest.

In order to make the above intuition concrete, we need the following
definitions.  Let the collection $[n]$ of jobs be indexed according to
a pre-order traversal of the precedence constraint forest, and for any
job $j$, let $T(j)$ denote the subtree rooted $j$, i.e.~all of $j$'s
predecessors.  Define $\nexttree(j)=\min \{j+1,j+1,\dots,n+1\}\setminus
T(j)$.  Note that this implies that every successor of $j$ will have a
strictly earlier index $j'<j$, and the predecessors of $j$ will be precisely those jobs with indexes in the range $\{j+1,j+2,\dots,\nexttree(j)-1\}$/.
Further, let
\begin{align*}
  W(j) &= \sum_{j'\in T(j)}w_{j'}\text{; and}\\
  P(j) &= \sum_{j'\in T(j)}p_{j'}
\end{align*}
denote the total penalty and total processing of all jobs in $T(j)$.
Recall that every job $j$ shares a single common deadline $d_j=D$.

We now describe our dynamic program.  Let $\rec{j,t}$ represent the
minimum penalty we can incur if we only have $t$ processing time in
which to schedule jobs $\{j, j+1,\dots,\}$.  In order to compute this
value, we need to decide whether or not to schedule job $j$ in the
time we have left before the deadline.  If we do schedule $j$, then we
must schedule all of the jobs in $T(j)$.  This uses up processing time
$P(j)$ but incurs no penalty, and we may then make independent
decisions about all remaining jobs.  If we do not schedule job $j$,
then we incur a penalty of $w_j$ but use no processing time, and then
may make independent decisions about all remaining jobs.  Thus, we may
express $\rec{j,t}$ via the recurrence
\begin{align*}
  \rec{j,t}
  &=
  \min\left\{\rec{\nexttree(j), t-P(j)}, w_j + \rec{j+1,t}\right\},
\intertext{with base cases of}
  \rec{j,t}
  &=
  \begin{cases}
    +\infty&\qquad\text{if }t < 0\text{; and}\\
    0 &\qquad\text{if }t\ge0\text{ and }j=n+1.
  \end{cases}
\end{align*}

From the above discussion, we can readily see that the value of
$\rec{1,D}$ will be precisely the minimum penalty that can be occured
when scheduling all of the jobs in $[n]$.

{\noindent \em Running time:} Since we will only need to compute
$\rec{j,t}$ for parameter settings $j\in[n+1]$ and $t\in[D]$, we can
see that our dynamic program will require at most $O(n \cdot D)$
distinct values to be calculated.  Since each calculation is simply a
minimum of at most two options, we can compute the optimal value
$rec{1,D}$ in time $O(n\cdot D)$.

  \section{Correlated rounding}
  \label{sec:correlated}
  In this section, we consider whether we can replace our {\em
  independent} rounding scheme with a {\em dependent} one to improve
our approximation ratio.  We follow an approach similar to that of
Carr et al.~\cite{CF+00} for the knapsack problem.  At a high level,
the key idea in the rounding scheme of~Carr et al.~is to try and
ensure that the possible solutions it might produce are as uniform in
size as possible.  In particular, they do this by ensuring that sets
of items in each potential knapsack solution have size profiles that
are as similar as possible.  In our setting, this would correspond
to trying to ensure that for every deadline $D_i$, the family of
suffixes we defer on the set $\P$ of paths has size that is as
uniform as possible; unfortunately, we are much more constrained
when selecting suffixes than we would be when choosing items for a
knapsack.

Specifically, there are two key difficulties.  First, since each
solution can only use a single suffix from each path, and different
paths can randomize over paths with very different size profiles, it
may be impossible to ensure that every solution has a similar size
profile.  For example, if every path but one defers a negligible
number of jobs past some deadline $D_i$, then we cannot make our
solutions any more uniform in size than the distribution on that single critical path.
Second, even if we can ensure that solutions defer sufficiently
uniform total size of jobs past a given deadline $D_i$, we need a way to
do this for all deadlines in $\D$ {\em simultaneously}; that is, we
need a way to ensure our approach makes consistent choices for every
deadline $D_i$.  This is problematic: even though we
work with suffix chains and so know that, on average, the length of the particular suffix
deferred on a given path decreases with the deadline $D_i$, the {\em rate} of
this decrease could differ greatly between paths. For example, one path $P$ might
defer much larger suffixes (on average) past some deadline $D_i$ than
every other path in $\P$; but if the size of suffixes deferred on $P$ past
deadline $D_{i+1}$ becomes much smaller, while that of suffixes deferred on other paths remains relatively
constant, the situation  becomes exactly the reverse for deadline $D_{i+1}$.  Since the
relative sizes of suffixes deferred on each path can change quite
dramatically between one deadline and the next, it becomes difficult
to devise a scheme that ensures consistent choices for all of the deadlines in $\D$.

It turns out that the first concern above is not difficult to deal with; it
is the second concern that causes difficulties.  When we only have two paths, however, we can overcome the second
concern as well.  In this section, we describe a correlated rounding scheme
for two paths that provides a $2$-approximation for the case of two
paths.  While we could achieve the same factor for two paths using a
naive approach, we are hopeful the technique we describe here can be
extended to more paths without the approximation factor increasing
linearly with the number of paths (as the naive approach's factor
would).

We now describe our correlated rounding scheme for the case where we only
have two paths, say $\P=\{\pOne,\pTwo\}$.  We follow the general outline as the rounding procedure of Section~\ref{sec:ex_ip_rounding}, with two major
changes.  First, we adjust the definition of the set $\late$,
replacing the filtering parameter $\gamma\log k$ by $2$.  Second, we
modify our rounding procedure to be correlated, rather than
independent, for the two chains $\pOne$ and $\pTwo$.  We discuss the two
changes in detail below.

Our first change in the rounding procedure is to adjust the filtering
parameter that splits jobs based on whether their $x_j$ values are
large or small.  In particular, we now consider $x_j$ to be large only
if it is at least $1/2$, rather than when it exceeds $1/\gamma\log k$
as in Section~\ref{sec:ex_ip_rounding}.  This means the set $\late$ of late jobs
defining our modified instance $\I[\late]$ is now
$
  \late = \{j : U_j \ge 1/2\}.
$

Our second change will be to adjust the distribution used in our
randomized rounding procedure.  Before, we rounded the variables for
each path $P\in\P$ independently; now, however, while our rounding
scheme will continue to induce the same marginal probability
distribution on each path, we modify our rounding process so that the
random choices we make on the two paths are strongly dependent on each
other.  Retaining the same marginal distributions ensures the analysis
of Section~\ref{sec:ex_ip_rounding} remains valid up until
Lemma~\ref{lem:chernoff}, with only minor changes to accommodate the
new definition of late jobs $\late$.  The major changes in our
approach occur from that point on: we can replace the concentration
result underlying that lemma with an averaging argument that
leverages the dependence structure we have introduced into our rounding
scheme (see Lemma~\ref{lem:chernoff_sub_corr}).  This new argument allows us to prove a much stronger
 approximation guarantee than we obtained for
independent rounding, which furthermore holds with certainty.  Before detailing how we obtain our new
guarantee, we will briefly review the initial steps of our rounding
procedure.  Since these initial steps only require that our new
rounding procedure induces the same marginals on each path, we defer
further details of the rounding scheme for now.

We begin by briefly recalling the overall structure of the rounding
procedure from Section~\ref{sec:ex_ip_rounding}, suitably modified for our new
definition of the set $\late$ of late jobs (see that section for full
details).  We start with an instance $\I$ of the \pdls, and a solution
$(U,x)$ for our linear program~\eqref{p}.  Then, we define a modified
instance $\I[\late]$ of \pdls in which the set $\late=\{j:U_j\ge1/2\}$
of late jobs all have their deadlines changed to be $\Gamma$ (so they
cannot be late in {\em any} schedule).  Finally, we focus on a
relaxation~(P') of~\eqref{p} in which the set of
knapsack constraints~\eqref{c2} is reduced to just the ones
corresponding to the canonical suffix chain family for the modified
instance $\I[\late]$.

We now describe the process we use to round our initial solution
$(U,x)$.  This process can be applied as long as $(U,x)$ both is feasible
for the relaxed program~(P') and has support
$\{\suffChain\in\S^P:x_{\suffChain}>0\}$ on each path $P\in\P$ that is cross-free;
recall that in Section~\ref{sec:projection}  we saw how to produce
exactly such a solution.  Given a solution $(U,x)$ with these properties, we
first modify it to produce a new solution $(\bar{U},\bar{x})$ where
each job $j\in\late$ is no longer late, and each job $j\notin\late$ is
late to twice the extent it is in $(U,x)$.  Formally, this means we
set $\bar{U}_j=2\cdot U_j$ if $j\notin\late$ and $\bar{U}_j=0$
otherwise.  We correspondingly set
\begin{equation*}
  \bar{x}_{\suffChain}=
  \begin{cases}
    2\cdot x_{\suffChain}&\qquad\text{if $\suffChain$ makes some job $j\notin\late$ late; and}\\
    0&\qquad\text{otherwise.}
  \end{cases}
\end{equation*}
Finally, we randomly round the fractional solution $(\bar{U},\bar{x})$
to produce an integral solution $(\hat{U},\hat{x})$, in  such a way that
$\Pr[\hat{x}_{\suffChain=1}]=\bar{x}_{\suffChain}$ for every path $P\in\P$ and every suffix
chain $\suffChain\in\S^P$.

The same arguments as presented in Section~\ref{sec:ex_ip_rounding} -- with only minor
adjustments -- give us the following critical properties for
$(\bar{U},\bar{x})$ and $(\hat{U},\hat{x})$.  The two lemmas below are
direct analogs of Lemmas~\ref{lem:boostfeas} and~\ref{lem:exp_cost}, respectively.  We state
them without proof, as they follow from the same arguments as those
given for their counterparts in Section~\ref{sec:ex_ip_rounding}; we refer the reader to
that section for details.
\begin{lemma}
  \label{lem:boostfeas_corr}
  $(\bar{U}, \bar{x})$ satisfies
  \begin{align}
    \sum_{\suffChain\in \S^P} \bar{x}_{\suffChain} & \leq
    1 \tag{$\overline{\mbox{C1}}$} \quad \forall i \in [k], \forall P
    \in \P\\
    \sum_{P \in \P} \sum_{\suffChain\in\S^P}  p^{i,\canonFam}_{\suffChain}\bar{x}_{\suffChain}
    & \geq
    2\cdot \Theta^{i,\canonFam} \quad \forall i \in [k],  \tag{$\overline{\mbox{C2}}$}
  \end{align}
  where $\canonFam$ is the canonical chain family defined for the modified
  instance $\I[\late]$ of \pdls.
\end{lemma}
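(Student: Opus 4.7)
(Proof plan for Lemma~\ref{lem:boostfeas_corr}.)
The plan is to closely mirror the argument used to establish Lemma~\ref{lem:boostfeas}, making only the modifications required by the new filtering threshold $1/2$ (in place of $1/(\gamma\log k)$) and the corresponding scaling factor $2$ (in place of $\gamma\log k$) used to define $(\bar U,\bar x)$ from $(U,x)$. Concretely, I would first reduce to a solution $(U^0,x^0)$ supported only on suffix chains that make at least one job $j\in[n]\setminus\late$ late, exactly as in Section~\ref{sec:ex_ip_rounding}; $(U^0,x^0)$ inherits cross-freeness and feasibility for (P') from $(U,x)$.

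For $\overline{\mbox{C1}}$, I would focus on a single path $P\in\P$ and its cross-free support $\T=\{\suffChain\in\S^P:x^0_\suffChain>0\}$. Because the support is cross-free, it has a maximal element $\suffChain^*$; this chain makes some job $j\notin\late$ late by construction of $(U^0,x^0)$, and by maximality that same job $j$ is late in every $\suffChain\in\T$. Using the definition of $U_j$ and the fact that $j\notin\late$ means $U_j<1/2$, I obtain the analog of inequality~\eqref{eq:1}:
\[
\sum_{\suffChain\in\T}x^0_\suffChain
= \sum_{\suffChain\in\S^P:\, j\in\suffChain_{i(j)}} x^0_\suffChain
= U^0_j \le U_j < \tfrac{1}{2}.
\]
Multiplying by the scaling factor $2$ yields $\sum_{\suffChain\in\S^P}\bar x_\suffChain\le 1$, which is $\overline{\mbox{C1}}$.

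For $\overline{\mbox{C2}}$, I would use that $(U,x)$ is feasible for the relaxation (P'), so in particular the Knapsack-cover constraint~\eqref{c2p} for the canonical family $\canonFam$ of the modified instance $\I[\late]$ is satisfied:
\[
\sum_{P\in\P}\sum_{\suffChain\in\S^P} p^{i,\canonFam}_\suffChain\, x^0_\suffChain \ge \Theta^{i,\canonFam} \quad \forall i\in[k].
\]
(The restriction from $x$ to $x^0$ does not decrease the left-hand side of this KC-inequality, since suffix chains that make no job outside $\late$ late only defer jobs already covered by $\canonFam$, so their coefficients are zero under the capping.) Multiplying through by $2$ gives $\overline{\mbox{C2}}$ with the required right-hand side $2\cdot\Theta^{i,\canonFam}$.

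The only subtle point, and the one I would verify carefully, is the claim in the previous paragraph that setting $x^0_\suffChain=0$ on chains which only make jobs in $\late$ late does not destroy the KC-inequality; this uses the precise definition of $\canonFam$ for $\I[\late]$, namely that jobs in $\late$ have their deadlines pushed to $\Gamma$ and thus sit inside every $\canonFam^P_i$. Everything else is bookkeeping with the new constant $2$, so I do not expect any genuine obstacle beyond this verification.
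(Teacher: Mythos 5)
Your proof mirrors the paper's approach: the paper states Lemma~\ref{lem:boostfeas_corr} without proof, explicitly noting it follows from the same reasoning as Lemma~\ref{lem:boostfeas} with the filtering threshold $1/(\gamma\log k)$ and scaling factor $\gamma\log k$ replaced by $1/2$ and $2$, respectively, which is exactly what you do. Your careful verification of the subtle point — that passing from $x$ to $x^0$ preserves the KC-inequality~\eqref{c2p}, because any chain $\suffChain$ making no job outside $\late$ late satisfies $\suffChain_i \subseteq \canonFam^P_i$ and thus has $p^{i,\canonFam}_{\suffChain}=0$ — is correct and, if anything, makes explicit a step the paper leaves implicit in its assertion that $(U^0,x^0)$ satisfies condition (b).
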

\begin{lemma}
  \label{lem:exp_cost_corr}
  For all $j \not\in \late$, $E[\hat{U}_j]= \bar{U}_j$.
\end{lemma}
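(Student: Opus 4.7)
The plan is to mimic the proof of Lemma \ref{lem:exp_cost} essentially verbatim, since the only property of the rounding procedure actually used there is that it induces the correct marginal distribution on each path. As emphasized in the discussion leading up to Lemma \ref{lem:exp_cost_corr}, the new correlated rounding scheme is explicitly designed so that $\Pr[\hat{x}_{\suffChain}=1]=\bar{x}_{\suffChain}$ for every $P\in\P$ and every $\suffChain\in\S^P$; dependence is introduced only in the \emph{joint} distribution across the two paths, not in the per-path marginals.

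Concretely, I would fix $j\notin \late$ and let $P$ be the unique path containing $j$. By the definition of $\hat{U}_j$ used throughout the paper, namely
\[
\hat{U}_j \;=\; \sum_{\suffChain\in \S^P:\, j\in \suffChain_{i(j)}} \hat{x}_{\suffChain},
\]
linearity of expectation yields
\[
E[\hat{U}_j] \;=\; \sum_{\suffChain\in \S^P:\, j\in \suffChain_{i(j)}} \Pr[\hat{x}_{\suffChain}=1].
\]
Substituting the marginal identity $\Pr[\hat{x}_{\suffChain}=1]=\bar{x}_{\suffChain}$ and using the definition of $\bar{U}_j$ in terms of $\bar{x}$, the right-hand side collapses to $\bar{U}_j$, which is exactly what the lemma asserts.

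There is essentially no obstacle here: the correlated nature of the rounding never enters, since the quantity $\hat{U}_j$ depends only on the chain chosen for the single path $P$ containing $j$. The only bookkeeping point worth being careful about is the modified definition of $\bar{x}$ and $\bar{U}$ adopted for the two-path setting (scaling by $2$ on the support of chains that make some non-$\late$ job late, rather than by $\gamma\log k$), but this modification is already folded into the definitions before the lemma is stated, so it does not alter the computation above. Thus the proof is a one-line application of linearity of expectation together with marginal-preservation, and can safely be omitted or stated in the same terse form as the proof of Lemma \ref{lem:exp_cost}.
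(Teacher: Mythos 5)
Your proposal is correct and matches the paper exactly: the paper states Lemma~\ref{lem:exp_cost_corr} without proof, noting that it follows by the same argument as Lemma~\ref{lem:exp_cost}, which is precisely the reduction you make. The one point to keep in mind is that the marginal identity $\Pr[\hat{x}_{\suffChain}=1]=\bar{x}_{\suffChain}$ actually holds only for $\suffChain\neq\canonFam^P$ (cf.\ Lemma~\ref{lem:corr_marginals}, which gives $\Pr[\hat{x}_{\canonFam^P}=1]=1-\sum_{\suffChain'\neq\canonFam^P}\bar{x}_{\suffChain'}$ while $\bar{x}_{\canonFam^P}=0$), but since $\canonFam^P$ never defers a job $j\notin\late$ past its deadline, it does not appear in the sum defining $\hat{U}_j$ and your one-line computation goes through unchanged.
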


With the above two lemmas in hand, we are now ready to describe the
rounding scheme we use to obtain $(\hat{U},\hat{x})$ in detail, and
prove our approximation guarantee.

We round our fractional solution $(\bar{U},\bar{x})$ to the integral
solution $(\hat{U},\hat{x})$ as follows.  For a given $P\in\P$, for
every $\alpha\in[0,1]$ we define the chain
\begin{equation*}
  \suffChain^P(\alpha):=\min\{\suffChain\in\supp(\bar{x})\cap\S^P:{\textstyle \sum_{\suffChain'\preceq \suffChain}}\bar{x}_{\suffChain'}>\alpha\},
\end{equation*}
where
$\supp(\bar{x})=\cup_{P\in\P}\{\suffChain\in\S^P:\bar{x}_{\suffChain}>0\}$.
The minimum in the above definition is with respect to the partial
order $\preceq$ on suffix chains; recall that since we know
$\supp(\bar{x})$ is cross-free, this is well-defined.  If the set in
the above definition is empty, i.e.~we have that
$\sum_{\suffChain\in\S^P}\bar{x}_{\suffChain}\le\alpha$, then we
define $\suffChain^P(\alpha)=\canonFam^P$, where $\canonFam$ is the
canonical chain family for the modified instance $\I[\late]$.  We now
round $\bar{x}$ to $\hat{x}$ as follows.  Draw a single uniform random
variable $\alpha\sim\mathcal{U}[0,1]$, and index the pair of paths in
our instance as $\P=\{\pOne,\pTwo\}$.  Our rounding procedure needs to
choose one suffix chain for each of the paths; we use $\alpha$ to
correlate our choices in the following manner.  For paths $\pOne$ and
$\pTwo$ we select the suffix chains
$\suffChain^1=\suffChain^{\pOne}(\alpha)$ and
$\suffChain^2=\suffChain^{\pTwo}(1-\alpha)$, respectively.  At a high
level, our goal is to use $\alpha$ to correlate our choices on the two
paths, pairing large suffix chains on one path with small suffix
chains on the other, and vice versa.  By balancing our choices on the
two paths in this way, we ensure that the {\em combined} weight of the
suffix chains we choose is always relatively large.  We make this
idea concrete by first defining the rounded solution
$(\hat{U},\hat{x})$ and then formalizing the above observation as a
lemma.

As stated above, we want our rounded solution to schedule jobs on
paths $\pOne$ and $\pTwo$ according to the suffix chains
$\suffChain^1=\suffChain^{\pOne}(\alpha)$ and
$\suffChain^2=\suffChain^{\pTwo}(1-\alpha)$, respectively, where
$\alpha\sim\mathcal{U}[0,1]$.  Thus, we set
$\hat{x}_{\suffChain^1}=\hat{x}_{\suffChain^2}=1$,  and set $\hat{x}_{\suffChain}=0$
for all other $\suffChain\in\S^{\pOne}\cup\S^{\pTwo}$.  Correspondingly, for
each job $j$ such that $j\notin\late$, we set $\hat{U}_j=1$ if
$j\in\suffChain^1_{i(j)}$ or $j\in\suffChain^2_{i(j)}$, respectively,
depending on whether $j\in \pOne$ or $j\in \pTwo$; for all other $j\in[n]$
we set $\hat{U}_j=0$.  The following lemma shows that the rounding
scheme outlined above produces the same marginal probabilities for
$(\hat{U},\hat{x})$ as those in Section~\ref{sec:ex_ip_rounding}, thereby establishing
the validity of Lemmas~\ref{lem:boostfeas_corr} and~\ref{lem:exp_cost_corr}.

\begin{lemma}
  \label{lem:corr_marginals}
  When $(\bar{U},\bar{x})$ is rounded to $(\hat{U},\hat{x})$ as
  described above, we have that for all $P\in\P$ and all $\suffChain\in\S^P$,
  \begin{equation*}
    \Pr[\hat{x}_{\suffChain}=1]
    =
    \begin{cases}
      \bar{x}_{\suffChain}&\qquad\text{if }\suffChain\neq \canonFam^P\text{; and}\\
      1-\sum_{\suffChain'\in\S^P\setminus\{\canonFam^P\}}\bar{x}_{\suffChain'}&\qquad\text{if }\suffChain=\canonFam^P\text{,}
    \end{cases}
  \end{equation*}
  where $\canonFam$ is the canonical suffix chain family for $\I[\late]$.
\end{lemma}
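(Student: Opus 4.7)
The plan is to reduce the marginal computation for each path to a one-variable inversion argument, using the cross-free structure of $\supp(\bar{x})$ together with the fact that $\alpha$ and $1-\alpha$ are each uniformly distributed on $[0,1]$.

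First I would observe that although $\alpha$ and $1-\alpha$ are perfectly negatively correlated, each of them is individually uniform on $[0,1]$, so the marginal law of $\suffChain^1 = \suffChain^{\pOne}(\alpha)$ and the marginal law of $\suffChain^2 = \suffChain^{\pTwo}(1-\alpha)$ depend only on a single uniform random variable. Hence it suffices to show that for an arbitrary path $P \in \{\pOne, \pTwo\}$ and an auxiliary uniform $\beta \sim \mathcal{U}[0,1]$, the rule $\suffChain^P(\beta)$ produces the target marginals stated in the lemma.

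Next I would exploit the hypothesis (inherited from the construction that feeds the rounding) that $\supp(\bar{x}) \cap \S^P$ is cross-free. Since any two non-crossing suffix chains are comparable under $\preceq$, the partial order $\preceq$ restricts to a total order on $\supp(\bar{x}) \cap \S^P$; label its elements $\suffChain^{(1)} \prec \suffChain^{(2)} \prec \cdots \prec \suffChain^{(m)}$. Let $F_j := \sum_{\ell \le j} \bar{x}_{\suffChain^{(\ell)}}$ with $F_0 := 0$, and note $F_m = \sum_{\suffChain \in \S^P \setminus \{\canonFam^P\}} \bar{x}_\suffChain$, where $F_m \le 1$ by constraint~\eqref{c1S}. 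Because $\canonFam^P$ makes no job in $[n] \setminus \late$ late in the modified instance $\I[\late]$ (every job in $\late$ has deadline $\Gamma$, and $\canonFam^P$ defers only jobs whose deadlines permit it), the filtering step that produces $\bar{x}$ forces $\bar{x}_{\canonFam^P} = 0$, so $\canonFam^P \notin \supp(\bar{x})$ and the labeling above is well-defined.

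From the definition $\suffChain^P(\beta) = \min\{\suffChain \in \supp(\bar{x}) \cap \S^P : \sum_{\suffChain' \preceq \suffChain} \bar{x}_{\suffChain'} > \beta\}$, the total order implies $\suffChain^P(\beta) = \suffChain^{(j)}$ iff $\beta \in [F_{j-1}, F_j)$, while $\suffChain^P(\beta) = \canonFam^P$ iff the defining set is empty, which by monotonicity happens precisely when $\beta \in [F_m, 1]$. Computing Lebesgue measures under $\beta \sim \mathcal{U}[0,1]$ yields
\[
\Pr[\suffChain^P(\beta) = \suffChain^{(j)}] = F_j - F_{j-1} = \bar{x}_{\suffChain^{(j)}}, \qquad \Pr[\suffChain^P(\beta) = \canonFam^P] = 1 - F_m.
\]
For any $\suffChain \in \S^P \setminus (\supp(\bar{x}) \cup \{\canonFam^P\})$ the probability is trivially zero, matching $\bar{x}_\suffChain = 0$. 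Applying this with $\beta = \alpha$ on $\pOne$ and $\beta = 1-\alpha$ on $\pTwo$ and then translating back to $\hat{x}$ via $\hat{x}_{\suffChain^1} = \hat{x}_{\suffChain^2} = 1$ gives the claimed marginals.

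I expect the only subtle point to be ensuring the well-definedness of $\suffChain^P(\beta)$ and the telescoping of probabilities, which both rest on the cross-free support of $\bar{x}$ and the observation that $\canonFam^P$ is excluded from that support; the rest is a direct measure computation on $[0,1]$.
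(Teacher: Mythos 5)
Your proof is correct and follows essentially the same route as the paper's: use the cross-free support of $\bar{x}$ to totally order the suffix chains with positive mass, then observe that the inverse-CDF rule $\suffChain^P(\cdot)$ partitions $[0,1]$ into consecutive intervals whose lengths are exactly the $\bar{x}$ masses, with the residual interval of length $1-\sum_{\suffChain\in\S^P}\bar{x}_{\suffChain}$ going to $\canonFam^P$. You additionally make explicit the observation that $\canonFam^P \notin \supp(\bar{x})$ (since $\canonFam^P$ makes no job of $[n]\setminus\late$ late in $\I[\late]$ and therefore is filtered out in the construction of $\bar{x}$), which the paper leaves implicit; this cleanly justifies treating the canonical chain as a separate residual case and is a nice clarification, but does not change the substance of the argument.
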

\begin{proof}
  First, observe that if we set $\hat{x}_{\suffChain}=1$, then we must
  have had that $\bar{x}_{\suffChain}>0$ (or $\suffChain=\canonFam^P$ for some
  $P\in\P$ where $\canonFam$ is the canonical suffix chain family for
  $\I[\late]$); thus, for all other $\suffChain$ we immediately have that
  $\Pr[\hat{x}_{\suffChain}=1]=0=\bar{x}_{\suffChain}$.

  We begin by focusing on suffix chains $\suffChain$ in the support of $\bar{x}$.  Recall that
  the fractional solution $(U,x)$ we began with had cross-free
  support, i.e.~we had that $\{\suffChain\in\S^P:x_{\suffChain}>0\}$ was
  cross-free for all $P\in\P$.  Now, we defined $\bar{x}$ so that
  for all $\suffChain$ we have $\bar{x}_{\suffChain}>0$ implies $x_{\suffChain}>0$.  Thus, we may conclude $(\bar{U},\bar{x})$ has cross-free support as well.  Fix some
  $P\in\P$, and enumerate $\S^P$ in sorted order as
  $\suffChain^1\prec\suffChain^2\prec\dots\prec\suffChain^m$, where
  $m=\abs{\supp(\bar{x})}\le\abs{\S^P}$.  For any $\ell\in[m]$, we can compute
  $\Pr[\hat{x}_{\suffChain^\ell}]$ as follows.  Recall that we set
  $\hat{x}_{\suffChain^\ell}=1$ if and only if we had that
  $\suffChain^\ell=\suffChain^{\pOne}(\alpha)$ or
  $\suffChain^\ell=\suffChain^{\pTwo}(1-\alpha)$ (for $P=\pOne$ or $P=\pTwo$,
  respectively).  Now, from our definition of $\suffChain^P(\cdot)$, we
  can see that these equalities hold, respectively, if and only if we
  have that
  \begin{equation*}
    {
    \sum_{\ell'=1}^{\ell-1}\bar{x}_{\suffChain^{\ell'}}\le\alpha<\sum_{\ell'=1}^{\ell}\bar{x}_{\suffChain^{\ell'}}
    \qquad\text{or}\qquad
    \sum_{\ell'=1}^{\ell-1}\bar{x}_{\suffChain^{\ell'}}\le1-\alpha<\sum_{\ell'=1}^{\ell}\bar{x}_{\suffChain^{\ell'}}.
    }
  \end{equation*}
  Thus, the probability of selecting $\suffChain^\ell$ is precisely the
  probability of choosing an $\alpha$ in one of the ranges above.
  Recalling that $\alpha\sim\mathcal{U}[0,1]$, that
  $\bar{x}_{\suffChain}$ is nonnegative for all $\suffChain$, and that
  $\sum_{\suffChain\in\S^P}\bar{x}_{\suffChain}\le1$ always (by
  Lemma~\ref{lem:boostfeas_corr}), we can conclude this probability
  is, in fact, the lengths of the intervals in question.  Since both
  intervals have the same length, in either case we get that
  \begin{equation*}
    \Pr[\hat{x}_{\suffChain^\ell}=1]
    =
    \sum_{\ell'=1}^{\ell}\bar{x}_{\suffChain^{\ell'}}-\sum_{\ell'=1}^{\ell-1}\bar{x}_{\suffChain^{\ell'}}
    =
    \bar{x}_{\suffChain^\ell},
  \end{equation*}
  exactly as desired.

  Finally, for each $P\in\P$, we consider $\Pr[\hat{x}_{\canonFam^P}=1]$,
  where $\canonFam$ is the canonical suffix chain family for $\I[\late]$.
  Again, by the definition of our rounding process we can see this
  happens precisely when $\canonFam^{\pOne}=\suffChain^{\pOne}(\alpha)$ or
  $\canonFam^{\pTwo}=\suffChain^{\pTwo}(1-\alpha)$.  This two equalities holds,
  respectively, precisely when
  \begin{equation*}
    \sum_{\suffChain\in\S^{\pOne}}^{m}\bar{x}_{\suffChain}\le\alpha
    \qquad\text{or}\qquad
    \sum_{\suffChain\in\S^{\pTwo}}^{m}\bar{x}_{\suffChain}\le1-\alpha.
  \end{equation*}
  As in the previous case, however, the nonnegativity of $\bar{x}$ and
  Lemma~\ref{lem:boostfeas_corr} allow us to conclude that
  \begin{equation*}
    0
    \le
    \sum_{\suffChain\in\S^{\pOne}}^{m}\bar{x}_{\suffChain},
    \sum_{\suffChain\in\S^{\pTwo}}^{m}\bar{x}_{\suffChain}
    \le1.
  \end{equation*}
  Thus, since $\alpha\sim\mathcal{U}[0,1]$, we can conclude that in either case we have that
  \begin{equation*}
    \Pr[\hat{x}_{\canonFam^P}=1]=1-\sum_{\suffChain\in\S^P}\bar{x}_{\suffChain}
  \end{equation*}
  exactly as claimed.
\end{proof}

\begin{lemma}
  \label{lem:chernoff_sub_corr}
  With probability 1, we have that
  \begin{equation}
    \label{eq:boostfeas_corr}
    \sum_{P\in\P}\sum_{\suffChain\in\S^P}p_{\suffChain}^{i,\canonFam}\hat{x}_{\suffChain}\ge\Theta^{i,\canonFam}
  \end{equation}
  for all $i\in[k]$, where $\canonFam$ is the canonical suffix chain family
  for $\I[\late]$.
\end{lemma}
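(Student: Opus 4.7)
Fix $i \in [k]$. For $P \in \P$ and $\alpha \in [0,1]$, define the step function
\[ f_P(\alpha) := p^{i,\canonFam}_{\suffChain^P(\alpha)}. \]
Since by the rounding rule $\hat{x}_{\suffChain^{\pOne}(\alpha)} = \hat{x}_{\suffChain^{\pTwo}(1-\alpha)} = 1$, the inequality \eqref{eq:boostfeas_corr} we need to prove is exactly
\[ f_{\pOne}(\alpha) + f_{\pTwo}(1-\alpha) \;\ge\; \Theta^{i,\canonFam} \qquad \text{for every } \alpha \in [0,1]. \]
The plan is to record three structural properties of $f_P$ and then derive the pointwise bound by a clean contradiction argument using Lemma~\ref{lem:boostfeas_corr}.

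The three properties are: $(i)$ $f_P$ is monotonically non-increasing in $\alpha$; $(ii)$ $0 \le f_P(\alpha) \le \Theta^{i,\canonFam}$ for every $\alpha$; and $(iii)$ $\int_0^1 f_P(\alpha)\,d\alpha = \sum_{\suffChain\in\S^P} p^{i,\canonFam}_{\suffChain}\bar{x}_{\suffChain}$. Property $(ii)$ is immediate from the definition of $p^{i,\canonFam}_{\suffChain}$ (which is capped at $\Theta^{i,\canonFam}$ and nonneg.). For $(i)$, I would observe that if $\alpha \le \alpha'$, the set over which $\suffChain^P(\alpha)$ is minimized contains that of $\suffChain^P(\alpha')$, so $\suffChain^P(\alpha) \preceq \suffChain^P(\alpha')$ with respect to the partial order on suffix chains; since $\suffChain \preceq \suffChain'$ implies $\suffChain_i \supseteq \suffChain'_i$ and hence $p^{i,\canonFam}_{\suffChain} \ge p^{i,\canonFam}_{\suffChain'}$, we get $f_P(\alpha) \ge f_P(\alpha')$. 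Property $(iii)$ follows from Lemma~\ref{lem:corr_marginals}: the Lebesgue measure of $\{\alpha: \suffChain^P(\alpha) = \suffChain\}$ equals $\bar{x}_{\suffChain}$ for $\suffChain \neq \canonFam^P$, and the residual mass is assigned to $\canonFam^P$, which contributes nothing because $p^{i,\canonFam}_{\canonFam^P} = 0$.

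Now suppose for contradiction that \eqref{eq:boostfeas_corr} fails at some $\alpha^* \in [0,1]$, and set $a := f_{\pOne}(\alpha^*)$ and $b := f_{\pTwo}(1-\alpha^*)$, so $a + b < \Theta^{i,\canonFam}$. By the monotonicity of $f_{\pOne}$, we have $f_{\pOne}(\alpha) \le a$ for $\alpha \ge \alpha^*$ (and trivially $f_{\pOne}(\alpha) \le \Theta^{i,\canonFam}$ otherwise), hence
\[ \int_0^1 f_{\pOne}(\alpha)\,d\alpha \;\le\; \alpha^* \cdot \Theta^{i,\canonFam} + (1-\alpha^*) \cdot a. \]
The symmetric argument for $f_{\pTwo}(1-\alpha)$ (which is non-decreasing in $\alpha$) yields
\[ \int_0^1 f_{\pTwo}(1-\alpha)\,d\alpha \;\le\; (1-\alpha^*) \cdot \Theta^{i,\canonFam} + \alpha^* \cdot b. \]
Summing and using the substitution $\beta = 1-\alpha$ on the second integral gives
\[ \int_0^1 f_{\pOne}(\alpha)\,d\alpha + \int_0^1 f_{\pTwo}(\alpha)\,d\alpha \;\le\; \Theta^{i,\canonFam} + \bigl((1-\alpha^*)a + \alpha^* b\bigr) \;<\; 2\,\Theta^{i,\canonFam}, \]
where the final strict inequality follows because $(1-\alpha^*)a + \alpha^* b \le \max(a,b) \le a+b < \Theta^{i,\canonFam}$. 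By property $(iii)$, the left-hand side equals $\sum_{P\in\P}\sum_{\suffChain\in\S^P} p^{i,\canonFam}_{\suffChain}\bar{x}_{\suffChain}$, which Lemma~\ref{lem:boostfeas_corr} bounds below by $2\,\Theta^{i,\canonFam}$, a contradiction.

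The conceptually delicate step, and essentially the whole reason for the correlated rounding, is bridging the \emph{average} guarantee from Lemma~\ref{lem:boostfeas_corr} to a \emph{pointwise} guarantee. The mechanism that makes this work is that pairing $\alpha$ on $\pOne$ with $1-\alpha$ on $\pTwo$ forces one monotone function to be small on $[\alpha^*,1]$ whenever the other is small on $[0,\alpha^*]$, so a single point of failure splits $[0,1]$ into two regions that each lose enough mass to break the integral lower bound. This is precisely the step that would fail for three or more paths, where no single real parameter can simultaneously balance all pairs, which explains why the technique is confined to the two-path case.
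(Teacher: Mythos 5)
Your proof is correct and takes essentially the same route as the paper's: both arguments combine $(1)$ the lower bound $\sum_{P,\suffChain} p^{i,\canonFam}_{\suffChain}\bar{x}_{\suffChain} \ge 2\Theta^{i,\canonFam}$ from Lemma~\ref{lem:boostfeas_corr}, $(2)$ the cap $0 \le p^{i,\canonFam}_{\suffChain} \le \Theta^{i,\canonFam}$, and $(3)$ the anti-monotone coupling of the two paths via $\alpha \mapsto (\alpha, 1-\alpha)$, to show that the realized sum can fall at most $\Theta^{i,\canonFam}$ below its average. The paper phrases this as $\E[X] \le M + \Theta^{i,\canonFam}$ where $M$ is the pointwise minimum, whereas you phrase it as the contrapositive via an integral bound anchored at the failure point $\alpha^*$; this is a minor bookkeeping difference, not a different argument.
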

\begin{proof}
  We obtain the desired bound by combining upper and lower bounds for
  the expected value of the sum on the left of inequality~\eqref{eq:boostfeas_corr}:
  $\E[\sum_{P\in\P}\sum_{\suffChain\in\S^P}p_{\suffChain}^{i,\canonFam}\hat{x}_{\suffChain}]$.
  Fix some $i\in[k]$, and let $M$ be the quantity we want to lower
  bound, i.e.~the minimum value that the sum inside the expectation achieves.
  First, we show that the {\em maximum} value this sum ever
  achieves is at most $M+\Theta^{i,\canonFam}$; this immediate implies that
  the expected value of the sum is also at most $M+\Theta^{i,\canonFam}$.  We
  then show a lower bound of $2\Theta^{i,\canonFam}$ on the expected value of
  the sum.  Chaining these two inequalities together immediately
  gives us that $M\ge\Theta^{i,\canonFam}$, exactly as desired.

  We begin by proving our claimed upper bound on the sum
  $\sum_{P\in\P}\sum_{\suffChain\in\S^P}p_{\suffChain}^{i,\canonFam}\hat{x}_{\suffChain}$.
  First, we simplify this sum by recalling details of our rounding
  procedure.  For each path $P\in\P$ we set exactly one
  $\hat{x}_{\suffChain}$ equal to $1$, and set all others equal to $0$,
  based on a uniform random variable $\alpha\sim\mathcal{U}[0,1]$.
  Thus, our sum reduces to precisely the coefficients of the two
  variables in question; by the definition of our rounding procedure,
  we can see that we get that
  \begin{equation*}
    \sum_{P\in\P}\sum_{\suffChain\in\S^P}p_{\suffChain}^{i,\canonFam}\hat{x}_{\suffChain}\ge\Theta^{i,\canonFam}
    =
    p_{\suffChain^1}^{i,\canonFam}+p_{\suffChain^2}^{i,\canonFam},
  \end{equation*}
  where $\suffChain^1=\suffChain^{\pOne}(\alpha)$ and
  $\suffChain^2=\suffChain^{\pTwo}(1-\alpha)$.

  The key to our upper bound is showing that for any two values of $\alpha$, the resulting values of the sum     $p_{\suffChain^1}^{i,\canonFam}+p_{\suffChain^2}^{i,\canonFam}$ can differ by at most $\Theta^{i,\canonFam}$.  To that end, fix
  $0\le\alpha\le\alpha'\le1$; set
  $\suffChain^1=\suffChain^{\pOne}(\alpha)$ and
  $\suffChain^2=\suffChain^{\pTwo}(1-\alpha)$ (as before); and set
  $\suffChain^{\prime1}=\suffChain^{\pOne}(\alpha')$ and
  $\suffChain^{\prime2}=\suffChain^{\pTwo}(1-\alpha')$.  Now, by our definition
  of $\suffChain^{P}(\cdot)$, we can immediately see that
  $\alpha\le\alpha'$ implies that $\suffChain^1\preceq\suffChain^{\prime1}$
  and $\suffChain^2\succeq\suffChain^{\prime2}$, and so $\suffChain^1_i\supseteq\suffChain^{\prime1}_i$
  and $\suffChain^2_i\subseteq\suffChain^{\prime2}_i$.  From the definition of $p_{\suffChain}^{i,\canonFam}$, however, we can then see that we have
  \begin{equation*}
    0 \le p_{\suffChain^{\prime1}}^{i,\canonFam}\le p_{\suffChain^1}^{i,\canonFam} \le \Theta^{i,\canonFam}
    \qquad\text{and}\qquad
    0 \le p_{\suffChain^2}^{i,\canonFam}\le p_{\suffChain^{\prime2}}^{i,\canonFam} \le \Theta^{i,\canonFam}.
  \end{equation*}
  The above immediately imply the slightly weaker pair of inequalities
  \begin{equation*}
    p_{\suffChain^{\prime1}}^{i,\canonFam}\le p_{\suffChain^1}^{i,\canonFam} \le p_{\suffChain^{\prime1}}^{i,\canonFam}+\Theta^{i,\canonFam}
    \qquad\text{and}\qquad
    p_{\suffChain^2}^{i,\canonFam}\le p_{\suffChain^{\prime2}}^{i,\canonFam} \le p_{\suffChain^2}^{i,\canonFam}+\Theta^{i,\canonFam},
  \end{equation*}
  which combine to imply that
  \begin{equation*}
    (p_{\suffChain^1}^{i,\canonFam}+p_{\suffChain^2}^{i,\canonFam}) \le (p_{\suffChain^{\prime1}}^{i,\canonFam}+p_{\suffChain^{\prime2}}^{i,\canonFam})+\Theta^{i,\canonFam}
    \qquad\text{and}\qquad
    (p_{\suffChain^{\prime1}}^{i,\canonFam}+p_{\suffChain^{\prime2}}^{i,\canonFam})\le(p_{\suffChain^1}^{i,\canonFam}+ p_{\suffChain^2}^{i,\canonFam})+\Theta^{i,\canonFam}.
  \end{equation*}
  Recall that we chose
  $M=\min_{\alpha}\sum_{P\in\P}\sum_{\suffChain\in\S^P}p_{\suffChain}^{i,\canonFam}\hat{x}_{\suffChain}=\min_\alpha(p_{\suffChain^1}^{i,\canonFam}+p_{\suffChain^2}^{i,\canonFam})$.
  Thus, if we let $\alpha^\ast$ achieve this minimum value $M$, i.e.~pick $\alpha^\ast\in\argmin_\alpha(p_{\suffChain^1}^{i,\canonFam}+p_{\suffChain^2}^{i,\canonFam})$, we can
  see that applying the two inequalities above in the regions
  $[0,\alpha^\ast]$ and $[\alpha^\ast,1]$, respectively, immediately
  gives us that
  \begin{equation*}
    \E[\sum_{P\in\P}\sum_{\suffChain\in\S^P}p_{\suffChain}^{i,\canonFam}\hat{x}_{\suffChain}]
    =\E[p_{\suffChain^1}^{i,\canonFam}+p_{\suffChain^2}^{i,\canonFam}]
    \le\E[M+\Theta^{i,\canonFam}]
    =M+\Theta^{i,\canonFam},
  \end{equation*}
  exactly as claimed.

  Our lower bound follows simply by combining Lemmas~\ref{lem:exp_cost_corr}
  and~\ref{lem:boostfeas_corr}.  In particular, we get that
  \begin{equation*}
    \E[\sum_{P\in\P}\sum_{\suffChain\in\S^P}p_{\suffChain}^{i,\canonFam}\hat{x}_{\suffChain}]
    =
    \sum_{P\in\P}\sum_{\suffChain\in\S^P}p_{\suffChain}^{i,\canonFam}\Pr[\hat{x}_{\suffChain}=1]
    =
    \sum_{P\in\P}\sum_{\suffChain\in\S^P}p_{\suffChain}^{i,\canonFam}\bar{x}_{\suffChain}
    \ge
    2\Theta^{i,\canonFam},
  \end{equation*}
  where the first equality follows since the
  $\hat{x}_{\suffChain}$ are all binary random variables, the second follows by
  Lemma~\ref{lem:exp_cost_corr}, and the inequality follows by Lemma~\ref{lem:boostfeas_corr}.
  Combining this with our previously found upper bound on this
  expected value, however, we can conclude that
  $M+\Theta^{i,\canonFam}\ge2\Theta^{i,\canonFam}$ or equivalently $M\ge\Theta^{i,\canonFam}$,
  exactly as desired.
\end{proof}

Finally, we observe that we can use the same procedure as outlined in
Section~\ref{sec:ex_ip_rounding} to produce a feasible schedule for $\I$ from the
integral solution $(\hat{U},\hat{x})$. As before, on each
path $P\in\P$ we defer all jobs that are either in the suffix
chain $\suffChain\in\S^P$ such that $\hat{x}_{\suffChain}=1$ {\em or} in the
canonical suffix for $P$ in $\I[\late]$, i.e.~all jobs in the join
$\suffChain\wedge \canonFam^P$, and then running the algorithm outlined in that
section to get a feasible schedule.  We get the following results
corresponding to Theorem~\ref{thm:sched_good} and Corollary~\ref{cor:sched_cost}; since their proofs
follow from largely the same arguments as those given in
Section~\ref{sec:ex_ip_rounding}, we only sketch the differences below.

\begin{theorem}
\label{thm:sched_good_corr}
  Applying the process and algorithm from Section~\ref{sec:ex_ip_rounding} to the
  solution $(\hat{U},\hat{x})$ yields a feasible schedule with cost at
  most $\sum_{j\notin\late}w_j\hat{U}_j$ in the instance $\I[\late]$.
\end{theorem}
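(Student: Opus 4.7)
The plan is to mirror the proof of Theorem~\ref{thm:sched_good}, but to exploit the much stronger guarantee now offered by Lemma~\ref{lem:chernoff_sub_corr}: whereas in the independent rounding setting the analogous inequality~\eqref{eq:feas} held only with constant probability, here its correlated analog~\eqref{eq:boostfeas_corr} holds with probability $1$. So the conclusion of the theorem can be carried through deterministically, with no intermediate probabilistic step.

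First I would verify feasibility of the produced schedule. This argument is unchanged from the proof of Theorem~\ref{thm:sched_good}: for each $P\in\P$ the set $\suffChainRnd^P = \suffChain \vee \canonFam^P$, where $\suffChain$ is the unique chain with $\hat{x}_{\suffChain}=1$, is still a suffix chain for $P$, and the greedy algorithm processes the jobs of $P$ in precedence order across iterations $1,\dots,k$, so no chain precedence constraint is ever violated.

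Next I would bound the cost. Fix any $i\in[k]$. By Lemma~\ref{lem:chernoff_sub_corr} we have $\sum_{P\in\P}\sum_{\suffChain\in\S^P} p^{i,\canonFam}_{\suffChain}\hat{x}_{\suffChain}\ge \Theta^{i,\canonFam}$ deterministically. Adding $\sum_{P\in\P}\sum_{j\in\canonFam^P_i} p_j$ to both sides and invoking the definition~\eqref{thetais} of $\Theta^{i,\canonFam}$, the total processing time of jobs in $\bigcup_{P\in\P}\suffChainRnd^P_i$ is at least $\Gamma - D_i$; equivalently, the jobs the greedy algorithm schedules in iterations $1,\dots,i$ all complete by time $D_i$. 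Consequently, any $j\in P\setminus\late$ with $d_j=D_{i(j)}$ that ends up late under our schedule must have been scheduled strictly after iteration $i(j)$, which forces $j\in\suffChainRnd^P_{i(j)}$ and hence $\hat{U}_j=1$. Summing $w_j$ over these late jobs bounds the cost of the schedule in $\I[\late]$ by $\sum_{j\notin\late} w_j \hat{U}_j$, which is exactly the claim.

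The main obstacle is essentially bookkeeping: one must confirm that the definitions of $\Theta^{i,\canonFam}$, $p^{i,\canonFam}_{\suffChain}$, and $\suffChainRnd^P=\suffChain\vee\canonFam^P$, together with the greedy scheduling algorithm, carry over verbatim from the independent setting to the correlated setting. Because Lemma~\ref{lem:chernoff_sub_corr} has already absorbed the technical content of the change -- replacing the Bernstein-based concentration argument by an averaging argument over the single shared random variable $\alpha$ -- the remaining step amounts to re-running the implication chain of Theorem~\ref{thm:sched_good} with ``always'' in place of ``with constant probability,'' and I do not anticipate any genuine additional difficulty.
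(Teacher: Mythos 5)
Your proposal is correct and takes essentially the same approach as the paper: it reuses the argument from Theorem~\ref{thm:sched_good} verbatim, with the only change being that Lemma~\ref{lem:chernoff_sub_corr} makes the key inequality~\eqref{eq:boostfeas_corr} hold with probability $1$ rather than with constant probability, so the cost bound follows unconditionally.
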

\begin{proof}
  The proof of this theorem follows from the exact same argument as
  that for Theorem~\ref{thm:sched_good}.  The only difference is the
  equivalent of constraint~\eqref{eq:feas} for our current rounding
  scheme holds with certainty, due to
  Lemma~\ref{lem:chernoff_sub_corr}.  Specifically, we have that
  \begin{equation*}
    \sum_{P\in\P}\sum_{\suffChain\in\S^P}p_{\suffChain}^{i,\canonFam}\hat{x}_{\suffChain}\ge\Theta^{i,\canonFam},
  \end{equation*}
  holds with probability $1$, and hence our upper bound on costs holds
  unconditionally, unlike its counterpart from
  Section~\ref{sec:ex_ip_rounding}.  In all other aspects the theorem
  and its proof are identical to those found in
  Section~\ref{sec:ex_ip_rounding}, and we refer the reader to that section for further details.
\end{proof}

\begin{corollary}
  \label{cor:sched_cost_corr}
  The schedule produced by following the process and algorithm from
  Section~\ref{sec:ex_ip_rounding} is feasible in the original instance $\I$ of
  $\pdls$, and incurs expected penalty of $2\sum_jw_jU_j$.
\end{corollary}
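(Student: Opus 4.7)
The plan is to combine the feasibility guarantee of Theorem~\ref{thm:sched_good_corr} with a straightforward accounting of costs split according to the partition $[n] = \late \cup ([n]\setminus\late)$. First, I would invoke Theorem~\ref{thm:sched_good_corr} to conclude that the algorithm always produces a schedule which, when evaluated on the modified instance $\I[\late]$, is feasible and has penalty at most $\sum_{j\notin\late} w_j \hat{U}_j$. Feasibility carries over to the original instance $\I$ unchanged, since $\I$ and $\I[\late]$ differ only in the deadlines of jobs in $\late$, not in processing times or precedence constraints.

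Next, I would compare the cost of the same schedule in $\I$ versus $\I[\late]$. Since the only difference is that deadlines of jobs in $\late$ were increased to $\Gamma$ in $\I[\late]$ (so they are never late there), the cost in $\I$ is at most the cost in $\I[\late]$ plus the worst-case penalty $\sum_{j\in\late} w_j$ incurred if every job in $\late$ turns out to be late. Hence the cost in $\I$ is bounded by
\begin{equation*}
  \sum_{j\notin\late} w_j \hat{U}_j \;+\; \sum_{j\in\late} w_j.
\end{equation*}

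Taking expectations and using Lemma~\ref{lem:exp_cost_corr}, which gives $\E[\hat{U}_j] = \bar{U}_j = 2 U_j$ for $j\notin\late$, the first sum contributes $\sum_{j\notin\late} 2 w_j U_j$ in expectation. For the second sum, the defining property of $\late$ is $U_j \ge 1/2$, so $w_j \le 2 w_j U_j$ for every $j\in\late$. Adding the two contributions gives
\begin{equation*}
  \E[\text{cost in }\I] \;\le\; \sum_{j\notin\late} 2 w_j U_j \;+\; \sum_{j\in\late} 2 w_j U_j \;=\; 2\sum_{j\in[n]} w_j U_j,
\end{equation*}
which is exactly the claimed bound.

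There is essentially no obstacle here, since all the heavy lifting has been done in the preceding lemmas: Lemma~\ref{lem:chernoff_sub_corr} guarantees the feasibility condition~\eqref{eq:boostfeas_corr} holds with probability $1$ (which is why Theorem~\ref{thm:sched_good_corr} applies deterministically rather than with constant probability, and why no Markov-style boosting is needed as in Corollary~\ref{cor:sched_cost}), and Lemma~\ref{lem:exp_cost_corr} supplies the linearity-of-expectation identity. The only subtlety worth flagging explicitly is that the factor $2$ (replacing the $\gamma\log k$ of the independent-rounding analysis) on jobs in $\late$ requires exactly the new filtering threshold $U_j \ge 1/2$ used to define $\late$ in this section; with that threshold, both terms combine cleanly to give the $2$-approximation without any residual $\log k$ or constant-probability loss.
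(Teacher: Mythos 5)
Your proof is correct and follows essentially the same route as the paper's: invoke Theorem~\ref{thm:sched_good_corr} for feasibility and the $\I[\late]$-cost bound, add $\sum_{j\in\late}w_j$ to transfer the bound to $\I$, take expectations via Lemma~\ref{lem:exp_cost_corr}, and use the threshold $U_j\ge1/2$ on $\late$ to fold both sums into $2\sum_jw_jU_j$. The only cosmetic difference is that you spell out the per-job bound $w_j\le 2w_jU_j$ for $j\in\late$, whereas the paper states the combined inequality $\sum_{j\notin\late}w_j\bar U_j+\sum_{j\in\late}w_j\le 2\sum_jw_jU_j$ up front and cites the analogous step in Corollary~\ref{cor:sched_cost}; the substance is identical.
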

\begin{proof}
  The proof largely follows the same arguments as that of
  Corollary~\ref{cor:sched_cost}, and we refer the reader to that proof for full
  details.  We sketch the main technical details below.  Similarly to
  that section, since we defined $\late=\{j:U_j\ge1/2\}$, and set $\bar{U}_j=2U_j$ for all $j\notin\late$, we get that
  \begin{equation*}
    \sum_{j\notin\late}w_j\bar{U}_j+\sum_{j\in\late}w_j\le2\sum_jw_jU_j.
  \end{equation*}
  Now, from Theorem~\ref{thm:sched_good_corr} we have the schedule we produce in the end
  has cost at most $\sum_{j\notin\late}w_j\hat{U}_j$ in the modified
  setting $\I[\late]$.  Recall, however, that the only difference
  between $\I$ an $\I[\late]$ is that in the latter we increased the
  deadlines of all jobs in $\late$ to be $\Gamma$.  Thus, the cost of
  our schedule in $\I$ can increase versus the cost in $\I[\late]$ by
  at most the total penalty of all jobs in $\late$, i.e.~by at most
  $\sum_{i\in\late}w_j$.  Applying Lemma~\ref{lem:exp_cost_corr}, we can thus see that the expected cost of our schedule in $\I[\late]$ is at most
  \begin{equation*}
    \E[\sum_{j\notin\late}w_j\hat{U}_j+\sum_{i\in\late}w_j]
    =\sum_{j\notin\late}w_j\bar{U}_j+\sum_{j\in\late}w_j
    \le2\sum_jw_jU_j,
  \end{equation*}
  exactly as claimed.
\end{proof}

  \section{LP gap example}
  \label{sec:lp_gap}
  
In this section, we show that using the KC-inequalities~\eqref{c2} in the LP relaxation~\eqref{p} is critical to
our approximation factor.  In particular, we show that without
utilizing KC-inequalities, the integrality gap of (P) would be
$\Omega(n/\log n)$.  We do so by constructing a simple instance $\I$
of the \pdls, and considering the program (P) without the
KC-inequality strengthening of constraint~\eqref{c2}.  We will show
that the LP relaxation of this weaker version of (P) admits a
fractional solution whose value is a factor of $\Omega(n/\log k)$
better than any integral solution.  Of special note is the simplicity
of the instance $\I$ we construct: the instance consists of a single
path, all of the jobs on which have unit process time and unit weight.

In the rest of the section, we describe the instance $\I$ of \pdls and
the fractional solution $(U,x)$ yielding our claimed integrality gap.
As mentioned, our example consists of a single path
$P=1\preceq2\preceq\dots\preceq n$, with $w_j=p_j=1$ for all
$j\in[n]$.  Each job has one of $k=n/2$ distinct deadlines, with the
deadline of job $j$ being given by
\begin{equation*}
  d_j
  =
  \begin{cases}
    j &\text{if $j$ is odd; and}\\
    j-1&\text{if $j$ is even.}
  \end{cases}
\end{equation*}
Thus, we can see that deadline $D_i=2i-1$ for all $i\in[k]$, and that
$i(j)=\lfloor\frac{j+1}{2}\rfloor$.  

Since our example consists of a single path, there is only
one feasible solution: simply schedule the jobs in the order they
appear on that path.  We chose our deadlines such that this schedule runs jobs
with odd indices  on time, and make jobs with even indices
late.  This results in a total cost of $n/2$.

On the other hand, consider the following fractional solution to
program~(P) for the given example.\footnote{in the interests of
  simplicity, for the rest of this discussion we omit any uses of the
  index $P$ and use $\le$ in place of $\preceq$ since, by
  construction, they are equivalent.}  The support of our solution
will be the set $\{\suffChain^\ell\}_{\ell}$ of suffix chains indexed by
$\ell\in[k+1]$, where we define $\suffChain^\ell$ by
\begin{equation*}
  \suffChain^\ell_i=
\begin{cases}
  \{1,2,\dots,n\}&\qquad\text{if }i\le\ell\text{; and}\\
  \{2i+1,2i+2,\dots,n\}&\qquad\text{if }i>\ell.
\end{cases}
\end{equation*}
Our proposed solution then sets
\begin{equation*}
  x_{\suffChain}=
\begin{cases}
  \frac{1}{2}&\qquad\text{if }\suffChain=\suffChain^0\text{; }\\
  \frac{1}{2\ell}-\frac{1}{2(\ell+1)}&\qquad\text{if }\suffChain=\suffChain^\ell\text{ for $1\le\ell<k$; }\\
  \frac{1}{2k}&\qquad\text{if }\suffChain=\suffChain^k\text{; and}\\
  0\qquad\text{otherwise}.
  \end{cases}
\end{equation*}
Note that this solution has far lower cost than the integral one.  In
particular, for any $j\in[n]$, the suffix chain $\suffChain^\ell$ makes job $j$ late if and only if $j\in \suffChain^\ell_{i(j)}$.  Whenever $i(j)\le\ell$, we have $j\in \suffChain^\ell_j=[n]$; on the other hand, for $i(j)>\ell$ we can see that the smallest element of 
 $\suffChain^{\ell}_{i(j)}$ is 
$2\lfloor\frac{j+1}{2}\rfloor+1 
  \ge j+1$,
since  $i(j)=\lfloor\frac{j+1}{2}\rfloor$.  
So  $j\in \suffChain^\ell_{i(j)}$ if and only if $i(j)\le\ell$. 
Thus, we have that 
\begin{equation*}
  U_j
  =
  \sum_{\suffChain\in\S:\suffChain_{i(j)}\ni j}x_{\suffChain}
  =
  \sum_{\ell=i(j)}^{k}x_{\suffChain^\ell}
  =\sum_{\ell=i(j)}^{k-1}\left(\frac{1}{2\ell}-\frac{1}{2(\ell+1)}\right) + \frac{1}{2k}
  \le\frac{1}{j}.
  \end{equation*}
The first inequality above follows by the definition of $x_{\suffChain}$; the
second by our definition of the support of $x$ and our observations
above on $\suffChain^\ell_{i(j)}$; the third by our choice of $x$; and
the last by observing that we have a telescoping sum, and recalling that we have  $2i(j)=2\lfloor\frac{j+1}{2}\rfloor\ge j$.  Thus, we can see that the solution $(U,x)$ produces objective value 
\begin{equation*}
  \sum_{j\in[n]}w_jU_j
  \le
  \sum_{j\in[n]}\frac{1}{j}
  =O(\log n).
\end{equation*}
Thus, any program with integrality gap better than $\Omega(n/\log n)$
must have constraints which this candidate solution $(U,x)$ violates.
We next show that $(U,x)$ satisfies constraint~(C1) from program~(P),
as well as the version of constraint~(C2) which does not use the
KC-inequalities strengthening.  This implies that the KC-inequalities
are critical to achieving a small LP gap.

We begin by showing that $(U,x)$ satisfies constraint~(C1) from the
program (P).  First, note that we have defined $x$ so that
\begin{equation*}
  \sum_{\suffChain\in\S}x_{\suffChain}
  =\sum_{\ell=0}^{\ell=k} x_{\suffChain^\ell}
  =\frac{1}{2}+\sum_{\ell=1}^{\ell=k-1}\left(\frac{1}{2\ell}-\frac{1}{2(\ell+1)}\right) + \frac{1}{2k}
  =1,
\end{equation*}
and so constraint~(C1) is satisfied.  

Since our fractional solution $(U,x)$
satisfies~(C1), we conclude constraint~(C2) is crucial to
bounding the integrality gap of (P).  Furthermore, we will now show
that the ``capping'' of processing times $p_{\suffChain}^{i,\suffChainFam}$ by $\Theta^{i,\suffChainFam}$
is critical to achieving a good integrality gap for (P).  
We do so by demonstrating that {\em without} the capping operation,
 $(U,x)$ would, in fact, satisfy
constraint~(C2), and hence show the integrality gap of (P) is $\Omega(n/\log
n)$.

Consider constraint~(C2) without the capping operation.
Fix some suffix chain $\suffChainFam$ on $\P$. Using the previously given
definition, for each $i\in[k]$ we have that
\begin{equation*}
  \Theta^{i,\suffChainFam}=\max\{n-D_i-\abs{\suffChainFam_i},0\}.
\end{equation*}
Furthermore, if we no longer enforce that the processing times we associate with suffixes be at most
$\Theta^{i,\suffChainFam}$ we get that
\begin{equation*}
  p_{\suffChain}^{i,\suffChainFam}
  =\sum_{j\in\suffChain_i\setminus \suffChainFam_i}p_{j}
  =\abs{\suffChain_i\setminus \suffChainFam_i}.
\end{equation*}
Fix $i\in[k]$.  For constraint~(C2) to hold in our setting,
we need that
\begin{equation*}
  \sum_{\suffChain\in\S:\suffChain_i\supsetneq \suffChainFam_i}p_{\suffChain}^{i,\suffChainFam}x_{\suffChain} \ge \Theta^{i,\suffChainFam}.
\end{equation*}
We now show that the above always holds; we break our proof into three cases.
\begin{itemize}
\item Case: $\abs{\suffChainFam_i}\ge n-D_i$. Then we have that $\Theta^{i,\suffChainFam}=0$,
  and the inequality holds trivially.  
\item Case: $\abs{\suffChainFam_i}= n-D_i-1$.  Then we have that
  $\Theta^{i,\suffChainFam}=1$; furthermore, since $D_i=2i-1$, we can see that
  $\abs{\suffChainFam_i}=n-2i$, and so $\suffChainFam_i=\{2i+1,2i+2,\dots,n\}$.  Recalling the
  definition of $\suffChain^\ell_i$, however, we can see that
  $\suffChain^\ell_i\supsetneq \suffChainFam_i$ only if $i\le\ell$, and so
  $\suffChain^\ell_i=[n]$.  Thus, we get that
  \begin{equation*}
    \sum_{\suffChain\in\S:\suffChain_i\supsetneq \suffChainFam_i}p_{\suffChain}^{i,\suffChainFam}x_{\suffChain}
    =\sum_{\ell=i}^{k}p_{\suffChain^\ell}^{i,\suffChainFam}x_{\suffChain^\ell}
    =\abs{[n]\setminus \suffChainFam_i}\left(\sum_{\ell=i}^{k-1}\left(\frac{1}{2\ell}-\frac{1}{2(\ell+1)}\right)+\frac{1}{2k}\right)
    =2i\cdot\frac{1}{2i}
    =1.
  \end{equation*}
  Thus the desired inequality holds in this case as well.
\item Case: $\abs{\suffChainFam_i}\le n-D_i-2$.  Then we have that $\suffChain^\ell\supsetneq \suffChainFam_i$ for all $0\le\ell\le k$.
    So can see that 
  \begin{align*}
    \sum_{\suffChain\in\S:\suffChain_i\supsetneq \suffChainFam_i}p_{\suffChain}^{i,\suffChainFam}x_{\suffChain}
    &=
    \sum_{\ell=0}^{i-1}(n-2i-\abs{\suffChainFam_i})x_{\suffChain^\ell}
    +
    \sum_{\ell=i}^{k}(n-\abs{\suffChainFam_i})x_{\suffChain^\ell}\\
    &=(n-\abs{\suffChainFam_i})\sum_{\ell=0}^{k}x_{\suffChain^\ell} -2i\sum_{\ell=0}^{i-1}x_{\suffChain^\ell}\\
    &=(n-\abs{\suffChainFam_i}) - 2i\left(1-\frac{1}{2i}\right)\\
    &=(n-\abs{\suffChainFam_i}) - (2i-1)\\
    &=n - D_i - \abs{\suffChainFam_i}\\
    &=\Theta^{i,\suffChainFam},
  \end{align*}
  exactly as required.
\end{itemize}
In every case, we get that constraint~(C2) holds. Since our choice of
$\suffChainFam$ was arbitrary, we may conclude that requiring
$p_{\suffChain}^{i,\suffChainFam}\le\Theta^{i,\suffChainFam}$is critical to ensuring a good
integrality gap for~(P).

  \section{Tightness of rounding scheme}
  \label{sec:example2}
  \newcommand{\canonFamM}{\widetilde{\canonFam}}
\newcommand{\Part}{\Pi}
\newcommand{\eps}{\varepsilon}
\newcommand{\floor}[1]{\lfloor{#1}\rfloor}
\newcommand{\ceil}[1]{\lceil{#1}\rceil}

Here, we give an example to show the limits of our current techniques
when implemented with independent rounding.  We construct an instance
$\I$ and a fractional solution $(U,x)$ for $\I$ such that if we use
the rounding procedure outlined in Section~4, while setting
$\gamma=O(1/\log^\eps k)$ for some $\eps>0$, then our probability of
success will be $o(1)$ in $k$.  In other words, if we want our
rounding procedure to succeed with constant probability, we simply
cannot replace our boosting factor of $\gamma \log k$ with one that is
$O(\log^{1-\eps}k)$.  This shows that the result of
Lemma~\ref{lem:chernoff} is tight, and so the approximation factor in
Corollary~\ref{cor:sched_cost} cannot be improved without significant
new techniques.

We begin by constructing the instance $\I$ of \pdls.  Fix the number
of deadlines $k$, and let $n$ be the number of jobs for some $n$
divisible by $k^2$.  Each of our jobs will have unit runtime and unit
weight, i.e.~$w_j=p_j=1$ for all $j\in[n]$.  Our set of deadlines
will be $\D =\{D_1,\dots,D_k\}$ where
\begin{equation*}
D_i=\left(\frac{i}{k}\right)n-1
\qquad\text{for all } i\in[k].
\end{equation*}
Our set of paths $\P$ contains $n/k$ identical paths of length $k$;
each path contains a single job with deadline $D_i$ for each
$i\in[k]$, in increasing order along the path.  Specifically, if one
of our paths is $P=\{j_1,j_2,\dots,j_k\}$, with $j_1\prec\dots\prec
j_k$, then for all $\ell\in[k]$ we have that $i(j_\ell)=\ell$.  In the
following, we always index the jobs in a path $P$ in this fashion for
convenience, so that for any such path we always have that $j_i\prec
j_{i'}$ if and only if $i<i'$, and that the deadline of job $j_i$ is
$D_i$ for all $i\in[k]$.

We now describe a fractional solution $(U,x)$ for $\I$, parameterized
by $\gamma$.  For every path $P\in\P$, we
will have only two suffix chains of $P$ in the support of $x$.  One
will be the canonical suffix chain $\canonFam^P$ for $P$, and the
other will be a slight modification of the canonical suffix chain,
which we denote $\canonFamM^P$.  Before defining these two suffix
chains formally, we first note that we will place the majority of our
solution's weight on $\canonFam^P$, setting
\begin{equation*}
  x_\suffChain
  =
  \begin{cases}
    1-\frac{1}{2\gamma\log k}&\qquad\text{if }\suffChain=\canonFam^P\text{;}\\
    \frac{1}{2\gamma\log k}&\qquad\text{if }\suffChain=\canonFamM^P\text{; and}\\
    0&\qquad\text{otherwise}.
  \end{cases}
\end{equation*}
Since by definition the canonical suffix chain family never makes {\em
  any} job late, this solution ensures that $U_j\le1/2\gamma\log k$ for all
$j\in[n]$.  Thus, we can immediately conclude that for the solution
$(U,x)$, 
we have
\begin{equation*}
  \late=\{j\in[n]:U_j\ge1/\gamma\log k\}=\emptyset,
\end{equation*}
and so $\I[\late]=\I$.  Thus, our rounding procedure works solely
with the original instance $\I$, and the canonical suffix chain family
$\canonFam$ for $\I$.

We now formally define the canonical suffix chain family $\canonFam$,
the modification $\canonFamM$, and the KC-Inequalities for $\I$
corresponding to $\canonFam$.  First, by inspection we can see that
for any path $P=\{j_1,j_2,\dots,j_k\}\in\P$, we have that
\begin{equation*}
  \canonFam^P=\{j_2,\cdots,j_k\}\supseteq\{j_3,\dots,j_k\}\supseteq\dots\supseteq\{j_k\}\supseteq\emptyset,
\end{equation*}
i.e.~we have that $\canonFam^P_i=\{j_{i+1},j_{i+2},\dots,j_k\}$ for
all $i\in[k]$.  Now, we define the modification $\canonFamM$ as
follows.  First, we partition the set $\P$ of paths into $k$
groups $\Part_1,\Part_2,\dots,\Part_{k}$, each containing exactly $n/k^2$
paths.  Then, for any path $P\in\P$, we define the suffix chain
$\canonFamM^P$ as
\begin{equation*}
  \canonFamM^P_i
  =
  \begin{cases}
    \canonFam^P_i&\qquad\text{if }P\notin\Part_i\text{; and}\\
    \canonFam^P_{i-1}&\qquad\text{if }P\in\Part_i\text{,}
  \end{cases}
\end{equation*}
where we take $\canonFam^P_0$ to indicate the entire chain $P$.
From the definitions of $\canonFam$, $\canonFamM$, and $x$
given above, we can compute that for each path
$P=\{j_1,\dots,j_k\}\in\P$, and each $i\in[k]$, we have that
\begin{equation*}
  U_{j_i}
  =
  \begin{cases}
    \frac{1}{2\gamma\log k}&\qquad\text{if }P\in\Part_i\text{; and}\\
    0&\qquad\text{otherwise.}
  \end{cases}
\end{equation*}
Thus, as previously mentioned, we can see that no job is
made late to an extent of $1/\gamma\log k$ or more, and so $\I[\late]=\I$.

\begin{lemma}
  \label{lem:okToRound}
  For the described instance $\I$ of \pdls, the constructed solution
  $(U,x)$ satisfies conditions (a) and (b) of the rounding procedure
  of Section~\ref{sec:ex_ip_rounding}, i.e.~the solution has
  cross-free support and satisfies the reduced constraint
  $(\text{C2'})$, whenever we have $n/k^2\ge2\gamma\log k$, where $\gamma$
  is the parameter for the rounding process.
\end{lemma}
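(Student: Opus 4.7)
The plan is to verify conditions (a) and (b) directly from the explicit descriptions of $\canonFam$, $\canonFamM$, and $x$, using the special structure of the instance (unit processing times, unit weights, deadlines placed so that $\Theta^{i,\canonFam}=1$ at every layer).

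For condition (a), the support on each $P\in\P$ is $\{\canonFam^P,\canonFamM^P\}$. By the definition of $\canonFamM^P$, at every layer $i$ we have either $\canonFamM^P_i=\canonFam^P_i$ or $\canonFamM^P_i=\canonFam^P_{i-1}\supseteq\canonFam^P_i$; hence $\canonFamM^P_i\supseteq\canonFam^P_i$ for all $i$, so $\canonFamM^P\preceq\canonFam^P$ and the two chains do not cross. This handles (a) immediately.

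For (b), the first step is to compute $\Theta^{i,\canonFam}$. Since $p_j=1$ for all $j$ we have $\Gamma=n$, and $\sum_{P\in\P}\sum_{j\in\canonFam^P_i}p_j=(n/k)(k-i)=n-(i/k)n$. Substituting into \eqref{thetais} and using $D_i=(i/k)n-1$ gives $\Theta^{i,\canonFam}=1$ for every $i\in[k]$. Next, I compute the coefficients $p^{i,\canonFam}_{\suffChain}$ on the support of $x$. For $\suffChain=\canonFam^P$ the set $\suffChain_i\setminus\canonFam^P_i$ is empty, so $p^{i,\canonFam}_{\canonFam^P}=0$. For $\suffChain=\canonFamM^P$, the set $\canonFamM^P_i\setminus\canonFam^P_i$ is empty when $P\notin\Part_i$, and equals the singleton $\{j_i\}$ (contributing processing time $1$, which is already capped at $\Theta^{i,\canonFam}=1$) when $P\in\Part_i$.

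Combining these observations, the left-hand side of \eqref{c2p} at layer $i$ reduces to
\[
\sum_{P\in\Part_i} p^{i,\canonFam}_{\canonFamM^P}\,x_{\canonFamM^P}
= |\Part_i|\cdot 1\cdot\frac{1}{2\gamma\log k}
= \frac{n/k^2}{2\gamma\log k},
\]
which is $\ge 1=\Theta^{i,\canonFam}$ exactly under the hypothesis $n/k^2\ge 2\gamma\log k$. This completes (b) and, thereby, the lemma. There is no significant obstacle here: the argument is purely a bookkeeping verification, and the delicate part is simply making sure the capped coefficients are computed correctly on each of the two chains in the support.
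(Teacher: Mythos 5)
Your proof is correct and follows essentially the same route as the paper's: verify cross-freeness by showing $\canonFamM^P\preceq\canonFam^P$, compute $\Theta^{i,\canonFam}=1$ from the chosen deadlines and unit processing times, evaluate the capped coefficients $p^{i,\canonFam}_{\suffChain}$ on the two-chain support, and observe that the left side of~\eqref{c2p} at each layer $i$ is exactly $(n/k^2)/(2\gamma\log k)$. The one small thing left implicit (as it also is in the paper) is the observation that $\late=\emptyset$ so $\I[\late]=\I$, which justifies working with the canonical family of the original instance.
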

\begin{proof}
  We begin by showing that condition (a) is satisfied, i.e.~the
  constructed $x$ has cross-free support.  Fix some $P\in\P$.  Now, we
  defined $x$ such that $x_{\suffChain}>0$ if and only if
  $\suffChain\in\{\canonFam^P,\canonFamM^P\}$ for all
  $\suffChain\in\S^P$.  Recall, however, that for all $i$ we either
  have that $\canonFamM_i^P=\canonFam_i^P$, or have that
  $\canonFamM_i^P=\canonFam_{i-1}^P\supset\canonFam_i^P$.  Thus, we
  may conclude that $\canonFamM^P\preceq\canonFam^P$ for all $P\in\P$,
  and hence the support of $x$ is cross-free.

  Now, we show that condition (b) is satisfied, i.e.~the
  KC-Inequalities corresponding to the canonical suffix chain family
  $\canonFam$ for the modified \pdls instance $\I[\late]$ are
  satisfied.  Recall, however, that we already saw that
  $U_j<1/\gamma\log k$ for all $j\in[n]$, and so $\I[\late]=\I$;
  thus, we are actually interested in the KC-Inequalities
  corresponding to the canonical suffix chain family for the {\em original}
  \pdls instance $\I$.

  We begin by calculating the relevant constants for the
  KC-Inequalities.  First, we compute $\Theta^{i,\canonFam}$. Recall
  that all of our jobs had unit processing times, and so we have
  $p_j=1$ for all $j\in[n]$ and $\Gamma=n$.  We claim that this gives
  us that $\Theta^{i,\canonFam}=1$ for all $i\in[k]$.  Fix some
  $i\in[k]$. Now, note that all of our paths are identical, and for
  each path $P=\{j_1,\dots,j_k\}\in\P$ we have that
$
  \abs{\canonFam^P_i}
  =
  \abs{\{j_{i+1},\dots,j_k\}}
  =
  k-i.
$ 
Thus, we can see that
\begin{equation*}
  \Theta^{i,\canonFam}
  =
  (\Gamma-D_i)-\abs{P}(k-i)
  =
  \left(n - \frac{i}{k}\cdot n+1\right)-\frac{n}{k}\left(\vphantom{\frac{i}{k}}k-i\right)
  =
  1.
\end{equation*}

Now, we consider the values of $p_{\canonFam^P}^{i,\canonFam}$ and
$p_{\canonFamM^P}^{i,\canonFam}$.  First, we note that since 
$p_{\suffChain}^{i,\canonFam}$ denotes the (possibly capped) number of jobs suffix chain $\suffChain$ defers {\em
  in addition} to those deferred by the canonical suffix chain family
$\canonFam$, we immediately can see that
$p_{\canonFam^P}^{i,\canonFam}=0$ always.  Furthermore, since for each
path $P\in\P$, $\canonFamM$ differs from $\canonFam$ only in that it
defers a single additional job past deadline $D_\ell$ where
$P\in\Part_\ell$, we conclude that
\begin{equation*}
  p_{\canonFamM^P}^{i,\canonFam}
  =
  \begin{cases}
    1&\qquad\text{if }P\in\Part_i\text{; and}\\
    0&\qquad\text{otherwise}.
  \end{cases}
\end{equation*}

Combining the above, we can see that the KC-Inequality corresponding
to the canonical suffix chain family $\canonFam$ holds whenever
$n/k^2\ge2\gamma\log k$, exactly as claimed.  To see this, we first compute
that, for any $i\in[k]$ we have
\begin{equation*}
  \sum_{P\in\P}\sum_{\suffChain\in\S^P}p_{\suffChain}^{i,\canonFam}x_{\suffChain}
=
\sum_{P\in\P}p_{\canonFamM^P}^{i,\canonFam}\frac{1}{2\gamma\log k}
=\abs{\Part_i}\frac{1}{2\gamma\log k}
=\frac{n}{k^2}\cdot\frac{1}{2\gamma\log k},
\end{equation*}
where the first inequality follows by recalling that
$x_{\suffChain}>0$ only when
$\suffChain\in\{\canonFam^P,\canonFamM^P\}$, and
$p_{\canonFam^P}^{i,\canonFam}=0$; the second follows since
$p_{\canonFamM^P}^{i,\canonFam}=1$ if and only if $P\in\Part_i$ and is
$0$ otherwise; and the last since $\abs{\Part_i}=n/k^2$.  Thus, since
we already saw that $\Theta^{i,\canonFam}=1$, we can substitute in our
computed values and rearrange terms to get that the $(U,x)$ satisfies
the KC-Inequalities corresponding to the canonical suffix chain family
$\canonFam$ if and only if $n/k^2\ge2\gamma\log k$.
\end{proof}
\begin{lemma}
  Let $\gamma$ be a function of $k$, such that
  $\gamma=O(1/\log^{\eps}k)$ for some $\eps>0$.  For the instance $\I$
  of \pdls described above, with $n=k^2\ceil{2\gamma\log k}$,
  applying the rounding procedure of Section~\ref{sec:ex_ip_rounding}
  to $(U,x)$ using $\gamma\log k$ as the boosting parameter has success
  probability that is $o(1)$ in $k$.
\end{lemma}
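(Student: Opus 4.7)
The plan is to unpack what the rounding procedure of Section~\ref{sec:ex_ip_rounding} actually produces on the instance $\I$ and fractional solution $(U,x)$ defined above, and observe that the feasibility constraint $\eqref{eq:feas}$ that underlies the analyses in Lemma~\ref{lem:chernoff} and Theorem~\ref{thm:sched_good} fails with probability $1-o(1)$; in this regime the rounding procedure cannot succeed in the sense of producing a schedule whose cost bound can be derived by the stated analysis. By Lemma~\ref{lem:okToRound} we have $\late=\emptyset$, so $\I[\late]=\I$ and the canonical family for the modified instance is $\canonFam$ itself. On each path $P$, the chain $\canonFam^P$ makes no job late, so $x^0$ zeroes out its mass and the support $\T$ contains only $\canonFamM^P$. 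Consequently $\bar x_{\canonFamM^P}=(\gamma\log k)\cdot x_{\canonFamM^P}=1/2$, and the rounding picks $\canonFamM^P$ with probability $1/2$ and $\canonFam^P$ with probability $1/2$, independently across $P\in\P$.

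Next I characterize the failure event. As computed in the proof of Lemma~\ref{lem:okToRound}, $p^{i,\canonFam}_{\canonFam^P}=0$ for all $P$, while $p^{i,\canonFam}_{\canonFamM^P}$ equals $1$ if $P\in\Part_i$ and $0$ otherwise. Combining this with $\Theta^{i,\canonFam}=1$, the constraint $\eqref{eq:feas}$ for deadline $D_i$ fails exactly when every one of the $\ceil{2\gamma\log k}$ paths in $\Part_i$ is rounded to $\canonFam^P$. Crucially, the groups $\Part_1,\ldots,\Part_k$ are disjoint, so the failure events for different $i$ depend on disjoint collections of independent coin flips and are themselves mutually independent.

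Finally I estimate the success probability. The probability that group $\Part_i$ fails is exactly $2^{-\ceil{2\gamma\log k}}$, so by independence the probability that no group fails is
\begin{equation*}
  \left(1-2^{-\ceil{2\gamma\log k}}\right)^k
  \;\leq\;
  \exp\!\left(-k\cdot 2^{-\ceil{2\gamma\log k}}\right).
\end{equation*}
Since $\gamma=O(1/\log^{\eps}k)$ for some $\eps>0$, we have $\ceil{2\gamma\log k}\leq 2\gamma\log k+1 = O(\log^{1-\eps}k) = o(\log k)$, and therefore $k\cdot 2^{-\ceil{2\gamma\log k}}=2^{\Omega(\log k)}\to\infty$ as $k\to\infty$. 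Hence the bound above is $o(1)$, establishing the claim.

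The main obstacle is the structural observation in the second paragraph: one must recognize that the instance was engineered so that the KC constraint at deadline $D_i$ depends only on the $|\Part_i|$ paths of group $\Part_i$, making the feasibility events for different $i$ independent rather than entangled through a common large pool of paths. Once this decoupling is isolated, each group becomes an independent coin-flip experiment of size $|\Part_i|=\ceil{2\gamma\log k}$, and the standard inequalities $(1-x)^k\leq e^{-kx}$ and $\ceil{y}\leq y+1$ suffice to drive the success probability to zero under any sub-logarithmic boosting regime.
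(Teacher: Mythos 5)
Your proposal reproduces the paper's argument in structure and in all essential details: you establish $\late=\emptyset$ and $\I[\late]=\I$, identify each path's rounding as an independent fair coin deciding between $\canonFam^P$ and $\canonFamM^P$, observe that the KC constraint at level $i$ depends only on the coins of the $n/k^2$ paths in $\Part_i$ (so the $k$ feasibility events are mutually independent), and compute the success probability as $(1-2^{-n/k^2})^k$. The only genuine point of divergence is the final asymptotic estimate: the paper splits into two cases depending on whether $\gamma\log k$ converges to a constant or diverges, whereas you apply $(1-x)^k\le e^{-kx}$ once and show the exponent $k\cdot 2^{-\lceil 2\gamma\log k\rceil}=2^{\lg k - o(\lg k)}\to\infty$ directly. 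Your route is cleaner and a bit more robust (it does not implicitly presume $\gamma\log k$ has a limit), but it buys no new insight about the instance --- the engineered decoupling across $\Part_1,\ldots,\Part_k$ is the heart of the lemma, and on that you and the paper coincide exactly.
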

\begin{proof}
  We begin by noting that the parameter settings outlined above are
  consistent with our example so far; in particular, we chose $n$ to
  be divisible by $k^2$, and furthermore such that
  $n/k^2=\ceil{2\gamma\log k}$ and so the condition for
  Lemma~\ref{lem:okToRound} holds.  

  In order to calculate the probability of the rounding process
  succeeding, we begin by describing the conditions for it to succeed.
  Our claim essentially states that Lemma~\ref{lem:chernoff} from
  Section~\ref{sec:ex_ip_rounding} is tight, and we build on the
  analysis used to prove that lemma.  Considering that lemma, we see
  that for each $i\in[k]$, one random variable $X_P$ is defined for
  each path $P\in\P$ as
  \begin{equation*}
    X_P:=\sum_{\suffChain\in\S^P}p_{\suffChain}^{i,\canonFam}\hat{x}_{\suffChain};
  \end{equation*}
  $\hat{x}_{\suffChain}$ is a random variable obtained by first
  modifying the solution $(U,x)$ to produce a new solution
  $(\bar{U},\bar{x})$, and then using the value of $\bar{x}$ to define
  the marginal distribution for $\hat{x}$.  While we refer the reader
  to Section~\ref{sec:ex_ip_rounding} for the full details, we briefly
  describe the results of this process for our specific solution
  $(U,x)$ in the instance $\I$.  First, we note that from the
  definition of $(U,x)$ in this section and the method for producing
  $\bar{x}$, we will have that
  $\bar{x}_{\canonFamM^P}=(\gamma\log k)\cdot(1/2\gamma\log k)=1/2$ and
  $\bar{x}_{\suffChain}=0$ for all other
  $\suffChain\in\S^P\setminus\{\canonFamM^P\}$.  Second, we note that
  $p_{\canonFamM^P}^{i,\canonFam}=1$ if $P\in\Part_i$ and equals $0$
  otherwise for all $P\in\P$ (see the proof of Lemma~\ref{lem:okToRound} for
  details).  Thus, if we consider the rounding process used to produce
  $\hat{x}$, we will see that for each $P\in\Part_i$, $X_P$ will be a
  binary random variable which takes values $0$ and $1$ with equal
  probability; and for each $P\in\P\setminus\Part_i$, $X_P=0$ always.
  Furthermore, the random variables for each $P\in\P$ are independent.
  
  Now, the rounding procedure succeeds for deadline $D_i$ if and only
  if the sum $X=\sum_{P\in\P}X_P$ of the above variables is at least
  $\Theta^{i,\canonFam}$.  Now, for the setting $\I$ we have that
  $\Theta^{i,\canonFam}=1$ (see the proof of Lemma~\ref{lem:okToRound} for
  details).  As we saw above, however, $X_P$ is identically $0$
  whenever $P\in\P\setminus\Part_i$, so we conclude that the rounding
  process succeeds for deadline $D_i$ if and only if
  \begin{equation*}
    X=\sum_{P\in\Part_i}X_P\ge\Theta^{i,\canonFam}=1.
  \end{equation*}
  Recalling that each of the $X_P$ above is independently $1$ with
  probability $1/2$ and $0$ otherwise, we can see that
  \begin{equation*}
    \Pr[X\ge1] 
    = 
    1 - \Pr[X=0] 
    = 
    1 - \prod_{P\in\Part_i}\Pr[X_P=0]
    =
    1 - \left(\frac{1}{2}\right)^{\abs{\Part_i}}
    =
    1 - \frac{1}{2^{n/k^2}},
  \end{equation*}
  and so the probability our rounding scheme succeeds for deadline $D_i$
  is precisely $1-2^{-n/k^2}$.

  Now, given the above, we want to compute the overall probability
  that our rounding scheme succeeds.  In fact, this probability is
  precisely the probability that our rounding scheme succeeds for all
  of the deadlines in $\D$.  Above, we saw that the probability our
  scheme succeeded for any single deadline was $1-2^{-n/k^2}$; we must
  be careful, however, because while our rounding scheme is
  independent for each path $P\in\P$, there is dependence in its
  behavior for a given path $P\in\P$ with respect to the different
  deadlines $D_i$.  The key observation, however, is that the
  probability we succeed for deadline $D_i$ {\em only} depends on the
  random variables associated with paths $P\in\Part_i$.  Since
  $\Part_1\dots,\Part_k$ partition the set $\P$ of paths 
  and the variables associated with each path $P\in\P$ are
  independent for {\em every} deadline, we have ensured  by
  construction  that the events that we succeed with
  respect to deadlines $D_i$ and $D_{i'}$ will be independent whenever
  $i\neq i'$.  Thus, we conclude that the probability our
  rounding procedure succeeds for $(U,x)$ in the instance $\I$ is
  precisely
  \begin{equation*}
    \Pr[X\ge1\text{ for $D_i$ for all }i\in[k]]
    =
    \prod_{i\in[k]}\Pr[X\ge1\text{ for }D_i]
    =
    \left(1-\frac{1}{2^{n/k^2}}\right)^k.
  \end{equation*}

  Finally, we show that the probability we computed above is $o(1)$
  when we have that $\gamma$ satisfies
  $\gamma=\Omega(1/\log^{\eps}(k))$, some $\eps>0$, and
  $n=k^2\ceil{2\gamma\log k}$.  First, we note that since
  $\gamma=O(1/\gamma\log^{\eps} k)$, we must have that as $k$ goes to
  infinity, either $\gamma\log k$ either converges to some
  fixed constant $c\ge0$, or goes to infinity as well.

  First, if $\gamma\log k\rightarrow c$, some $c\ge0$, as
  $k\rightarrow\infty$, we immediately get that the success
  probability above is $o(1)$.  This is because we can see that for
  all sufficiently large $k$, we have that $n/k^2=\ceil{2\gamma\log
    k}\le2c+1$.  Thus, we can bound our probability of success as
  \begin{equation*}
    \left(1-\frac{1}{2^{n/k^2}}\right)^k\le\left(1-\frac{1}{2^{2c+1}}\right)^{k};
  \end{equation*}
  since $c>0$ is a constant independent of $k$, we may conclude that
  our success probability goes to $0$ as $k$ goes to infinity.

  Second, we show that if $\gamma\log k\rightarrow\infty$ as
  $k\rightarrow\infty$, we again get that our success probability is
  $o(1)$.  First, we note that this implies we must have that
  $n/k^2\rightarrow\infty$ as $k\rightarrow\infty$; thus, we may
  conclude that for all sufficiently large $k$, we have that
  \begin{equation}
    \label{eq:eApproxBd}
    \left(1-\frac{1}{2^{n/k^2}}\right)^k\le\left(\frac{2}{e}\right)^{\frac{k}{2^{n/k^2}}}.
  \end{equation}
  Thus, if we want to show that our success probability is $o(1)$, we
  need only show that the fraction $k/2^{n/k^2}\rightarrow\infty$ as
  $k\rightarrow\infty$.  Note, however, that since
  $\gamma=O(1/\log^{\eps}k)$, we know  (again, for sufficiently
  large $k$) that $n/k^2=\ceil{2\gamma\log k}<\lg^{1-\eps/2}k$, where
  $\lg k$ is the logarithm base $2$ of $k$.  Thus, we can conclude
  that when $k$ is sufficiently large we have that
  \begin{equation}
    \label{eq:exponentBd}
    \frac{k}{2^{n/k^2}}
    \ge
    \frac{k}{2^{\lg^{1-\eps/2}}k}
    =
    k^{\left(1-\frac{1}{\lg^{\eps/2}k}\right)}.
  \end{equation}
  Now, since $\frac{1}{\lg^{\eps/2}k}\rightarrow0$ as $k\rightarrow\infty$, we
  may conclude from equation~\eqref{eq:exponentBd} that
  $\frac{k}{2^{n/k^2}}\rightarrow\infty$ as $k\rightarrow\infty$.  Combining
  this with equation~\eqref{eq:eApproxBd}, however, we can see that
  our success probability must converge to $0$ as $k$ goes to
  infinity. 

  Thus, as we have shown above, in either case we get that our success
  probability goes to $0$ as $k$ goes to infinity, i.e.~our rounding
  process succeeds with probability $o(1)$ in $k$ exactly as claimed.
\end{proof}

}

\end{document}